\def\ind{\perp\!\!\!\perp}
\DeclareMathOperator*{\argmax}{arg\,max}
\DeclareSymbolFont{bbold}{U}{bbold}{m}{n}
\DeclareSymbolFontAlphabet{\mathbbold}{bbold}
\newtheorem{theorem}{Theorem}
\newtheorem{lemma}{Lemma}
\newtheorem{corollary}{Corollary}
\newtheorem{proposition}{Proposition}
\newtheorem{algorithm}{Algorithm}
\newtheorem{assumption}{Assumption}
\newtheorem{remark}{Remark}
\theoremstyle{definition}
\newtheorem{definition}{Definition}
\theoremstyle{remark}
\newcommand\independent{\protect\mathpalette{\protect\independenT}{\perp}}
\def\independenT#1#2{\mathrel{\rlap{$#1#2$}\mkern2mu{#1#2}}}
\begin{document}

\def\spacingset#1{\renewcommand{\baselinestretch}%
{#1}\small\normalsize} \spacingset{1}

\raggedbottom
\allowdisplaybreaks[1]

%%%%%%%%%%%%%%%%%%%%%%%%%%%%%%%%%%%%%%%%%%%

%% targetability, susceptibility, receptivity, responsiveness
  
  \title{\vspace*{-.4in} {Intervention effects based on potential benefit}}
  \author{Alexander W. Levis$^1$, Eli Ben-Michael$^{1,2}$, Edward H. Kennedy$^1$ \\  \\ \\
    $^1$Department of Statistics \& Data Science, \\
    Carnegie Mellon University \\
    $^2$Heinz College of Information Systems and Public Policy, \\
    Carnegie Mellon University \\ \\ 
    \texttt{\{alevis, ebenmichael\}  @ cmu.edu}; \\ \texttt{edward@stat.cmu.edu} \\
\date{}
    }
    
  \maketitle
  \thispagestyle{empty}

\begin{abstract}
Optimal treatment rules are mappings from individual patient characteristics to tailored treatment assignments that maximize mean outcomes. In this work, we introduce a conditional \textit{potential benefit} (CPB) metric that measures the expected improvement under an optimally chosen treatment compared to the status quo, within covariate strata. The potential benefit combines (i) the magnitude of the treatment effect, and (ii) the propensity for subjects to naturally select a suboptimal treatment. As a consequence, heterogeneity in the CPB can provide key insights into the mechanism by which a treatment acts and/or highlight potential barriers to treatment access or adverse effects. Moreover, we demonstrate that CPB is the natural prioritization score for individualized treatment policies when intervention capacity is constrained. That is, in the resource-limited setting where treatment options are freely accessible, but the ability to intervene on a portion of the target population is constrained (e.g., if the population is large, and follow-up and encouragement of treatment uptake is labor-intensive), targeting subjects with highest CPB maximizes the mean outcome. Focusing on this resource-limited setting, we derive formulas for optimal constrained treatment rules, and for any given budget, quantify the loss compared to the optimal unconstrained rule. We describe sufficient identification assumptions, and propose nonparametric, robust, and efficient estimators of the proposed quantities emerging from our framework. Finally, we illustrate our methodology using data from a prospective cohort study in which we assess the impact of intensive care unit transfer on mortality.
\end{abstract}

\bigskip

\noindent
{\it Keywords: causal inference, resource constraints, dynamic treatment regimes, nonparametric efficiency} 

\pagebreak

\section{Introduction}
The effects of exposures or treatments can vary substantially across members of a population. Consequently, it is important to consider treatment strategies that take into account individual-level characteristics in order to optimize outcomes. These strategies are referred to as \textit{treatment rules}, \textit{policies}, or \textit{dynamic treatment regimes}, and there has been substantial work on characterizing, identifying, and estimating optimal treatment rules from randomized and observational data, in both time-fixed and longitudinal settings \citep{murphy2003, robins2004, zhao2012}. See \citet{chakraborty2013} and \citet{tsiatis2019} for textbook coverage of this topic.

In the absence of any constraints, optimal treatment rules are characterized entirely by conditional (on measured covariates) average potential outcomes under the possible treatment values \citep{murphy2003}, e.g., for a point binary treatment $A \in \{0,1\}$, the optimal rule assigns active treatment ($A = 1$) when the conditional average treatment effect is positive, and otherwise assigns the control condition ($A = 0$). In practice, though, there is substantial interest in the interplay between putative treatment effects and selection into treatment level. For instance, in the extensive literature on the economic return of college education, there has been debate as to whether there is \textit{positive selection} or \textit{negative selection} on returns from education,
i.e., whether the individuals most likely or least likely, respectively, to obtain a college education are those that benefit most from attending college on average \cite[see, e.g.][]{willis1979, carneiro2001, carneiro2003, carneiro2011, brand2010, zhou2020}. Distinguishing between these two hypotheses has implications for policy: should educational opportunity expansion efforts focus on all individuals equally, on those more likely to attend college, or on those less likely to attend college \citep{brand2010}?
% in which individuals most likely to obtain a college education are those that benefit most from attending college on average \citep{willis1979, carneiro2001, carneiro2003, carneiro2011}, or \textit{negative selection}, where individuals least likely to attend college stand to benefit the most from actually attending \citep{brand2010, zhou2020}, is at play.
% Aside from better understanding the economic forces at play in college attendance decisions, a primary motivation for distinguishing between these two hypotheses lies in the implications for policy: should educational opportunity expansion efforts be focused on all individuals equally, on those more likely to naturally attend college, or on those less likely to naturally attend college \citep{brand2010}?
% A single metric encompassing the net potential benefit of an intervention permitting college attendance (or perhaps recommending not attending, if attendance is occasionally harmful) would provide an ideal basis for justifying such policy decisions. More generally, to our knowledge, such a potential benefit metric has not yet been described in the literatures on health policy or dynamic treatment regimes in causal inference. Beyond policy considerations, a potential benefit metric will be a valuable descriptive tool for quantifying how far the status quo is from the optimal treatment assignment in terms of average outcomes, marginally or in key subgroups.

A single metric encompassing the net potential benefit of an intervention relative to the status quo would provide the means to justify such policy decisions, and is a key contribution of this work; to our knowledge, such a potential benefit metric has not yet been described in the literature on health policy or dynamic treatment regimes in causal inference. Beyond policy considerations, a potential benefit metric is a valuable descriptive tool for quantifying how far the status quo is from the optimal treatment assignment in terms of average outcomes, marginally or in key subgroups.

In this work, we consider an important, unstudied policy setting in which treatments are widely accessible to individuals in a population of interest, but there is limited capacity for the intervening party to reach out and induce uptake of a given treatment option, i.e., the principal cost arises from intervening, not from acquiring treatment itself. 
% \hl{There is a substantial literature on experiments... EXAMPLES OF LIMITED ACTIVE TREATMENT SETTINGS?} 
% \eli{I think there should be something about the sub-optimal treatment probability here too? Maybe the contrast to make is in a binary setting between two subgroups with large treatment effects, but where one has a low proportion of people taking the treatment, and so the CPB will priortize the latter group?}
In general, it may be that both levels of a binary treatment, say, may be beneficial in disparate subgroups, and/or that individuals in some subgroups are more or less likely to obtain their optimal level of treatment. An outreach intervention designed to induce optimal treatments in these subpopulations, prioritizing groups with higher treatment effects and lower propensities for seeking out optimal treamtent, would be particularly appealing. It is therefore of interest to characterize when and how experimental or real-world data may be used to best design such outreach interventions, especially when the population of interest is very large and only a subset can reasonably be targeted.

\subsection{Related work and contributions}
For the most part, methodologists studying optimal treatment rules have not incorporated constraints---arising for example from budget or resource limitations---that arise in the real world and may well affect the implementation of any given policy. When one is unable to freely implement an idealized policy on an entire population (or subsample thereof), it becomes advisable to target high-priority individuals so as to maximize expected outcomes under the limitations at hand. Responding to this gap, there has been a recent push to formally incorporate different kinds of real-world constraints. \citet{luedtke2016b} described a setting in which there is limited supply in one of two treatment possibilities, so that it could not be given to all individuals even if it were beneficial to everyone. These authors characterized the optimal treatment rule given that a fixed maximal proportion of the population can be given the limited treatment, and developed efficient estimators of the expected counterfactual outcome under this rule. Under the same constraint, \citet{qiu2021} extended these ideas to the case where one intervenes on \textit{encouragement} to treatment, i.e., the intervention is described with respect to an instrumental variable which has a causal influence on the treatment actually taken. \citet{sun2021} and \citet{qiu2022} instead consider a setting where there is a random cost, dependent on both treatment choice and baseline covariates, associated with undergoing treatment, and developed optimal treatment rules that respect a given budget for the expected cost.

Responding to gaps outlined above, we make several contributions. We begin by providing a general definition of the potential benefit of a targeted intervention, based on an optimal unconstrained treatment rule. We show that the potential benefit incorporates both the magnitude of the conditional average treatment effect, as well as the real-world propensities of subjects in the population of interest to seek out their optimal treatment without intervention. We then use this potential benefit metric to formally study policies that jointly select a subset of the population on which to intervene---constraining this subset to be of a fixed maximal size---and tailor treatment selection on this subpopulation.
% While the former has been observed to be relevant in other resource-limited settings (e.g., \citet{luedtke2016b}), the latter plays a unique role in this problem. 
In addition to characterizing optimal treatment rules in this setting, we quantify the gap in the mean counterfactual outcome between the overall optimal constrained and unconstrained policies, and develop nonparametric efficient and robust estimators of the mean counterfactual under these rules, as well as other related quantities that emerge from our framework. Together, the proposed methodology can be applied to data from any (existing or new) observational study, yielding descriptive insight into the interplay between treatment effect heterogeneity and treatment allocation, and providing a basis for intelligent design of optimal policies or interventions that respect budget constraints.

The remainder of this article is organized as follows. In Section~\ref{sec:pot-benefit}, we propose a general definition for conditional potential benefit (CPB), and discuss its interpretation and utility. In Section~\ref{sec:policy}, we (a) define the class of policies motivated by the CPB, and formalize a novel resource constraint on the intervention capacity; (b) introduce quantities related to mean counterfactuals under the policies of interest, characterize optimal constrained regimes, and present relevant identifying assumptions; and (c) illustrate the framework in two synthetic examples. In Section~\ref{sec:estimation}, we develop efficient and robust estimators for the statistical functionals representing the causal quantitities of interest. In Section~\ref{sec:applications}, we demonstrate the proposed framework in a cohort study assessing the effect of intensive care unit transfer on mortality. Finally, in Section~\ref{sec:discussion}, we discuss possible extensions and variants, and provide concluding remarks.

\section{Optimal Unconstrained Policies and Potential Benefit} \label{sec:pot-benefit}

Suppose we wish to make treatment decisions on the basis of a vector $X \in \mathcal{X} \subseteq \mathbb{R}^d$ of baseline covariates. We consider for now an arbitrary treatment $A \in \mathcal{T}$---though we focus later on the case $\mathcal{T} = \{0,1\}$---and a real-valued outcome of interest $Y \in \mathcal{Y} \subseteq \mathbb{R}$, where larger values are favorable. A \textit{treatment rule} or \textit{policy} is a function $d: \mathcal{X} \to \mathcal{T}$, potentially with additional stochastic input, that assigns treatment for a subject based on their covariates $X$. For a given rule $d$, we denote $Y(d)$ as the \textit{potential outcome} that would be observed had treatment been set according to the policy $d$; for simplicity we write $Y(a)$ for the static rule setting $A = a$ deterministically, for $a \in \mathcal{T}$. In general, we will evaluate a policy $d$ based on its \textit{value}, i.e., the mean of the counterfactual under that rule, $\mathbb{E}(Y(d))$. The \textit{natural} values of treatment and outcome, denoted $A$ and $Y$, respectively, are those that would be observed in the absence of any intervention.

An \textit{optimal} treatment rule is one that attains the maximal possible value, and one can find such a rule by maximizing the conditional mean counterfactual outcome \citep{murphy2003, robins2004}: formally, $h^* : \mathcal{X} \to \mathcal{T}$ is an optimal treatment rule for maximizing the marginal mean outcome if and only if
\[h^*(X) \in \argmax_{a \in \mathcal{T}} \, \mu_a^{\dagger}(X), \text{ with probability } 1,\]
where $\mu_a^{\dagger}(X) = \mathbb{E}(Y(a)\mid X)$, for $a \in \mathcal{T}$. Based on an optimal treatment rule, we define the potential benefit of treatment in terms of the outcome under an optimal treatment and the observed, natural value of the outcome.

\begin{definition}\label{def:pot-benefit}
    The conditional potential benefit (CPB) is defined as 
    \[\beta^{\dagger}(X) \coloneqq \mathbb{E}(Y(h^*) - Y \mid X),\]
    representing the gap (in terms of mean outcomes) between an optimal treatment rule and the natural treatment selection process, i.e., that which gives rise to the natural outcome value $Y$.
\end{definition}

As an example, for binary treatment $A \in \{0,1\}$, an optimal treatment rule is given by $h^*(X) = \mathds{1}(\tau^{\dagger}(X) > 0)$, where $\tau^{\dagger}(X) = \mu_1^{\dagger}(X) - \mu_0^{\dagger}(X)$, and the CPB is given by
\[\beta^{\dagger}(X) = \mathbb{E}(\{h^* - A\}\{Y(1) - Y(0)\} \mid X) = h^*(X)\tau^{\dagger}(X) + \mu_0^{\dagger}(X) - \mathbb{E}(Y \mid X).\]
From this last equation, we see that potential benefit depends both on the treatment effect, through $\{Y(1) - Y(0)\}$, and through the natural propensity for suboptimal treatment selection, through $\{h^* - A\}$. For a general discrete treatment $A \in \mathcal{T}$, the CPB is given by $\beta^{\dagger}(X) = \sum_{a \in \mathcal{T}} \mathds{1}(h^*(X) = a) \sum_{a' \neq a}\mathbb{E}(Y(a) - Y(a') \mid X, A = a')\mathbb{P}[A = a' \mid X]$.
For simplicity, we will focus on binary treatments, but we
emphasize that Definition~\ref{def:pot-benefit} is completely general and applies equally to continuous treatments.

Identification and estimation of $\beta^{\dagger}$ are discussed in Sections~\ref{sec:policy} and \ref{sec:estimation}, respectively. First, though, some general remarks regarding the interpretation and utility of the CPB measure are warranted. From Definition~\ref{def:pot-benefit}, we see that the CPB quantifies suboptimality of the status quo regime (i.e., yielding the natural value $A$), which can help flag key subgroups. A policy-maker with access to the function $\beta^{\dagger}$ would be able to identify individuals with the most expected to gain from tailored treatment. Even beyond policy implications, the CPB provides a useful descriptive tool: by identifying subgroups for which there is a large treatment effect and/or the observed treatment selection is often suboptimal, practitioners can gain insights into their specific scientific problem. Namely, the CPB can motivate mechanistic hypotheses with regards to the action of the treatment itself (e.g., if there are higher/lower treatment effects in certain subgroups), or suggest that there are possible adverse effects or barriers to treatment access (e.g., explaining suboptimal treatment selection).

Another natural consequence of the definition of the CPB is that it can be useful for designing policies that respect real-world constraints. In Section~\ref{sec:policy}, we describe a novel class of treatment rules that respect constraints on the proportion of the population that can be targeted for an intervention. Moreover, we show that optimal constrained policies are those that assign $h^*$ to a subpopulation with the highest CPB values. In other words, $\beta^{\dagger}$ represents the priority score that should be used when there are constraints on intervention capacity.

\section{Optimal Policies under Intervention Constraints} \label{sec:policy}

\subsection{Contact rule framework} 
\label{sec:framework}
For the remainder of the paper, let $A \in \{0,1\}$ be a binary-valued treatment, and let $\pi(X) = \mathbb{P}[A = 1 \mid X]$ denote the propensity score, i.e., the natural treatment mechanism. To capture policies which aim to intervene only on a subset of the population, we introduce the notion of a \textit{contact rule}, $\Delta: \mathcal{X} \to [0,1]$, which assigns a probability of contacting and intervening on a subject based on their covariates $X$. In practice, we are often constrained to consider contact rules $\Delta$ that contact at most $\delta$\% of the population (i.e., $\mathbb{E}(\Delta(X)) \leq \delta$), for some $\delta \in [0,1]$. Given a contact rule $\Delta$, together with a putative ``good'' policy $h: \mathcal{X} \to \{0,1\}$, we will study treatment rules of the form
\begin{equation}\label{eq:contact-policy}
    d(\Delta, h) = H_{\Delta}h + (1 - H_{\Delta})A,
\end{equation}
where $H_{\Delta} \overset{d}{=} \mathrm{Bernoulli}(\Delta(X))$ and $H_{\Delta}  \ind (A, Y(0), Y(1)) \mid X$. In words, $d(\Delta, h)$ assigns $h$ to those contacted (i.e., where $H_{\Delta} = 1)$, and otherwise the natural value of treatment occurs.

In practice, it may be that decision makers have access only to some subset of covariates, $W \subseteq X$, e.g., if $X$ contain richer information collected in a study context. In Appendix~\ref{app:subset}, we provide an in-depth study of treatment rules that are constrained to only depend on $W$.

% \eli{This remark is a doozy! It might be pre-empting a question that wouldn't be raised anyway?}
% \begin{remark}
%     Under the treatment rule~\eqref{eq:contact-policy}, the conditional probability of treatment given $X$ is $\ell \equiv \Delta h + (1 - \Delta) \pi$. While replacing $A$ with $\pi(X)$ in expression~\eqref{eq:contact-policy} would result in this same conditional probability, we prefer the given form for two reasons: (a) the interpretation that when no contact occurs (i.e., $H_{\Delta} = 0$), the same treatment, $A$, occurs that would arise in the absence of any intervention, rather than a new random Bernoulli draw with probability $\pi(X)$; and (b) mathematically, $d(\Delta, h)$ corresponds to the ``optimal transport'' of $A$ to a binary variable with the target distribution, $\mathrm{Bernoulli}(\ell(X))$. We formalize (b) in a separate paper focusing on sensitivity analysis and bounds for stochastic treatment regimes. In any case, we note that under the identifying Assumptions~\ref{ass:consistency}--\ref{ass:NUC} laid out in Section~\ref{sec:estimands}, all point identification and estimation results would hold unchanged if $A$ were replaced by $\pi$ in expression~\eqref{eq:contact-policy}.
% \end{remark} 

\subsection{Estimands and identification}\label{sec:estimands}
We suppose that we observe a random sample of $n$ copies of $O = (X, A, Y) \sim \mathbb{P}$, where $X \in \mathcal{X}$ are baseline covariates as above, $A \in \{0,1\}$ is the observed treatment level and $Y \in \mathcal{Y}$ is the observed outcome variable. These may arise as experimental or observational data, where in the latter case $X$ would include relevant confounders. The sample $O_1, \ldots, O_n$ represents ``batch'' data on which treatment rules will be learned; we envision implementing these rules in the same population $\mathbb{P}$ (or some large subsample theoreof).

Our first goal is to characterize optimal treatment rules of the form~\eqref{eq:contact-policy}. That is, given some constraint on the proportion of the population on which we can feasibly intervene, we wish to determine the optimal subpopulation (through $\Delta$) on which to intervene, and the optimal treatment rule (through $h$) to apply on this subpopulation, so as to maximize average potential outcomes. In order to link the natural value of the outcome to the potential outcomes under various policies, we will make the following assumption:

\begin{assumption}[Consistency] \label{ass:consistency}
    $Y(A) = Y$, and $Y(d) = dY(1) + (1 - d)Y(0)$, for all treatment rules $d: \mathcal{X} \to \{0,1\}$.
\end{assumption}
Assumption~\ref{ass:consistency} requires that intervention on $A$ is well-defined (i.e., there is only one version of treatment) and that outcomes are not affected by treatments of other subjects. For point identification of effects, we will further rely on the following standard assumptions:
\begin{assumption}[Positivity] \label{ass:positivity}
    $\mathbb{P}[A = 1 \mid X] \in [\epsilon, 1 - \epsilon]$ with probability 1, for some $\epsilon > 0$.
\end{assumption}
\begin{assumption}[No unmeasured confounding] \label{ass:NUC}
    $A \independent Y(a) \mid X$.
\end{assumption}
Note that Assumption~\ref{ass:positivity} requires that no subjects are deterministically receiving one particular treatment level, while Assumption~\ref{ass:NUC} requires that treatment is as good as randomized within levels of $X$. Define, when Assumption~\ref{ass:positivity} holds, the observational mean outcomes $\mu_a(X) = \mathbb{E}(Y \mid X,  A = a)$, for $a \in \{0,1\}$, and the contrast $\tau(X) = \mu_1(X) - \mu_0(X)$. Henceforth, to reduce clutter, we will often omit inputs to functions when the context makes the input clear, e.g., $\tau \equiv \tau(X)$.

% Let $\mu_a^{\dagger}(X) = \mathbb{E}(Y(a) \mid X)$, for $a \in \{0,1\}$, denote the conditional mean of the potential outcomes, and let $\tau^{\dagger}(X) = \mu_1^{\dagger}(X) - \mu_0^{\dagger}(X)$. For a given $x \in \mathcal{X}$, $\tau^{\dagger}(x)$ corresponds to the conditional average treatment effect (CATE) function, $\mathbb{E}(Y(1) - Y(0) \mid X = x)$. \eli{Flagging that the CATE was defined above on page 4. Maybe move these definitions there to have it just once?} 

By standard arguments, under Assumptions~\ref{ass:consistency}--\ref{ass:NUC}, $\mu_a^{\dagger} \equiv \mu_a$, so that $\tau^{\dagger} \equiv \tau$. Consequently, these assumptions are sufficient for identification of the quantities introduced in Section~\ref{sec:pot-benefit}: the optimal unconstrained rule, $h^* \equiv \mathds{1}(\tau > 0)$, its conditional value, $\mathbb{E}(Y(h^*)\mid X) = h^* \tau + \mu_0$, and the CPB, $\beta^{\dagger} \equiv \beta \coloneqq \tau(h^* - \pi)$. Further, define $c(X) \coloneqq h^*(X)(1 - \pi(X)) + (1 - h^*(X))\pi(X) =  \mathbb{P}[A \neq h^*(X) \mid X]$ as the conditional probability that a subject does not naturally receive their optimal treatment value $h^*(X)$.
Then we can write  the CPB as $\beta \equiv c |\tau|$, i.e., $\beta(X)$ is the average cost of making the wrong treatment decision, $|\tau(X)|$, weighted by the probability of that wrong decision occurring, $c(X)$. 

% in words, $c(X) = P[A \neq h^*(X) \mid X]$ represents the conditional probability of a subject not naturally receiving their optimal treatment value, $h^*(X)$, whereas the CPB, $\beta(X)$, represents the average cost of making the wrong treatment decision, $|\tau(X)|$, weighted by the probability of that wrong decision occurring, $c(X)$.

The following is a preliminary result for identification of the value of a given treatment rule $d(\Delta, h)$. The proofs for this result and all others are given in the Appendices.

\begin{proposition}\label{prop:ident}
    Under Assumptions~\ref{ass:consistency}--\ref{ass:NUC}, and given a contact rule $\Delta: \mathcal{X} \to [0,1]$ and a policy $h:\mathcal{X}\to\{0,1\}$, the conditional value of the policy $d(\Delta, h)$, defined by~\eqref{eq:contact-policy}, is given by
    \[\mathbb{E}[Y(d(\Delta, h)) \mid X]= \Delta\tau (h - \pi) + m,\]
    where $m(X) = \mathbb{E}(Y \mid X)$. Thus, the marginal value of the policy $d(\Delta, h)$ is 
    $\mathbb{E}[Y(d(\Delta, h))] = \mathbb{E}\left(\Delta\tau (h - \pi) + Y\right)$.
\end{proposition}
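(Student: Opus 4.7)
The plan is to directly compute the conditional expectation by unfolding the definition of the stochastic contact policy and applying the identification assumptions in sequence. First, I would use Assumption~\ref{ass:consistency} to express the counterfactual outcome of the composite rule as a convex combination of $Y(0)$ and $Y(1)$, namely
\[
Y(d(\Delta,h)) \;=\; d(\Delta,h)\,Y(1) + \bigl(1 - d(\Delta,h)\bigr)Y(0) \;=\; Y(0) + \bigl[H_{\Delta}h + (1-H_{\Delta})A\bigr]\bigl(Y(1) - Y(0)\bigr).
\]
This decomposition is convenient because it isolates the individual treatment effect $Y(1)-Y(0)$ and leaves the treatment assignment mechanism (real or stochastic) as a separate multiplicative factor.

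Next, I would take the conditional expectation given $X$ and exploit the independence clause in the definition of $H_{\Delta}$, namely $H_{\Delta} \independent (A, Y(0), Y(1)) \mid X$ with $\mathbb{E}(H_{\Delta} \mid X) = \Delta(X)$. This lets me factor the Bernoulli through to obtain
\[
\mathbb{E}\bigl[Y(d(\Delta,h)) \mid X\bigr] \;=\; \mathbb{E}[Y(0)\mid X] \;+\; h\,\Delta\,\mathbb{E}[Y(1)-Y(0)\mid X] \;+\; (1-\Delta)\,\mathbb{E}[A(Y(1)-Y(0)) \mid X].
\]
Then I would invoke Assumption~\ref{ass:NUC} (with Assumption~\ref{ass:positivity} ensuring conditional expectations are well defined) to replace $\mathbb{E}[Y(a)\mid X]$ by $\mu_a(X)$ and to compute $\mathbb{E}[A(Y(1)-Y(0)) \mid X] = \pi(X)\tau(X)$; the latter follows from $\mathbb{E}[AY(a)\mid X] = \pi\,\mathbb{E}[Y(a)\mid X, A=1] = \pi\mu_a$.

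Putting these substitutions together yields $\mu_0 + h\Delta\tau + (1-\Delta)\pi\tau = \mu_0 + \pi\tau + \Delta\tau(h-\pi)$. The final algebraic step is to recognize that $m(X) = \mathbb{E}(Y\mid X) = \pi\mu_1 + (1-\pi)\mu_0 = \mu_0 + \pi\tau$, giving the desired identity $\mathbb{E}[Y(d(\Delta,h))\mid X] = \Delta\tau(h-\pi) + m$. The marginal statement follows immediately by the tower property, since $\mathbb{E}[m(X)] = \mathbb{E}(Y)$.

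I do not anticipate a main obstacle: the argument is essentially bookkeeping. The only subtle point is making sure the conditional independence $H_{\Delta} \independent (A, Y(0), Y(1)) \mid X$ is invoked cleanly so that $\mathbb{E}[H_{\Delta}\,g(A, Y(0), Y(1)) \mid X] = \Delta(X)\,\mathbb{E}[g(A, Y(0), Y(1))\mid X]$ for the measurable functions $g$ appearing in the expansion; after that, identification under Assumptions~\ref{ass:positivity}--\ref{ass:NUC} reduces every remaining term to an object expressible through $\mu_0, \pi,$ and $\tau$.
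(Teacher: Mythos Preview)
Your proposal is correct and follows essentially the same approach as the paper: expand the counterfactual via consistency, use the conditional independence of $H_\Delta$ to factor out $\Delta$, invoke Assumption~\ref{ass:NUC} to identify the remaining terms, and simplify using $m=\mu_0+\pi\tau$. The only cosmetic difference is that the paper groups the expansion as $H_\Delta\{hY(1)+(1-h)Y(0)\}+(1-H_\Delta)Y$ (separating by contact status) rather than your $Y(0)+d(\Delta,h)\{Y(1)-Y(0)\}$ (baseline plus treatment effect), but the two decompositions are algebraically equivalent and lead to the same chain of identifications.
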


While Proposition~\ref{prop:ident} gives point identification, in Appendix~\ref{app:sens}, we provide a more general result relying only on Assumption~\ref{ass:consistency}, expressing the policy value as a function of $\mu_0^{\dagger}, \mu_1^{\dagger}, m$, and the distribution of $X$. We then use this result to develop sharp bounds for these policy values under sensitivity models quantifying violations of Assumption~\ref{ass:NUC}.

Given the result of Proposition~\ref{prop:ident}, under Assumptions~\ref{ass:consistency}--\ref{ass:NUC}, we can characterize an optimal pair $(\Delta, h)$ that maximizes the value $\mathbb{E}[Y(d(\Delta, h))]$, subject to the constraint $\mathbb{E}(\Delta(X)) \leq \delta$, for a fixed $\delta \in [0,1]$.
% define $h^* = \mathds{1}(\tau > 0)$, which is the well-known optimal policy when there are no resource constraints, i.e., to maximize $\mathbb{E}(Y(d))$, one should treat with $A = 1$ if and only if the CATE is positive~\citep{murphy2003, robins2004}.

\begin{proposition}\label{prop:optimal}
Consider the optimization problem
    \begin{align}
    \begin{split}\label{eq:opt-policy}
        \max_{h, \Delta} \; & \mathbb{E}[Y(d(\Delta, h))] \\
        \text{subject to } \; & \mathbb{E}(\Delta(X)) \leq \delta 
    \end{split}
    \end{align}
    Under Assumptions~\ref{ass:consistency}--\ref{ass:NUC}, and assuming the $(1 - \delta)$-quantile of $\beta$ is unique, denoted as $q_{1 - \delta}$, the contact rule $\Delta_{\delta}^*(X) = \mathds{1}(\beta(X) > q_{1 - \delta})$ and the optimal unconstrained policy $h^*$ jointly solve~\eqref{eq:opt-policy}, achieving the maximal value $\mathbb{E}[Y(d(\Delta_{\delta}^*, h^*))] = \mathbb{E}(\Delta_{\delta}^* \beta + Y) \eqqcolon V_{\delta}$.
\end{proposition}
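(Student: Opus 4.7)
The plan is to exploit Proposition~\ref{prop:ident} to decouple the joint maximization over $(\Delta, h)$ into two nested subproblems: an inner pointwise maximization over $h$ for each fixed $\Delta$, followed by an outer Neyman--Pearson style maximization over $\Delta$.

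First, by Proposition~\ref{prop:ident} the objective is $\mathbb{E}[Y(d(\Delta,h))] = \mathbb{E}[\Delta\tau(h-\pi) + Y]$, and since $\mathbb{E}(Y)$ is a constant it suffices to maximize $\mathbb{E}[\Delta\tau(h-\pi)]$. Because $\Delta(X) \ge 0$ and the integrand depends on $h(X) \in \{0,1\}$ only through the term $\Delta(X)\tau(X) h(X)$, the pointwise maximum over $h$ is attained, independently of $\Delta$, at $h^*(X) = \mathds{1}(\tau(X) > 0)$. Substituting $h = h^*$ and splitting into the cases $\tau>0$ versus $\tau\le 0$ yields $\tau(h^* - \pi) = |\tau|\, c = \beta$, so the outer problem reduces to maximizing $\mathbb{E}(\Delta \beta)$ over $\Delta : \mathcal{X} \to [0,1]$ with $\mathbb{E}(\Delta) \le \delta$.

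For the outer problem I would invoke a classical Neyman--Pearson style comparison: for any feasible $\Delta$,
\[\mathbb{E}\!\left[(\Delta_\delta^* - \Delta)(\beta - q_{1-\delta})\right] = \left\{\mathbb{E}(\Delta_\delta^* \beta) - \mathbb{E}(\Delta \beta)\right\} - q_{1-\delta}\left\{\mathbb{E}(\Delta_\delta^*) - \mathbb{E}(\Delta)\right\}.\]
The integrand on the left is nonnegative almost surely: on $\{\beta > q_{1-\delta}\}$, $\Delta_\delta^* = 1 \ge \Delta$ and $\beta - q_{1-\delta} > 0$, while on the complement both factors are nonpositive. Since $\beta = c|\tau| \ge 0$ we have $q_{1-\delta} \ge 0$, and the uniqueness of the $(1-\delta)$-quantile ensures $\mathbb{E}(\Delta_\delta^*) = \mathbb{P}(\beta > q_{1-\delta}) = \delta \ge \mathbb{E}(\Delta)$, so the displayed identity forces $\mathbb{E}(\Delta_\delta^* \beta) \ge \mathbb{E}(\Delta\beta)$. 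This proves optimality of $(\Delta_\delta^*, h^*)$, and the value expression $V_\delta = \mathbb{E}(\Delta_\delta^* \beta + Y)$ then follows by direct substitution into Proposition~\ref{prop:ident}.

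The main subtlety is the appeal to the uniqueness of $q_{1-\delta}$: it guarantees that the distribution of $\beta$ places no mass at $q_{1-\delta}$, so the budget constraint binds with equality under the nonrandomized indicator $\Delta_\delta^*$. Without this assumption, an atom at $q_{1-\delta}$ would force a randomized tie-breaking rule on the level set $\{\beta = q_{1-\delta}\}$ to exhaust the contact budget, and the optimizer could no longer be written as a deterministic indicator.
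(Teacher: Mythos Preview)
Your proof is correct and follows essentially the same two-step structure as the paper: first showing $h^*$ is optimal for any fixed $\Delta$, then reducing to maximizing $\mathbb{E}(\Delta\beta)$ and using a direct comparison with $\Delta_\delta^*$. Your Neyman--Pearson formulation via the pointwise nonnegative product $(\Delta_\delta^* - \Delta)(\beta - q_{1-\delta})$ is a slightly more streamlined version of the paper's argument, which instead introduces the auxiliary Bernoulli $H_\Delta$ and splits the difference $\Delta_\delta^* - H_\Delta$ into indicator cases before bounding with $q_{1-\delta}$; both routes ultimately rely on $q_{1-\delta}\ge 0$ and $\mathbb{E}(\Delta_\delta^*) = \delta \ge \mathbb{E}(\Delta)$ to close the inequality.
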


According to the preceding result, if one can only intervene on $\delta$ proportion of the population, the CPB (i.e., $\beta$) represents a priority score on which the subjects above the $(1 - \delta)$-quantile should be targeted. Moreover, the optimal unconstrained policy $h^*$ remains optimal on this subpopulation, i.e., one should still treat with $A = 1$ when $\tau(X) > 0$, and treat with $A = 0$ when $\tau(X) \leq 0$. In our setting where the intervention probability is limited, the CPB takes the role that the CATE does in the alternative setting where the proportion of active treatment (i.e., $A = 1$) is limited \citep{luedtke2016b}. In terms of statistical estimation, working with the CPB is more challenging due to added smoothness violations arising from thresholding $\beta$ at its $(1 - \delta)$-quantile, as we describe in Section~\ref{sec:estimation}.

% \begin{remark}
%     The CPB, $\beta = c|\tau| = \tau(h - \pi)$, may be of independent interest beyond this resource-limited setting. Under Assumptions~\ref{ass:consistency}--\ref{ass:NUC}, $\beta(x) = \mathbb{E}(Y(h^*) - Y \mid X = x)$ by Propositions~\ref{prop:ident} and \ref{prop:optimal}, i.e., it represents the average benefit of the optimal unconstrained treatment rule $h^*$ compared to the natural treatment selection process $\pi$, among subjects with covariate value $x$ (hence the name ``targeting effect''). In practice, a policy-maker with access to the function $\beta$ would be able to identify individuals with the most expected gain from tailored treatment. This interpretation has not, to the best of our knowledge, been ascribed to this function previously.
% \end{remark}

From Proposition~\ref{prop:optimal}, we can also describe the gap between mean outcomes in the unconstrained setting (i.e., treating all subjects according to $h^*$) and optimal mean outcomes under our resource constraint. In practical settings, quantification of this gap may help in allocating resources and choosing an appropriate budget $\delta$.

\begin{corollary}\label{cor:gap}
    Under the assumptions of Proposition~\ref{prop:optimal},
    \[\mathbb{E}(Y(h^*)) - \mathbb{E}[Y(d(\Delta_{\delta}^*, h^*))] = \mathbb{E}((1 - \Delta_{\delta}^*)\beta) \leq (1 - \delta) q_{1 - \delta},\] for any $\delta \in [0,1]$.
\end{corollary}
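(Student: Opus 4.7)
The plan is to derive the equality by applying Proposition~\ref{prop:ident} twice, and then obtain the inequality by a pointwise bound that exploits the definition of $\Delta_\delta^*$.

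First I would note that $\mathbb{E}(Y(h^*))$ corresponds to the degenerate contact rule $\Delta \equiv 1$ with policy $h^*$, since in that case every subject is targeted and treated according to $h^*$. By Proposition~\ref{prop:ident},
\[
\mathbb{E}(Y(h^*)) = \mathbb{E}[Y(d(1, h^*))] = \mathbb{E}\bigl(\tau(h^* - \pi) + Y\bigr) = \mathbb{E}(\beta + Y),
\]
using $\beta = \tau(h^* - \pi)$. Similarly, Proposition~\ref{prop:optimal} already states $\mathbb{E}[Y(d(\Delta_\delta^*, h^*))] = \mathbb{E}(\Delta_\delta^* \beta + Y) = V_\delta$. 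Subtracting the two yields
\[
\mathbb{E}(Y(h^*)) - \mathbb{E}[Y(d(\Delta_\delta^*, h^*))] = \mathbb{E}\bigl((1 - \Delta_\delta^*)\beta\bigr),
\]
which is the claimed equality.

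For the upper bound, the key observation is that $\Delta_\delta^*(X) = \mathds{1}(\beta(X) > q_{1-\delta})$, so on the event $\{\Delta_\delta^* = 0\}$ we have $\beta \leq q_{1-\delta}$ deterministically. Since $\beta \geq 0$ (recall $\beta = c|\tau|$ with $c \in [0,1]$), the product $(1 - \Delta_\delta^*)\beta$ is nonnegative, and the pointwise bound $(1 - \Delta_\delta^*(X))\beta(X) \leq (1 - \Delta_\delta^*(X))\, q_{1-\delta}$ holds. Taking expectations gives
\[
\mathbb{E}\bigl((1 - \Delta_\delta^*)\beta\bigr) \leq q_{1-\delta}\,\mathbb{E}(1 - \Delta_\delta^*) = q_{1-\delta}\bigl(1 - \mathbb{P}(\beta > q_{1-\delta})\bigr).
\]
Uniqueness of the $(1 - \delta)$-quantile $q_{1-\delta}$ (as assumed in Proposition~\ref{prop:optimal}) means $\mathbb{P}(\beta = q_{1-\delta}) = 0$, so $\mathbb{P}(\beta > q_{1-\delta}) = \delta$, yielding the desired $(1 - \delta) q_{1 - \delta}$ bound.

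There is no real obstacle here; the result is essentially a direct corollary, with the only subtlety being the use of the quantile-uniqueness assumption to conclude $\mathbb{E}(\Delta_\delta^*) = \delta$ exactly. The edge cases $\delta \in \{0,1\}$ can be handled separately: at $\delta = 1$ both sides are zero, and at $\delta = 0$ the right-hand side reduces to $q_1 = \operatorname{ess\,sup}\beta$, which trivially dominates $\mathbb{E}(\beta)$.
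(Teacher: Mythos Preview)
Your proof is correct and follows essentially the same approach as the paper: both obtain the equality by subtracting the two value expressions (yours via Proposition~\ref{prop:ident}, the paper's via Proposition~\ref{prop:optimal} with $\delta=1$), then bound $(1-\Delta_\delta^*)\beta$ pointwise by $(1-\Delta_\delta^*)q_{1-\delta}$ and invoke quantile uniqueness to get $\mathbb{E}(1-\Delta_\delta^*)=1-\delta$. Your remark that $\beta\geq 0$ is true but unnecessary for the pointwise bound, and your edge-case discussion is a harmless extra.
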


In Section~\ref{sec:estimation} below, we will discuss efficient estimation of the CPB, $\beta$, the optimal contact rule, $\Delta_{\delta}^*$, and the value of the optimal regime, $\mathbb{E}[Y(d(\Delta_{\delta}^*, h^*))]$.

\subsection{A one-number summary measure of the potential benefit of targeting}\label{sec:AUPBC}
In addition to the quantities described above, we now propose a one-number summary measure that quantifies the potential benefits of targeting treatment selection to key participant subgroups. This summary is built on the so-called Qini curve \citep{radcliffe2007}, which plots the (estimated) value of an (estimated) optimal treatment strategy over a range of budgets. In our setting, the Qini curve plots $\mathbb{E}[Y(d(\Delta_{\delta}^*, h^*))]$ against $\delta \in [0,1]$. Our proposed summary measure is the area between this curve---identified under Assumptions~\ref{ass:consistency}--\ref{ass:NUC} by $\mathbb{E}(\Delta_{\delta}^* \beta + Y)$---and the straight line $\mathbb{E}(\delta\beta + Y)$, representing how well the optimal targeted treatment rule performs over a random treatment rule which assigns $h^*$ with fixed probability $\delta$ and otherwise the natural value $A$ occurs. We term this quantity the area under the potential benefit curve (AUPBC); the Qini curve and AUPBC in synthetic examples are illustrated in Figure~\ref{fig:Qini-scenarios}.

Given Assumptions~\ref{ass:consistency}--\ref{ass:NUC}, we can write the AUPBC in the following integral form:
\begin{equation}\label{eq:AUPBC}
    \mathcal{A} = \int_0^1 \mathbb{E}\left((\Delta_{\delta}^* - \delta)\beta\right) \, d \delta.
\end{equation}
Under some additional conditions, we can write the AUPBC in a more interpretable form.
% for the AUPBC is available, and is given in the following result.

\begin{proposition}\label{prop:AUPBC}
    Let $F_{\beta}(b) = \mathbb{P}[\beta(X) \leq b]$ be the cumulative distribution function of $\beta$. Under Assumptions~\ref{ass:consistency}--\ref{ass:NUC}, and assuming that $F_\beta$ is continuous and strictly increasing on the support of $\beta$, the AUPBC may be written as
    \[
    \mathcal{A} = \mathbb{E}\left(\beta\left\{F_{\beta}(\beta) - \frac{1}{2}\right\}\right) = \mathrm{Cov}(\beta, F_{\beta}(\beta)).
    \]
\end{proposition}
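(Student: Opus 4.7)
The plan is to start from the integral representation \eqref{eq:AUPBC}, swap expectation and integration via Fubini, and then evaluate the inner integral pointwise in $X$ using a probability-integral-transform argument on $\beta$.

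First I would note that since $|\Delta_\delta^*(X) - \delta| \leq 1$ and, under the standing assumptions (Assumptions~\ref{ass:consistency}--\ref{ass:NUC}, with $Y$ presumably integrable), $\mathbb{E}|\beta| < \infty$, Fubini's theorem justifies
\[
\mathcal{A} = \mathbb{E}\!\left(\beta(X) \int_0^1 \bigl(\Delta_\delta^*(X) - \delta\bigr)\, d\delta\right).
\]
Next I would compute the inner integral. The continuity and strict monotonicity of $F_\beta$ imply $q_{1-\delta} = F_\beta^{-1}(1-\delta)$, so that for each $X$,
\[
\Delta_\delta^*(X) = \mathds{1}\bigl(\beta(X) > q_{1-\delta}\bigr) = \mathds{1}\bigl(F_\beta(\beta(X)) > 1 - \delta\bigr) = \mathds{1}\bigl(\delta > 1 - F_\beta(\beta(X))\bigr).
\]
Integrating in $\delta$ on $[0,1]$ gives $\int_0^1 \Delta_\delta^*(X)\,d\delta = F_\beta(\beta(X))$, while $\int_0^1 \delta\,d\delta = 1/2$. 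Substituting back yields the first equality $\mathcal{A} = \mathbb{E}\bigl(\beta\{F_\beta(\beta) - 1/2\}\bigr)$.

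For the covariance representation, I would invoke the probability integral transform: since $F_\beta$ is continuous, $F_\beta(\beta(X)) \sim \mathrm{Unif}(0,1)$, so $\mathbb{E}(F_\beta(\beta)) = 1/2$. Then
\[
\mathrm{Cov}(\beta, F_\beta(\beta)) = \mathbb{E}(\beta F_\beta(\beta)) - \mathbb{E}(\beta)\mathbb{E}(F_\beta(\beta)) = \mathbb{E}\!\left(\beta\bigl\{F_\beta(\beta) - \tfrac{1}{2}\bigr\}\right),
\]
completing the chain of equalities. No step is especially difficult here; the only place to be mildly careful is checking that the continuity/strict-monotonicity hypothesis on $F_\beta$ suffices both to identify $q_{1-\delta}$ as the exact inverse image (so that the indicator rewriting holds almost everywhere in $\delta$ and $\mathbb{P}$) and to invoke the probability integral transform. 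The swap of expectation and integration is routine given integrability of $\beta$.
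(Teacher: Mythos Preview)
Your proposal is correct and follows essentially the same approach as the paper's proof: apply Fubini to swap the integral and expectation, use the strict monotonicity and continuity of $F_\beta$ to rewrite $\Delta_\delta^*$ as $\mathds{1}(\delta > 1 - F_\beta(\beta))$, integrate in $\delta$ to obtain $F_\beta(\beta) - 1/2$, and then invoke the probability integral transform to identify the result as a covariance. Your version is slightly more explicit about the Fubini justification and the uniform distribution argument, but the logic is identical.
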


The form of the AUPBC  in Proposition~\ref{prop:AUPBC}
% is interesting, in that it 
translates the geometric area quantity that we started with into a simple property of the distribution of the CPB. Moving forward, though, we will prefer the more general formula given by~\eqref{eq:AUPBC}, as it does not rely on the additional assumptions of Proposition~\ref{prop:AUPBC}.

Note that, since $\Delta_{\delta}^* \leq 1$, we have $\mathcal{A} \leq \frac{1}{2}\mathbb{E}(\beta)$, i.e, the AUPBC is bounded by the area of the region upper-left of the untargeted effect curve $\mathbb{E}(\delta \beta)$. In view of this fact, when $\mathbb{E}(\beta) > 0$ we can define a normalized AUPBC quantity,
\[\overline{\mathcal{A}} = 2\mathcal{A} / \mathbb{E}(\beta) = 2\frac{\int_{0}^1 \mathbb{E}(\Delta_{\delta}^* \beta) \, d\delta}{\mathbb{E}(\beta)} - 1,\]
so that $\overline{\mathcal{A}} \in [0,1]$. This allows for better comparison across contexts that differ in scale.

\subsection{Illustrative examples}
To provide some intuition for our framework and the functionals we have introduced, we briefly illustrate some hypothetical scenarios. In all scenarios, we consider the structural equations 
\[X \sim \mathrm{Unif}(-2, 2), \, A \mid X \sim \mathrm{Bernoulli}(\pi(X)), \, Y(a) = (a - \pi(X))\tau(X) + \epsilon.\] for some auxiliary mean-zero error $\epsilon$. Thus, Assumption~\ref{ass:NUC} holds, the CATE is given by $\tau$, the outcome has mean $0$, and the CPB is given by $\beta = \tau(\mathds{1}(\tau > 0) - \pi)$. Consider the following specifications for $(\pi, \tau)$:

\vspace{2mm}
\noindent
\textbf{Scenario 1}: $\pi(X) = \frac{1}{2}$, $\tau(X) = X$

\vspace{2mm}
\noindent
\textbf{Scenario $\mathbf{1^*}$}: $\pi(X) = 1 - \frac{|X|}{2}$, $\tau(X) = 1$

\vspace{2mm}
\noindent
\textbf{Scenario 2}: $\pi(X) = \frac{1}{2}$, $\tau(X) = \frac{3}{16} X^5$

\vspace{2mm}
\noindent
\textbf{Scenario $\mathbf{2^*}$}: $\pi(X) = \mathds{1}(X > 0)\left\{1 - \left(\frac{X}{2}\right)^4\right\} + \mathds{1}(X \leq 0)\left(\frac{X}{2}\right)^4$, $\tau(X) = 
\frac{3}{2}X$

\vspace{2mm}
These examples are designed to emphasize the roles that both treatment effects and selection behavior play in determining the CPB, and hence in optimally tailoring interventions. By construction, the CPB is $\beta(X) = \frac{1}{2}|X|$ in Scenarios 1 and $1^*$, while it is $\beta(X) = \frac{3}{32}|X|^5$ in Scenarios 2 and $2^*$.  In Scenarios 1 and 2, the CATE is moderately to highly explained by $X$, while treatment selection is completely unexplained by $X$. On the other hand, in Scenarios $1^*$ and $2^*$, the benefit in targeting interventions stems primarily from the relation between $X$ and treatment selection. Despite these differing mechanisms, Scenarios 1 and $1^*$ are interchangeable in terms of the effects of optimally targeting subgroups for intervention, and similarly for Scenarios 2 and $2^*$. 

Figure~\ref{fig:Qini-scenarios} plots the Qini curves for Scenario 1/$1^*$ and Scenario 2/$2^*$. While the optimal unconstrained value is constant across settings, $\mathbb{E}(Y(h^*)) = \mathbb{E}(\beta(X)) = \frac{1}{2}$, the shapes of the Qini curves demonstrate different impacts of tailored interventions under real-world budget constraints. Targeting is much more impactful under Scenario 2/$2^*$ (normalized AUPBC $\overline{\mathcal{A}} \approx 0.714$) than under Scenario 1/$1^*$ (normalized AUPBC $\overline{\mathcal{A}} = \frac{1}{3} $). Indeed, under Scenario 2/$2^*$, 80\% of the peak average outcome (compared to baseline) is achieved at $\delta \approx 0.25$, whereas under Scenario 1/$1^*$, $\delta \approx 0.55$ is required for the same performance.

\begin{figure}[t]
  \centering
  \includegraphics[width = 0.8\linewidth]{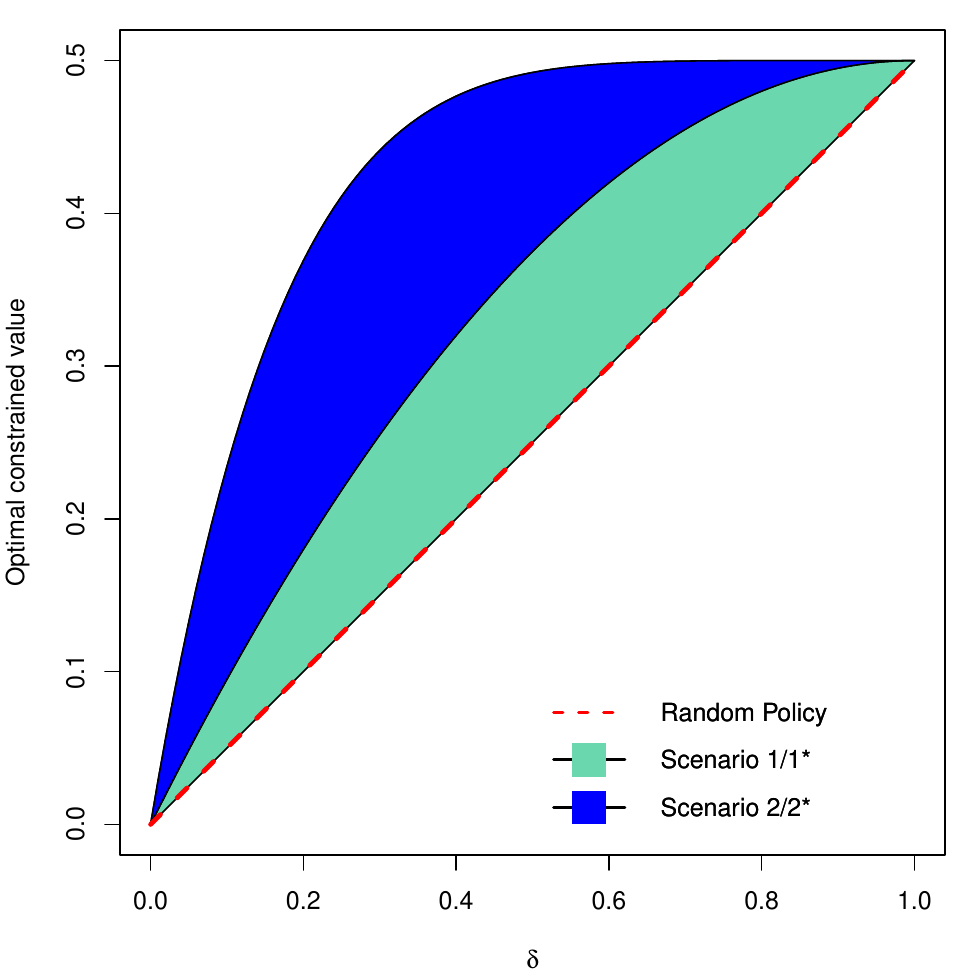}
  \caption{The Qini curves and AUPBC for two data generating scenarios. The optimal constrained value $\mathbb{E}(Y(d(\Delta_{\delta}^*, h^*)))$ is plotted against the budget parameter $\delta$, for Scenario 1/$1^*$ (boundary between green and blue regions) and 2/$2^*$ (boundary above blue region). The AUPBC for Scenario 1/$1^*$ is represented by the green shaded area, and for Scenario 2/$2^*$ is the sum of the green and blue shaded areas. The value of the random policy $d(\delta, h^*)$ is represented by the diagonal red dashed line.}
  % \eli{Minor: the purple is hard to see and the orange is hard to differentiate from the red.}} \textbf{[Agreed - EK]} 
  \label{fig:Qini-scenarios}
\end{figure}

\section{Estimation}
\label{sec:estimation}
In this section, we develop efficient and robust estimators of the parameters laid out in Section~\ref{sec:estimands}.
We first focus on estimating the optimal  value of the optimal constrained treatment rule in Section~\ref{sec:estimation-value};
developing and analyzing this estimator highlights the sources of non-smoothness inherent to all of the functionals under study and serves as a building block for the remaining estimators.
In particular, we propose margin conditions that permit efficient estimation in the face of non-smoothness that we will use throughout our development.
% In Section~\ref{sec:estimation-value}, we discuss the sources of non-smoothness inherent to the functionals under study, propose margin conditions which permit efficient estimation, and develop an estimator of the value of the optimal constrained treatment rule. 
We then build on this analysis in Sections~\ref{sec:estimation-CPB}, \ref{sec:estimation-rule}, \ref{sec:estimation-AUPBC}, to develop estimators of the CPB function $\beta$, the optimal contact rule $\Delta_{\delta}^*$, and the AUPBC summary measure $\mathcal{A}$, respectively.

As a matter of notational shorthand, we write $\mathbb{P}_n(f) = \frac{1}{n}\sum_{i=1}^n f(O_i)$ for the empirical mean of the function $f$ over the batch data, $\lVert f \rVert = \left\{\int f(o)^2 \, d\mathbb{P}(o)\right\}^{1/2}$ for the $L_2(\mathbb{P})$ norm of $f$, and $\lVert f \rVert_{\infty} = \mathrm{ess\,sup}\{f\}$ for the $L_{\infty}(\mathbb{P})$ norm of $f$. Throughout, we suppose that the propensity score and outcome models, $(\widehat{\pi}, \widehat{\mu}_0, \widehat{\mu}_1)$, are fit in a separate sample $D^n$, independent of the batch data $O_1, \ldots, O_n$. In practice, one can split a single sample and perform cross-fitting \citep{bickel1988, robins2008, zheng2010, chernozhukov2018}. Throughout we will assume a single split for simplicity, as analysis of procedures averaged across multiple splits follows in a straightforward fashion. We write $\mathbb{P}(f)= \int f(o) \, d\mathbb{P}(o)$, where $O \independent D^n$, for any (possibly data-dependent) function $f$.  For reference, notation for important functions and functionals are collected in Table~\ref{tab:nuisance}.

\subsection{Optimal value}\label{sec:estimation-value}
We begin by developing an estimator for the marginal quantity $V_{\delta} = \mathbb{E}\left(\Delta_{\delta}^*\beta + Y\right)$,
which represents, by Propositions~\ref{prop:ident} and~\ref{prop:optimal}, the value of the optimal constrained treatment rule of the form~\eqref{eq:contact-policy}, under Assumptions~\ref{ass:consistency}--\ref{ass:NUC}.
Our goal is to construct an efficient and robust estimator for $V_{\delta}$, by exploiting nonparametric efficiency theory and influence functions \citep{bkrw1993, tsiatis2007, kennedy2022review}. Unfortunately, this functional is not pathwise differentiable without further conditions. Specifically, non-smoothness arises in this setting for two reasons: (i) the presence of the indicator function $h^*$ in $\beta = \tau(h^* - \pi)$, and (ii) the indicator function $\Delta_{\delta}^*$ itself. The former source of non-smoothness arises also when estimating the value of the optimal unconstrained treatment regime, $\mathbb{E}(Y(h^*))$ \citep{luedtke2016a}, while the latter is unique to this constrained setting.

To overcome both sources of non-smoothness, we introduce two margin conditions that rule out concentration near the points of non-differentiability. These will permit faster rates of convergence, and weaker conditions under which our proposed estimator will achieve parametric convergence rates and nonparametric efficiency.

\begin{assumption}[Margin condition]\label{ass:margin}
    For some $a, b > 0$,
    \[\mathbb{P}\left[|\tau(X)| \leq t\right]
  \lesssim t^{a}, \text{ and }\mathbb{P}[|\beta(X) - q_{1 - \delta}
  | \leq t] \lesssim t^{b} \text{ for all } t \geq 0,\]
  where $u \lesssim v$ indicates $u \leq C v$ for some universal constant $C > 0$.
\end{assumption}

The conditions in Assumption~\ref{ass:margin} are similar to those invoked in the classification literature~\citep{tsybakov2004, audibert2007}, and more recently in a number of problems in causal inference and policy learning~\citep{qian2011, luedtke2016a, kennedy2020b, kallus_harm_2022, DAdamo_2023, levis2023, benmichael_asymm_2024}. Heuristically, Assumption~\ref{ass:margin} asserts that $\tau$ and $\beta$ do not concentrate too much near zero and $q_{1-\delta}$, respectively. This assumption is guaranteed to hold with $a = 1$ and $b = 1$, when the density of $\tau$ is bounded near zero, and the density of $\beta$ is bounded near the quantile $q_{1 - \delta}$. We expect these to be relatively mild conditions in most cases, though we note that for the first of these, we have to at least rule out the possibility of a point mass of $\tau$ at zero. In other words, the margin condition would fail if the CATE is exactly zero for a proportion of the population.

In view of the margin condition in Assumption~\ref{ass:margin}, we proceed by estimating the indicator functions $h^*$ and $\Delta_{\delta}^*$ with plug-in estimators, while targeting the smooth components of $V_{\delta}$ in a debiased manner using influence functions. Concretely, we define
\begin{equation}\label{eq:beta-pseudo}
    \phi(O; \mathbb{P}) = \left(h^* - \pi\right)\left\{\frac{A}{\pi} - \frac{1 - A}{1 - \pi}\right\}\left(Y - \mu_A\right) + \tau\left(h^* - A\right).
\end{equation}
The function $\phi$ satisfies $\mathbb{E}(\phi(O; \mathbb{P}) \mid X) = \tau(h^* - \pi) = \beta$, and has a crucial second-order bias property, as elaborated in the following result.

\begin{lemma}\label{lemma:pseudo-bias}
Let $\widetilde{\mathbb{P}}$ be an alternative fixed distribution on $O$. Then the function $\phi$ defined in~\eqref{eq:beta-pseudo} satisfies the following conditional bias decomposition:
\begin{align*}
    &\mathbb{E}\left(\phi(O; \widetilde{\mathbb{P}}) - \phi(O;\mathbb{P}) \mid X\right) \\
    &= \left(\widetilde{h}^* - \widetilde{\pi}\right) \sum_{a=0}^1
      \frac{\{\widetilde{\pi} - \pi\}
      \{\widetilde{\mu}_a - \mu_a\}}{a\widetilde{\pi} + (1 - a)(1 - \widetilde{\pi})} + \{\widetilde{\tau} - \tau\}\{\widetilde{\pi} - \pi\}
      + \{\widetilde{h}^* - h^*\}\tau,
\end{align*}
where $(\widetilde{\mu}_0, \widetilde{\mu}_1, \widetilde{\pi}, \widetilde{\tau}, \widetilde{h}^*)$ represent the corresponding nuisance functions under $\widetilde{\mathbb{P}}$.
\end{lemma}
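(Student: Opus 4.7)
The plan is to compute $\mathbb{E}(\phi(O; \widetilde{\mathbb{P}}) \mid X)$ under the true law $\mathbb{P}$ via iterated expectations (conditioning first on $(X,A)$ and then on $X$), and then subtract $\mathbb{E}(\phi(O; \mathbb{P}) \mid X)$, which by Proposition~\ref{prop:ident} already equals $\beta$. The inverse-probability-weighting piece of $\phi$ is linear in $Y$ and in simple functions of $A$, so under $\mathbb{P}$ one readily obtains
\begin{align*}
\mathbb{E}\!\left[\left\{\tfrac{A}{\widetilde\pi} - \tfrac{1-A}{1-\widetilde\pi}\right\}(Y - \widetilde\mu_A)\,\Big|\,X\right] = \tfrac{\pi}{\widetilde\pi}(\mu_1 - \widetilde\mu_1) - \tfrac{1-\pi}{1-\widetilde\pi}(\mu_0 - \widetilde\mu_0),
\end{align*}
while $\mathbb{E}(\widetilde\tau(\widetilde h^* - A) \mid X) = \widetilde\tau(\widetilde h^* - \pi)$. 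Specializing to $\widetilde{\mathbb{P}} = \mathbb{P}$ makes the weighting term vanish, confirming $\mathbb{E}(\phi(O; \mathbb{P}) \mid X) = \tau(h^* - \pi) = \beta$.

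Next I would convert the first-order differences $\pi - \widetilde\pi$ and $\mu_a - \widetilde\mu_a$ that appear above into the second-order products featured in the statement, using the elementary identities $\pi/\widetilde\pi = 1 - (\widetilde\pi - \pi)/\widetilde\pi$ and $(1-\pi)/(1-\widetilde\pi) = 1 + (\widetilde\pi - \pi)/(1-\widetilde\pi)$. Substituting and collecting, together with $(\mu_1 - \widetilde\mu_1) - (\mu_0 - \widetilde\mu_0) = \tau - \widetilde\tau$, yields
\begin{align*}
\tfrac{\pi}{\widetilde\pi}(\mu_1 - \widetilde\mu_1) - \tfrac{1-\pi}{1-\widetilde\pi}(\mu_0 - \widetilde\mu_0) = (\tau - \widetilde\tau) + \sum_{a=0}^1 \frac{(\widetilde\pi - \pi)(\widetilde\mu_a - \mu_a)}{a\widetilde\pi + (1-a)(1-\widetilde\pi)}.
\end{align*}
Multiplying by $(\widetilde h^* - \widetilde\pi)$ already produces the second-order sum in the lemma, plus an auxiliary linear residual $(\widetilde h^* - \widetilde\pi)(\tau - \widetilde\tau)$ that must still be reconciled with the $h^*$-direction contribution.

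The concluding step is pure algebraic bookkeeping: expanding the combined residual $(\widetilde h^* - \widetilde\pi)(\tau - \widetilde\tau) + \widetilde\tau(\widetilde h^* - \pi) - \tau(h^* - \pi)$, the $\widetilde h^*\widetilde\tau$ cross terms cancel, and regrouping by $\tau$ versus $\widetilde\tau - \tau$ gives exactly $(\widetilde h^* - h^*)\tau + (\widetilde\pi - \pi)(\widetilde\tau - \tau)$, matching the remaining two terms in the lemma. I do not foresee a conceptual obstacle here---the statement is a purely algebraic identity---so the main practical risk is sign error in the fraction manipulations or in sorting first-order from second-order differences. Writing every small difference consistently in the $\widetilde\cdot - \cdot$ orientation used in the statement is a simple discipline that should prevent such slips.
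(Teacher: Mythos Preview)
Your proposal is correct and follows essentially the same route as the paper's proof: compute $\mathbb{E}(\phi(O;\widetilde{\mathbb{P}})\mid X)$ by iterated conditioning on $(X,A)$, subtract $\beta$, and then algebraically regroup the ratio terms into second-order products plus the $(\widetilde h^*-h^*)\tau$ remainder. The only quibble is that the identity $\mathbb{E}(\phi(O;\mathbb{P})\mid X)=\beta$ is stated in the text following~\eqref{eq:beta-pseudo} rather than in Proposition~\ref{prop:ident}, but you verify it directly anyway, so there is no gap.
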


The result of Lemma~\ref{lemma:pseudo-bias} motivates the use of $\phi$ as a robust pseudo-outcome for direct estimation of the CPB $\beta$---we discuss this in more detail in Section~\ref{sec:estimation-CPB}. Note that while the final term in the bias expression, $\{\widetilde{h}^* - h^*\}\tau$, would normally be first order, it is controlled under the margin condition in Assumption~\ref{ass:margin}.

Define the plug-in estimates of the CATE and the optimal treatment rule as $\widehat{\tau} = \widehat{\mu}_1 - \widehat{\mu}_0$, $\widehat{h}^* = \mathds{1}(\widehat{\tau} > 0)$. For a given CPB estimator $\widehat{\beta}$ constructed from $D^n$ (e.g., simply taking the plug-in $\widehat{\tau}\{\widehat{h}^* - \widehat{\pi}\}$, or else the estimator described in Section~\ref{sec:estimation-CPB}), let $\widehat{q}_{1 - \delta}$ solve $ \mathbb{P}_n\left[\widehat{\beta}(X) > \widehat{q}_{1 - \delta}\right] = \delta$ up to $o_{\mathbb{P}}(n^{-1/2})$ error, 
% \eli{What role is the additional $o_{\mathbb{P}}(n^{-1/2})$ error doing here? Why not just define it as the (left?) quantile}\textbf{[Sometimes these estimating equations are non-smooth and defining in terms of exact solutions can be problematic in practice - EK]}
and define the plug-in optimal constrained contact rule as $\widehat{\Delta}_{\delta}^* = \mathds{1}\left(\widehat{\beta} > \widehat{q}_{1 - \delta}\right)$. The proposed estimator of $V_{\delta}$ is then given by
\[\widehat{V}_{\delta} = \mathbb{P}_n\left[\widehat{\Delta}_{\delta}^*\phi(O; \widehat{\mathbb{P}}) + Y\right],\]
where $\phi(O; \widehat{\mathbb{P}})$ is obtained from equation~\eqref{eq:beta-pseudo}, plugging in $(\widehat{\pi}, \widehat{\mu}_0, \widehat{\mu}_1)$ (and the derived nuisance estimates $\widehat{\tau}, \widehat{h}^*$).

Our first main result describes the rate of convergence of $\widehat{V}_{\delta}$ to $V_{\delta}$, and yields sufficient conditions under which asymptotic normality and nonparametric efficiency are guaranteed.

\begin{theorem}\label{thm:est-value}
  Assume that $\lVert \widehat{\mu}_0 -\mu_0\rVert + \lVert \widehat{\mu}_1 -\mu_1\rVert + \lVert \widehat{\pi} -\pi\rVert + \lVert \widehat{\beta} -\beta\rVert + |\widehat{q}_{1 - \delta} - q_{1 - \delta}| = o_{\mathbb{P}}(1)$. Moreover, assume that there exists $\epsilon > 0, M > 0$ such that $\mathbb{P}[\epsilon \leq \pi \leq 1 - \epsilon] = \mathbb{P}[\epsilon \leq \widehat{\pi} \leq 1 - \epsilon] = 1$, $\mathbb{P}[|Y| \leq M] = 1$. Then, under the margin condition (Assumption~\ref{ass:margin}),
  \[\widehat{V}_{\delta} - V_{\delta} =  O_{\mathbb{P}}\left(\frac{1}{\sqrt{n}} + R_{1,n} + R_{2,n} + R_{3,n}\right),\]
  where 
  \[R_{1,n} = \left\lVert \widehat{\pi} - \pi \right\rVert
    \left(\lVert \widehat{\mu}_0 -\mu_0\rVert + \lVert \widehat{\mu}_1
      -\mu_1\rVert\right),\]
  and
  \[R_{2,n} = \left\lVert \widehat{\tau} - \tau
      \right\rVert_{\infty}^{1 + a}, \ R_{3,n} = \left(\lVert \widehat{\beta} - \beta
      \rVert_{\infty} + |\widehat{q}_{1 - \delta} - q_{1 -
        \delta}|\right)^{1 + b}.\] 
  If, in addition, $R_{1,n} + R_{2,n} + R_{3,n} = o_{\mathbb{P}}(n^{-1/2})$, then \[\sqrt{n}(\widehat{V}_{\delta} - V_{\delta}) \overset{d}{\to} \mathcal{N}\left(0, \sigma(\delta)^2\right),\]
  where $\sigma(\delta)^2 = \mathrm{Var}(\Delta_{\delta}^* \left\{\phi(O;\mathbb{P}) - q_{1-\delta}\right\} + Y)$.
  
\end{theorem}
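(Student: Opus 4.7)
The plan is to recast $\widehat V_{\delta}$ in terms of a suitably centered pseudo-outcome carrying the correct influence function, then run a standard von-Mises expansion in which Lemma~\ref{lemma:pseudo-bias} and the margin conditions (Assumption~\ref{ass:margin}) handle the two distinct sources of non-smoothness. Define
\[
\psi(O;\mathbb{P}) = \Delta_{\delta}^{*}\{\phi(O;\mathbb{P}) - q_{1-\delta}\} + Y + \delta q_{1-\delta},
\]
with $\widehat\psi$ the plug-in counterpart (using $\widehat{\mathbb{P}}$, $\widehat{\Delta}_{\delta}^{*}$, $\widehat q_{1-\delta}$). Since $\mathbb{E}[\Delta_{\delta}^{*}] = \delta$ and $\mathbb{E}[\phi(O;\mathbb{P}) \mid X] = \beta$, one checks $\mathbb{P}[\psi] = V_{\delta}$ and $\mathrm{Var}(\psi) = \sigma(\delta)^2$. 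A short first step shows $\mathbb{P}_n[\widehat\psi] - \widehat V_{\delta} = o_{\mathbb{P}}(n^{-1/2})$, using only the defining property $\mathbb{P}_n[\widehat{\Delta}_{\delta}^{*}] = \delta + o_{\mathbb{P}}(n^{-1/2})$ together with $\widehat q_{1-\delta} = O_{\mathbb{P}}(1)$.

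Next, I apply the canonical decomposition
\[
\mathbb{P}_n[\widehat\psi] - V_\delta = (\mathbb{P}_n - \mathbb{P})[\psi] + (\mathbb{P}_n - \mathbb{P})[\widehat\psi - \psi] + \mathbb{P}[\widehat\psi - \psi].
\]
The first term satisfies a CLT with asymptotic variance $\sigma(\delta)^2$ by the bounded-outcome assumption. The second (empirical process) term is $O_{\mathbb{P}}(n^{-1/2}\lVert\widehat\psi - \psi\rVert)$: conditional on $D^n$, $\widehat\psi - \psi$ is a fixed function, so Chebyshev applied to the centered sum suffices, and the stated consistency assumptions give $\lVert\widehat\psi - \psi\rVert = o_{\mathbb{P}}(1)$, yielding $o_{\mathbb{P}}(n^{-1/2})$.

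The bulk of the work is the conditional bias $\mathbb{P}[\widehat\psi - \psi]$. Using $\mathbb{E}[\Delta_{\delta}^{*}] = \delta$ to cancel the free $\delta(\widehat q_{1-\delta} - q_{1-\delta})$ contribution, algebra writes this as the sum of (i) $\mathbb{P}[\widehat{\Delta}_{\delta}^{*}\{\phi(O;\widehat{\mathbb{P}}) - \phi(O;\mathbb{P})\}]$, (ii) $\mathbb{P}[(\widehat{\Delta}_{\delta}^{*} - \Delta_{\delta}^{*})(\beta - q_{1-\delta})]$, and (iii) $(\widehat q_{1-\delta} - q_{1-\delta})(\delta - \mathbb{P}[\widehat{\Delta}_{\delta}^{*}])$. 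For (i), Lemma~\ref{lemma:pseudo-bias} produces the double-robust product $\lVert\widehat\pi - \pi\rVert(\lVert\widehat\mu_0-\mu_0\rVert + \lVert\widehat\mu_1-\mu_1\rVert) = R_{1,n}$ via Cauchy--Schwarz and positivity, plus the classification residual $\mathbb{P}[\mathds{1}(\widehat h^{*} \neq h^{*})\,|\tau|]$, which the margin condition on $\tau$ bounds by $\lVert\widehat\tau - \tau\rVert_{\infty}^{1+a} = R_{2,n}$. For (ii), whenever $\widehat{\Delta}_{\delta}^{*}(X) \neq \Delta_{\delta}^{*}(X)$ the triangle inequality forces $|\beta(X) - q_{1-\delta}| \leq \lVert\widehat\beta - \beta\rVert_{\infty} + |\widehat q_{1-\delta} - q_{1-\delta}|$, so the margin condition on $\beta$ yields order $R_{3,n}$. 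Term (iii) combines $\widehat q_{1-\delta} - q_{1-\delta} = o_{\mathbb{P}}(1)$ with $\delta - \mathbb{P}[\widehat{\Delta}_{\delta}^{*}] = (\mathbb{P}_n - \mathbb{P})[\widehat{\Delta}_{\delta}^{*}] + o_{\mathbb{P}}(n^{-1/2}) = O_{\mathbb{P}}(n^{-1/2})$ (sample splitting once more), delivering $o_{\mathbb{P}}(n^{-1/2})$.

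The main obstacle is ensuring the two sources of non-smoothness---the plug-in classifier $\widehat h^{*}$ inside $\phi$ and the plug-in threshold $\widehat q_{1-\delta}$ inside $\widehat{\Delta}_{\delta}^{*}$---do not together generate a cross-term of exact order $n^{-1/2}$ that would spoil asymptotic normality. The centering by $q_{1-\delta}$ in $\psi$ (matched by $\widehat q_{1-\delta}$ in $\widehat\psi$), combined with the empirical quantile property, is precisely the device that collapses the putative cross-term into the margin-controlled piece (ii) and the negligible piece (iii). Summing all bounds produces the stated rate; under $R_{1,n} + R_{2,n} + R_{3,n} = o_{\mathbb{P}}(n^{-1/2})$, Slutsky's theorem then delivers the CLT with variance $\sigma(\delta)^2$.
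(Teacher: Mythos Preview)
Your overall architecture---centering by $q_{1-\delta}$ to obtain the correct influence function, von-Mises expansion, then Lemma~\ref{lemma:pseudo-bias} plus the two margin conditions for the bias---is essentially the paper's, and your analysis of the conditional bias $\mathbb{P}[\widehat\psi-\psi]$ via pieces (i)--(iii) matches the paper's treatment of term~(ii) in its decomposition. The first step $\mathbb{P}_n[\widehat\psi]-\widehat V_\delta=o_{\mathbb{P}}(n^{-1/2})$ is also correct.

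There is, however, a genuine gap in your handling of the empirical process term $(\mathbb{P}_n-\mathbb{P})[\widehat\psi-\psi]$. You assert that ``conditional on $D^n$, $\widehat\psi-\psi$ is a fixed function, so Chebyshev applied to the centered sum suffices.'' This is not true: by construction $\widehat q_{1-\delta}$ solves $\mathbb{P}_n[\widehat\beta(X)>\widehat q_{1-\delta}]=\delta+o_{\mathbb{P}}(n^{-1/2})$, i.e., it is computed from the \emph{batch} data $O^n$, not from $D^n$ alone. Hence $\widehat{\Delta}_{\delta}^{*}=\mathds{1}(\widehat\beta>\widehat q_{1-\delta})$ and therefore $\widehat\psi$ are random functions of $O^n$ even after conditioning on $D^n$, and the simple sample-splitting Chebyshev bound (Lemma~\ref{lemma:emp-process}) does not apply. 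The same issue recurs in your piece (iii), where you again invoke ``sample splitting'' to claim $(\mathbb{P}_n-\mathbb{P})[\widehat{\Delta}_{\delta}^{*}]=O_{\mathbb{P}}(n^{-1/2})$.

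The paper addresses exactly this point in Lemma~\ref{lemma:quantile-emp-process}: it splits $\widehat{\Delta}_{\delta}^{*}-\Delta_{\delta}^{*}$ into $\mathds{1}(\widehat\beta>q_{1-\delta})-\mathds{1}(\beta>q_{1-\delta})$, which \emph{is} fixed given $D^n$ and yields to Lemma~\ref{lemma:emp-process}, plus $\mathds{1}(\widehat\beta>\widehat q_{1-\delta})-\mathds{1}(\widehat\beta>q_{1-\delta})$, which requires a Donsker argument (the class $\{\mathds{1}(\overline\beta(\cdot)>b):b\in\mathbb{R}\}$ is Donsker in $b$) combined with the margin condition to get $L_2$-smallness. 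Without this extra step, neither your empirical process term nor your piece~(iii) is justified, and the proof is incomplete.
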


By Theorem~\ref{thm:est-value}, the error in estimation of the optimal constrained value $V_{\delta}$ consists of (i) a product bias (``doubly robust'') term $R_{1,n}$, which will be small if either $\pi$ or $\mu_a$ are estimated well; (ii) a CATE-based error term $R_{2,n}$, which will be small if $\tau$ is estimated well and/or the margin coefficient $a$ is larger; and (iii) a CPB-based error term $R_{3,n}$, which will be small if $\beta$ is estimated well and/or $b$ is larger. While $R_{1,n}$ is the usual doubly robust bias that arises in average treatment effect estimation, and $R_{2,n}$ appears in the estimation of the optimal unconstrained value $\mathbb{E}(Y(h^*))$ \citep{luedtke2016a}, the third bias term $R_{3,n}$ is new to this problem and arises due to non-smoothness from thresholding $\beta$ at $q_{1 - \delta}$. In the case that the three bias terms are small enough to yield asymptotic normality of $\widehat{V}_{\delta}$, one can obtain asymptotically valid inference via simple Wald-based confidence intervals:
  $\widehat{V}_{\delta} \pm z_{1- \alpha/2}\frac{\widehat{\sigma}(\delta)}{\sqrt{n}}$, where
  \[\widehat{\sigma}(\delta)^2 = \mathbb{P}_n\left[\left(\widehat{\Delta}_{\delta}^* \left\{\phi(O;\widehat{\mathbb{P}}) - \widehat{q}_{1-\delta}\right\} + Y - \left\{\widehat{V}_{\delta} - \delta \widehat{q}_{1 - \delta}\right\}\right)^2\right],\]
  where $z_{1 - \alpha/2}$ is the $(1 - \alpha/2)$-quantile of the standard normal distribution.
% \mathbb{P}_n\left[\widehat{\Delta}_{\delta}^* \left\{\phi(O;\widehat{\mathbb{P}}) - \widehat{q}_{1-\delta}\right\} + Y\right]

\subsection{Conditional potential benefit}\label{sec:estimation-CPB}
To estimate the CPB function $\beta$, we will again proceed nonparametrically, and aim to construct an estimator that can perform well in a variety of scenarios. In particular, we will propose a model agnostic, robust, two-stage pseudo-outcome regression-based estimator, similar to several doubly robust ``DR-Learners'' that have been proposed in other problems \citep{foster2023, kennedy2023}. For the theory to work well with the margin condition (Assumption~\ref{ass:margin}), we use the $L_2(\mathbb{P})$ oracle inequality for pseudo-outcome regression developed in~\citet{rambachan2022}. The relevant theory is reviewed in Appendix~\ref{app:L2}.
% In order for the theory to work best with the margin condition (Assumption~\ref{ass:margin}), in our analysis we extend the general pointwise oracle inequality results for pseudo-outcome regression developed in~\citet{kennedy2023} to handle $L_1(\mathbb{P})$ error.

To motivate the proposed DR-Learner, recall that under Assumptions~\ref{ass:consistency}--\ref{ass:NUC}, 
\[\beta(X) = \mathbb{E}(Y(h^*) - Y \mid X) = \mathbb{E}(\{h^* - A\}\{Y(1) - Y(0)\} \mid X).\] If we could observe the counterfactuals $Y(0), Y(1)$, and know $h^*$ without error, then one could simply regress $(h^* - A)(Y(1) - Y(0))$ on $X$ to obtain estimates of $\beta(X)$. Of course, we cannot observe both levels of the counterfactual in practice. Alternatively, if the propensity score $\pi$ and the optimal unconstrained policy $h^*$ were known, then one could consider the inverse probability-weighted learner, regressing $(h^* - \pi)\frac{A - \pi}{\pi(1 - \pi)}Y$ on $X$. This would (for most regression procedures) have error on the same order as the oracle procedure that had knowledge of $Y(0), Y(1)$, since this pseudo-outcome also has conditional mean exactly equal to $\beta$. Since in a typical observational study, neither $\pi$ nor $h^*$ would be known, we desire an alternative approach that can mimic the performance of these oracle methods under general nonparametric conditions.

In order to achieve convergence rates as close as possible to an oracle procedure, while avoiding specific structural assumptions, we propose a generic two-stage regression procedure based on the pseudo-outcome $\phi$ defined in~\eqref{eq:beta-pseudo}. This particular pseudo-outcome is chosen for its favorable influence function-like properties, namely that described in Lemma~\ref{lemma:pseudo-bias}. In what follows, $\widehat{\mathbb{E}}_n(f(O) \mid X = x)$ will denote a generic regression procedure $\widehat{\mathbb{E}}_n$ applied to the function $f$ on covariates $X$, evaluated at $x \in \mathcal{X}$.

\begin{algorithm}\label{alg:DR-CPB}
    Let $D^n = (O_{01}, \ldots, O_{0n})$ and $O^n = (O_1, \ldots, O_n)$ denote independent training and batch samples, respectively.
    \begin{enumerate}
        \item[Step (i)]
        Fit propensity score model $\widehat{\pi}$ and outcome regression models $(\widehat{\mu}_0, \widehat{\mu}_1)$ using $D^n$. Based on these, define $\widehat{\tau} = \widehat{\mu}_1 - \widehat{\mu}_0$, $\widehat{h}^* = \mathds{1}(\widehat{\tau} > 0)$.
        \item[Step (ii)]
        Compute the pseudo-outcome
        \[\widehat{\phi}(O) = \left(\widehat{h}^*(X) - \widehat{\pi}(X)\right)\left\{\frac{A}{\widehat{\pi}(X)} - \frac{1 - A}{1 - \widehat{\pi}(X)}\right\}\left(Y - \widehat{\mu}_A(X)\right) + \widehat{\tau}(X)\left(\widehat{h}^*(X) - A\right),\]
        and regress $\widehat{\phi}$ on $X$ in $O^n$, to give
        $\widehat{\beta}_{\mathrm{dr}} = \widehat{\mathbb{E}}_n\left(\widehat{\phi}(O) \mid X = x\right)$,
        for each $x \in \mathcal{X}$.
        \item[Step (iii)] Optionally, swap the roles of $D^n$ and $O^n$ in Steps (i) and (ii) above, then average the resulting estimates.
    \end{enumerate}
\end{algorithm}

Our next main result provides error bounds for the DR-Learner $\widehat{\beta}_{\mathrm{dr}}$ described in Algorithm~\ref{alg:DR-CPB}, relative to a procedure that regresses the oracle pseudo-outcome $\phi(O; \mathbb{P})$ on $X$---again, since the conditional mean of $\phi(O; \mathbb{P})$ is the true $\beta$, this approach would typically yield error on the same order as regressing $(h^* - A)(Y(1) - Y(0))$ on $X$.
\begin{theorem}\label{thm:est-CPB}
    Assume that the second stage regression procedure $\widehat{\mathbb{E}}_n$ is $L_2(\mathbb{P})$-stable with respect to the metric $\rho$ (see Appendix~\ref{app:L2} for precise definitions), and that $\rho\left(\widehat{\phi}, \phi(\cdot ; \mathbb{P})\right) \overset{\mathbb{P}}{\to} 0$. Define $\widetilde{\beta}(x) = \widehat{\mathbb{E}}_n\left(\phi(O; \mathbb{P}) \mid X = x\right)$ to be an oracle estimator regressing the true pseudo-outcome on covariates $X$ in $O^n$, and let $R_n^* = \mathbb{E}\left(\lVert \widetilde{\beta} - \beta\rVert\right)$ be the oracle $L_2(\mathbb{P})$ risk. 
    % where $\lVert h\rVert_1 = \int |h(x)| \, d\mathbb{P}(x)$ is the $L_1(\mathbb{P})$ norm of the function $h :\mathcal{X} \to \mathbb{R}$. 
    Then
    \[\lVert \widehat{\beta}_{\mathrm{dr}} - \widetilde{\beta}\rVert = \left\lVert \widehat{\mathbb{E}}_n\left(\widehat{b}(X) \mid X = \, \cdot \, \right)\right\rVert + o_{\mathbb{P}}(R_n^*),\]
    where
    \begin{equation}\label{eq:bias-CPB}
    \widehat{b} = \left(\widehat{h}^* - \widehat{\pi}\right) \sum_{a=0}^1
      \frac{\{\widehat{\pi} - \pi\}
      \{\widehat{\mu}_a - \mu_a\}}{a\widehat{\pi} + (1 - a)(1 - \widehat{\pi})} + \{\widehat{\tau} - \tau\}\{\widehat{\pi} - \pi\}
      + \{\widehat{h}^* - h^*\}\tau.
    \end{equation}
    Further, if $\left\lVert \widehat{\mathbb{E}}_n\left(\widehat{b}(X) \mid X = \, \cdot \, \right)\right\rVert = o_{\mathbb{P}}(R_n^*)$, then $\lVert \widehat{\beta}_{\mathrm{dr}} - \beta\rVert = \lVert \widetilde{\beta} - \beta\rVert + o_{\mathbb{P}}(R_n^*)$, i.e., $\widehat{\beta}_{\mathrm{dr}}$ achieves the oracle rate of convergence.
\end{theorem}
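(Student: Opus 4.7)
The plan is to condition throughout on the auxiliary sample $D^n$, so that the nuisance estimates $(\widehat{\pi}, \widehat{\mu}_0, \widehat{\mu}_1, \widehat{\tau}, \widehat{h}^*)$ and hence the pseudo-outcome $\widehat{\phi}$ become fixed functions of $O$. Under this conditioning, both $\widehat{\beta}_{\mathrm{dr}}$ and the oracle $\widetilde{\beta}$ arise from the \emph{same} regression procedure $\widehat{\mathbb{E}}_n$ applied to the same covariate design in $O^n$, but with outcomes $\widehat{\phi}(O)$ and $\phi(O;\mathbb{P})$, respectively. The task is therefore to quantify how the regression reacts to the pseudo-outcome perturbation $\widehat{\phi} - \phi(\cdot;\mathbb{P})$, which is precisely what an oracle inequality for pseudo-outcome regression is designed to do.

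First, I would apply Lemma~\ref{lemma:pseudo-bias} with $\widetilde{\mathbb{P}} = \widehat{\mathbb{P}}$ (the distribution generated by plugging in the first-stage nuisance estimates) to identify the conditional mean of the perturbation as
\[
\mathbb{E}\bigl(\widehat{\phi}(O) - \phi(O;\mathbb{P}) \,\big|\, X, D^n\bigr) = \widehat{b}(X),
\]
with $\widehat{b}$ as defined in~\eqref{eq:bias-CPB}. This yields the decomposition $\widehat{\phi}(O) = \phi(O;\mathbb{P}) + \widehat{b}(X) + \eta(O)$, in which $\widehat{b}(X)$ is (given $D^n$) a deterministic function of $X$ alone, while $\eta(O)$ has $\mathbb{E}(\eta(O)\mid X, D^n) = 0$. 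This clean separation into a ``conditional-mean bias'' plus ``mean-zero noise'' is the algebraic core of the DR-Learner analysis.

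Next, I would invoke the $L_2(\mathbb{P})$-stability property of $\widehat{\mathbb{E}}_n$ from Appendix~\ref{app:L2} (following the oracle inequality of~\citet{rambachan2022}). Because the residual $\eta$ has conditional mean zero given $(X, D^n)$ and $\rho(\widehat{\phi}, \phi(\cdot;\mathbb{P})) \overset{\mathbb{P}}{\to} 0$, stability guarantees that the regression of $\eta$ contributes only an $o_{\mathbb{P}}(R_n^*)$ term to the $L_2(\mathbb{P})$ norm, while the deterministic conditional-mean piece $\widehat{b}(X)$ passes through essentially untouched as $\widehat{\mathbb{E}}_n(\widehat{b}(X)\mid X = \cdot)$. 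Concretely this gives
\[
\left\lVert (\widehat{\beta}_{\mathrm{dr}} - \widetilde{\beta}) - \widehat{\mathbb{E}}_n(\widehat{b}(X)\mid X = \cdot)\right\rVert = o_{\mathbb{P}}(R_n^*),
\]
from which the first displayed equation of the theorem follows by two applications of the triangle inequality. The second conclusion is then immediate: another triangle inequality gives $\lVert \widehat{\beta}_{\mathrm{dr}} - \beta\rVert = \lVert \widetilde{\beta} - \beta\rVert + o_{\mathbb{P}}(R_n^*)$ whenever $\lVert \widehat{\mathbb{E}}_n(\widehat{b}(X)\mid X = \cdot)\rVert = o_{\mathbb{P}}(R_n^*)$, so $\widehat{\beta}_{\mathrm{dr}}$ inherits the oracle $L_2(\mathbb{P})$ risk $R_n^*$.

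The main obstacle will be invoking the $L_2$-stability correctly: one must verify that $\rho(\widehat{\phi}, \phi(\cdot;\mathbb{P})) \overset{\mathbb{P}}{\to} 0$ is actually implied by the paper's standing consistency assumptions on $(\widehat{\pi}, \widehat{\mu}_0, \widehat{\mu}_1)$ combined with positivity (bounding $1/\widehat{\pi}$ and $1/(1-\widehat{\pi})$) and boundedness of $Y$, and one must handle the fact that $\widehat{b}$ is itself random through $D^n$, so all oracle-style bounds must be stated conditionally on $D^n$ and then transferred to the unconditional statement via dominated convergence. Once this stability calculus is in place, the rest of the proof reduces to a mechanical chain of triangle inequalities.
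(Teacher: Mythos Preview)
Your proposal is correct and follows essentially the same route as the paper: the paper's proof is a one-line citation of Lemma~\ref{lemma:pseudo-bias} (to identify $\widehat{b}$) and Lemma~\ref{lemma:pseudo-oracle} (the $L_2$ oracle inequality), and you have simply unpacked how these two ingredients combine. One small note: you flag verifying $\rho(\widehat{\phi},\phi(\cdot;\mathbb{P}))\overset{\mathbb{P}}{\to}0$ as an obstacle, but this is a standing hypothesis of the theorem, so no verification is needed.
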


The result of Theorem~\ref{thm:est-CPB} is quite general, and the conditions are relatively mild. We review in Appendix~\ref{app:L2} that the stability requirement is satisfied when $\widehat{\mathbb{E}}_n$ is a linear smoother, and $\rho$-consistency of the estimated pseudo-outcome $\widehat{\phi}$ will typically follow from pointwise or $L_2(\mathbb{P})$ consistency of $\widehat{\phi}$ itself. The oracle risk $R_n^*$ is well-characterized for many estimators and families of structural assumptions. For instance, when $\beta$ is $s$-smooth in the H{\"o}lder sense, and $\widehat{\mathbb{E}}_n$ is an appropriately tuned local polynomial or series estimator, then $R_n^* \asymp n^{- \frac{1}{2 + d/s}}$ (see \citet{gyorfi2002, tsybakov2009} for definitions and results). For the bias term, the argument in the proof of Theorem~\ref{thm:est-value} establishes that \[\big\lVert \widehat{b}\big\rVert = O_{\mathbb{P}}\left(\max_{a \in \{0,1\}}\lVert \{\widehat{\mu}_a - \mu_a\}\{\widehat{\pi} - \pi\}\rVert + \lVert \widehat{\tau} - \tau\rVert_{\infty}^{1 + a / 2}\right).\]
The first of these bias terms can be further bounded using H{\"o}lder's inequality, and will be small when error in estimating $\mu_a$ or $\pi$ is small, whereas the second term depends on the error in estimating $\tau$ and the margin parameter from Assumption~\ref{ass:margin}. Since $\widehat{\mathbb{E}}_n\left(\widehat{b}(X) \mid X = x\right)$ is an estimate of the mean of $\widehat{b}$ at a given $x$, it may be thought of as a smoothed bias; in Appendix~\ref{app:L2}, we review that when $\widehat{\mathbb{E}}_n$ is a linear smoother, this smoothed bias can be expressed in terms of certain weighted norms of the bias $\widehat{b}$ itself.

\begin{remark}\label{rem:CPB-subs}
    While we present results for the fully conditional CPB, $\beta(X) = \mathbb{E}(Y(h^*) - Y \mid X)$, we note that these extend in a straightforward way to estimation of coarser CPB metrics, e.g., $\mathbb{E}(Y(h^*) - Y \mid V) \equiv \mathbb{E}(\beta \mid V)$ for a given subset of covariates $V \subseteq X$. Such an estimand may be of practical interest if one wishes to summarize the potential benefit of a targeted intervention in simpler terms, say across the range of one continuous baseline covariate. Procedurally, one simply changes step (ii) in Algorithm~\ref{alg:DR-CPB} to regress the same pseudo-outcome on $V$ instead of $X$. An analog of Theorem~\ref{thm:est-CPB} follows by the same arguments in Appendix~\ref{app:L2}, with the notable difference that the bias term $\widehat{b}$ in~\eqref{eq:bias-CPB} is replaced by $\mathbb{E}(\widehat{b} \mid V)$. Moreover, since the $V$-conditional CPB, $\mathbb{E}(\beta \mid V)$, involves extra marginalization, it is lower-dimensional than $\beta$ (and possibly less complex), so  the oracle risk $R_n^*$ in Theorem~\ref{thm:est-CPB} may be of smaller order, i.e., converging faster to zero.
\end{remark}

\subsection{Optimal rule}\label{sec:estimation-rule}
We focus here on estimating the optimal contact rule $\Delta_{\delta}^* = \mathds{1}(\beta > q_{1-\delta})$ itself. In estimation of its value in Section~\ref{sec:estimation-value}, we took an arbitrary estimator of $\widehat{\beta}$ based on training data $D^n$, computed the in-sample approximate quantile $\widehat{q}_{1 - \delta}$ that solved $\mathbb{P}_n\left[\widehat{\beta}(X) > \widehat{q}_{1 - \delta}\right] = \delta + o_{\mathbb{P}}(n^{-1/2})$, and worked with $\widehat{\Delta}_{\delta}^* = \mathds{1}\left(\widehat{\beta} > \widehat{q}_{1 - \delta}\right)$. We will see that this estimated contact rule---together with the other plug-in estimator $\widehat{h}^* = \mathds{1}(\widehat{\tau} > 0)$ for the optimal unconstrained policy---performs well with respect to a natural error metric.

% \eli{I don't think we need this two step, since $\mathbb{E}[Y(d(\Delta_{\delta}^*, h^*) - Y(d(\widehat{\Delta}_{\delta}^*, \widehat{h}^*)) \mid X, D^n]$ is the regret of $\widehat{\Delta}_{\delta}^*, \widehat{h}^*$ when considering the value function, so can we just call it that?}
In the classification literature, one often characterizes error of an estimated classifier by quantifying ``regret'' relative to a true optimal classifier.
In our setting, as the treatment rules under study are ordered with respect to their value, we can treat value relative to the optimal rule as a regret function: in view of Proposition~\ref{prop:ident}, let $\mathcal{R}_{\delta}(\Delta, h) = \mathbb{P}(\tau\{\Delta_{\delta}^* (h^* - \pi) - \Delta (h - \pi)\})$, for any $\Delta, h$ possibly dependent on $D^n$. The next result quantifies the error of the estimated pair $(\widehat{\Delta}_{\delta}^*, \widehat{h}^*)$ relative to the true optimal pair $(\Delta_{\delta}^*, h^*)$.

\begin{proposition}\label{prop:bias-rule}
    Under Assumptions~\ref{ass:consistency}--\ref{ass:NUC},
    \[\mathbb{E}[Y(d(\Delta_{\delta}^*, h^*) - Y(d(\widehat{\Delta}_{\delta}^*, \widehat{h}^*)) \mid X, D^n] = (\Delta_{\delta}^* - \widehat{\Delta}_{\delta}^*)\beta + \widehat{\Delta}_{\delta}^* \tau (h^* - \widehat{h}^*).\]
    Moreover, if Assumption~\ref{ass:margin} holds, then
    \[\mathcal{R}_{\delta}\left(\widehat{\Delta}_{\delta}^*, \widehat{h}^* \right) = O_{\mathbb{P}}\left(\frac{1}{\sqrt{n}} + R_{2,n} + R_{3,n}\right),\]
    where $R_{2,n}$ and $R_{3,n}$ are as defined in Theorem~\ref{thm:est-value}.
\end{proposition}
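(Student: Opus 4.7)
The plan is to prove the two claims in turn. For the first, I would apply Proposition~\ref{prop:ident} twice: once to $d(\Delta_\delta^*, h^*)$ and once to $d(\widehat{\Delta}_\delta^*, \widehat{h}^*)$. Taking the difference conditional on $X$ and $D^n$ yields
\[
\Delta_\delta^* \tau(h^* - \pi) - \widehat{\Delta}_\delta^* \tau(\widehat{h}^* - \pi).
\]
I would then add and subtract $\widehat{\Delta}_\delta^*\tau(h^* - \pi)$ and use $\tau(h^* - \pi) = \beta$ to obtain the stated decomposition. This step is purely algebraic and poses no difficulty.

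For the second claim, taking expectations gives $\mathcal{R}_\delta(\widehat{\Delta}_\delta^*, \widehat{h}^*) = \mathbb{E}[(\Delta_\delta^* - \widehat{\Delta}_\delta^*)\beta] + \mathbb{E}[\widehat{\Delta}_\delta^* \tau(h^* - \widehat{h}^*)]$, and I would bound each term separately. For the second (outcome-rule) term, I would use $\widehat{\Delta}_\delta^* \leq 1$ and the standard classification-regret inequality $\mathds{1}(\widehat{h}^* \neq h^*)|\tau| \leq |\tau|\mathds{1}(|\tau| \leq |\widehat{\tau} - \tau|)$ — a consequence of the fact that a sign flip in $\widehat{\tau}$ relative to $\tau$ forces $|\widehat{\tau} - \tau| \geq |\tau|$. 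Bounding $|\widehat\tau - \tau|$ by $\|\widehat{\tau} - \tau\|_\infty$ and invoking the first half of Assumption~\ref{ass:margin} gives $\mathbb{E}[|\tau|\mathds{1}(\widehat{h}^* \neq h^*)] \lesssim \|\widehat\tau - \tau\|_\infty^{1+a} = R_{2,n}$.

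The first (contact-rule) term is the main obstacle. Writing $\beta = (\beta - q_{1-\delta}) + q_{1-\delta}$, I split the term into
\[
\mathbb{E}\bigl[(\beta - q_{1-\delta})(\Delta_\delta^* - \widehat{\Delta}_\delta^*)\bigr] + q_{1-\delta}\,\mathbb{E}[\Delta_\delta^* - \widehat{\Delta}_\delta^*].
\]
For the first summand, the indicator difference vanishes unless $\beta$ and $\widehat\beta$ lie on opposite sides of their respective thresholds, in which case the triangle inequality gives $|\beta - q_{1-\delta}| \leq \|\widehat\beta - \beta\|_\infty + |\widehat q_{1-\delta} - q_{1-\delta}|$. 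Combining with the second half of Assumption~\ref{ass:margin} yields a bound of order $(\|\widehat\beta - \beta\|_\infty + |\widehat q_{1-\delta} - q_{1-\delta}|)^{1+b} = R_{3,n}$. For the second summand, I would exploit the defining equation $\mathbb{P}_n[\widehat\beta > \widehat q_{1-\delta}] = \delta + o_{\mathbb P}(n^{-1/2})$ together with $\mathbb{P}[\beta > q_{1-\delta}] = \delta$; a standard empirical process argument (applicable since $\{\mathds{1}(\widehat\beta > t) : t \in \mathbb{R}\}$ is a VC class conditional on $D^n$) controls $|\mathbb{P}_n[\widehat\beta > \widehat q_{1-\delta}] - \mathbb{P}[\widehat\beta > \widehat q_{1-\delta}]|$ at rate $n^{-1/2}$, so that $|\mathbb{E}[\Delta_\delta^* - \widehat\Delta_\delta^*]| = O_{\mathbb P}(n^{-1/2})$.

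The trickiest piece is this calibration step: it is the only source of the $n^{-1/2}$ rate in the bound, and it requires carefully invoking the Glivenko--Cantelli/Donsker property of threshold classes conditionally on $D^n$ to translate the empirical quantile condition into a population statement. Assembling the three bounds gives the claimed $O_{\mathbb P}(n^{-1/2} + R_{2,n} + R_{3,n})$ regret rate.
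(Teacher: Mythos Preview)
Your proposal is correct and follows essentially the same route as the paper. The paper's proof is a one-line citation to Proposition~\ref{prop:ident} and the proof of Theorem~\ref{thm:est-value}; you have correctly unpacked those ingredients --- the algebraic decomposition from Proposition~\ref{prop:ident}, the $R_{2,n}$ and $R_{3,n}$ bounds via Lemma~\ref{lemma:diff} and the margin condition, and the $O_{\mathbb{P}}(n^{-1/2})$ calibration step, which in the paper is handled by Lemma~\ref{lemma:quantile-emp-process} (the Donsker argument you describe for the threshold class conditional on $D^n$).
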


Similar to the estimated value in Section~\ref{sec:estimation-value}, the regret depends on how well $\tau$ and $\beta$ are estimated, through $R_{2,n}$ and $R_{3,n}$, respectively. We reiterate that this result is agnostic to the choice of estimators $\widehat{\tau}$ and $\widehat{\beta}$, as long as the plug-in procedure for $\widehat{h}^*$ and $\widehat{\Delta}_{\delta}^*$ described above is followed.

This result is important from a practical perspective when considering implementation of the estimated policies. That is, the result gives a bound on how well the estimated policy would perform (on average) compared to the true optimal policy, since regret is defined in terms of value, i.e., the expected outcome under different treatment rules. With larger sample sizes and well estimated nuisance functions, one can expect closer to optimal performance of the estimated treatment rule.

\begin{remark}
    While Proposition~\ref{prop:bias-rule} quantifies the regret gap for $(\Delta_{\delta}^*, h^*)$ jointly, one might be interested in error of the optimal contact rule $\Delta_{\delta}^*$ on its own. This can be done by defining an alternative regret $\widetilde{\mathcal{R}}_{\delta}(\Delta) = \mathbb{P}(\tau(h^* - \pi)\{\Delta - \Delta_{\delta}^*\})$, where $h^*$ is fixed at the truth. In this case, one can show $\widetilde{\mathcal{R}}_{\delta}(\widehat{\Delta}_{\delta}^*) = O_{\mathbb{P}}(1/\sqrt{n} + R_{3,n})$. Similarly, one can isolate the error in estimating $h^*$, either by fixing $\Delta$ in the regret definition or considering mean counterfactuals when implementing $\widehat{h}^*$ on the whole population---in either case, the regret bound can be shown to be of the order $O_{\mathbb{P}}(R_{2,n})$.
\end{remark}

\subsection{Area under the potential benefit curve}\label{sec:estimation-AUPBC}
Our last focus of estimation will be the AUPBC summary measure introduced in Section~\ref{sec:AUPBC}, $\mathcal{A} = \int_0^1 \mathbb{E}((\Delta_\delta^* - \delta)\beta) \, d\delta$, as well as its normalized version $\overline{\mathcal{A}} = 2\frac{\mathcal{A}}{\mathbb{E}(\beta)}$. By the same motivation as the proposed estimator for the optimal constrained value at a point, we propose the following estimator for the AUPBC:
\[\widehat{\mathcal{A}} = \int_0^1 \mathbb{P}_n\left[\phi(O; \widehat{\mathbb{P}}) \left(\widehat{\Delta}_{\delta}^* - \delta\right)\right] \, d\delta = \int_{0}^1 \widehat{V}_{\delta} \, d\delta - \mathbb{P}_n[Y] - \frac{1}{2}\mathbb{P}_n\left[\phi(O; \widehat{\mathbb{P}})\right],\]
where $\phi(O; \widehat{\mathbb{P}})$ and $\widehat{\Delta}_{\delta}^*$ are as defined in Section~\ref{sec:estimation-value}. Similarly, the normalized AUPBC can be estimated via
\[\widehat{\overline{\mathcal{A}}} = 2\frac{\int_0^1 \mathbb{P}_n\left[\phi(O; \widehat{\mathbb{P}}) \widehat{\Delta}_{\delta}^*\right] \, d\delta}{\mathbb{P}_n\left[\phi(O; \widehat{\mathbb{P}})\right]} - 1.\]
Note that one can use any available numerical tools to compute the necessary integrals up to arbitrary precision.
% \eli{vote to drop the next sentence} \textbf{[I am fine either way, don't feel super strongly - EK]} For $\widehat{\mathcal{A}}$, most simply, one can use a Riemann sum approximation by taking a fine grid of $\delta$ values, computing $\mathbb{P}_n\left[\phi(O; \widehat{\mathbb{P}}) \left(\widehat{\Delta}_{\delta}^* - \delta\right)\right]$ for each value of $\delta$, then summing while appropriately weighting by the grid width.

Just as in the pointwise value case, the proposed estimators are debiased through the use of influence functions, and taking advantage of the margin condition in Assumption~\ref{ass:margin}. The following result gives the asymptotic behavior of $\widehat{\mathcal{A}}$ and $\widehat{\overline{\mathcal{A}}}$.

\begin{theorem}\label{thm:est-AUPBC}
    Suppose the assumptions of Theorem~\ref{thm:est-value} hold, taking Assumption~\ref{ass:margin} to hold for all $\delta \in [0,1]$ with a common margin coefficient $b$. Further, assume that $\mathbb{P}[|\widehat{\mu}_a|\leq M] = 1$ for $a \in \{0,1\}$, $\sup_{\delta \in [0,1]} \left|\widehat{\Delta}_{\delta}^* \{\phi(O; \widehat{\mathbb{P}}) - q_{1 - \delta}\} - \Delta_{\delta}^* \{\phi(O; \mathbb{P}) - q_{1 - \delta}\}\right| = o_{\mathbb{P}}(1)$, and that $\sup_{\delta \in [0,1]}\left|\mathbb{P}_n[\widehat{\Delta}_{\delta}^*] - \delta\right| = o_{\mathbb{P}}(n^{-1/2})$. Then
    \[\widehat{\mathcal{A}} - \mathcal{A} = O_{\mathbb{P}}\left(\frac{1}{\sqrt{n}} + R_{1,n} + R_{2,n} + R_{3,n}^*\right),\]
    where $R_{1,n}$ and $R_{2,n}$ are as defined in Theorem~\ref{thm:est-value}, and
    \[R_{3,n}^* = \bigg(\lVert \widehat{\beta} - \beta\rVert_{\infty} + \sup_{\delta \in [0,1]}|\widehat{q}_{1 - \delta} - q_{1 - \delta}| \bigg)^{1 + b}.\]
    Similarly, under these same conditions, $\widehat{\overline{\mathcal{A}}} - \overline{\mathcal{A}} = O_{\mathbb{P}}\left(\frac{1}{\sqrt{n}} + R_{1,n} + R_{2,n} + R_{3,n}^*\right)$. In the case that $R_{1,n} + R_{2,n} + R_{3,n}^* = o_{\mathbb{P}}(n^{-1/2})$, we have
    \[\sqrt{n}\left(\widehat{\mathcal{A}} - \mathcal{A}\right) \overset{d}{\to} \mathcal{N}(0, \kappa^2) \text{ and }  \sqrt{n}\left(\widehat{\overline{\mathcal{A}}} - \overline{\mathcal{A}}\right) \overset{d}{\to} \mathcal{N}(0, \zeta^2),\]
    for asymptotic variances $\kappa^2 = \mathrm{Var}\left(N\right)$, $\zeta^2 = \mathrm{Var}\left(\frac{1}{\mathbb{E}(\beta)}\left\{2N - \overline{\mathcal{A}}\phi(O;\mathbb{P})\right\}\right)$, where
    \[N = \int_0^1 \left\{\Delta_{\delta}^*\phi(O;\mathbb{P}) - q_{1 - \delta}(\Delta_{\delta}^* - \delta)\right\} \, d\delta - \frac{1}{2}\phi(O; \mathbb{P}) .\]
\end{theorem}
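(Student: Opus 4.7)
The plan is to reduce everything to the pointwise analysis developed for Theorem~\ref{thm:est-value}, using the identity
\[
\widehat{\mathcal{A}} - \mathcal{A} = \int_0^1 (\widehat{V}_\delta - V_\delta)\,d\delta - (\mathbb{P}_n - \mathbb{P})[Y] - \tfrac{1}{2}\bigl\{\mathbb{P}_n[\phi(O;\widehat{\mathbb{P}})] - \mathbb{E}(\beta)\bigr\},
\]
which follows from $\widehat{\mathcal{A}} = \int_0^1 \widehat{V}_\delta\,d\delta - \mathbb{P}_n[Y] - \tfrac{1}{2}\mathbb{P}_n[\phi(O;\widehat{\mathbb{P}})]$ and the matching identity $\mathcal{A} = \int_0^1 V_\delta\,d\delta - \mathbb{E}[Y] - \tfrac{1}{2}\mathbb{E}[\beta]$. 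The third term is controlled by Lemma~\ref{lemma:pseudo-bias}: with sample splitting one gets $\mathbb{P}_n[\phi(O;\widehat{\mathbb{P}})] - \mathbb{E}[\beta] = (\mathbb{P}_n - \mathbb{P})[\phi(O;\mathbb{P})] + O_{\mathbb{P}}(R_{1,n} + R_{2,n})$, the $R_{2,n}$ contribution coming from the $\{\widehat h^* - h^*\}\tau$ bias piece via the margin condition exactly as in Theorem~\ref{thm:est-value}.

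For the integral term, I would upgrade the pointwise expansion of Theorem~\ref{thm:est-value} to one uniform in $\delta \in [0,1]$:
\[
\widehat{V}_\delta - V_\delta = (\mathbb{P}_n - \mathbb{P})\bigl[\Delta_\delta^*\{\phi(O;\mathbb{P}) - q_{1-\delta}\} + Y\bigr] + O_{\mathbb{P}}\bigl(R_{1,n} + R_{2,n} + R_{3,n}^*\bigr).
\]
The nuisance-based terms $R_{1,n}$ and $R_{2,n}$ are $\delta$-free and carry over unchanged; the pointwise thresholding term $R_{3,n}$ becomes the uniform $R_{3,n}^*$ once we take a supremum over $\delta$, which is justified because the margin condition is assumed to hold with a common coefficient $b$ across $\delta$. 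The stochastic upgrade---from the estimated integrand $\widehat{\Delta}_\delta^*\{\phi(O;\widehat{\mathbb{P}}) - \widehat{q}_{1-\delta}\}$ to the true-parameter integrand $\Delta_\delta^*\{\phi(O;\mathbb{P}) - q_{1-\delta}\}$---must be done at rate $o_{\mathbb{P}}(n^{-1/2})$ uniformly in $\delta$. This is what the two sup-type hypotheses of the theorem buy: uniform $L_\infty$ consistency of the integrand, plus the uniform quantile calibration $\sup_\delta|\mathbb{P}_n[\widehat{\Delta}_\delta^*] - \delta| = o_{\mathbb{P}}(n^{-1/2})$, combined with the monotonicity of $\delta \mapsto \Delta_\delta^*$, place the true and estimated classes in a common Donsker envelope. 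Integrating over $\delta$ and applying Fubini then delivers the stated rate for $\widehat{\mathcal{A}} - \mathcal{A}$.

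When $R_{1,n} + R_{2,n} + R_{3,n}^* = o_{\mathbb{P}}(n^{-1/2})$, all bias terms are negligible and, by Fubini,
\[
\sqrt{n}(\widehat{\mathcal{A}} - \mathcal{A}) = \sqrt{n}(\mathbb{P}_n - \mathbb{P})\left[\int_0^1 \Delta_\delta^*\{\phi(O;\mathbb{P}) - q_{1-\delta}\}\,d\delta - \tfrac{1}{2}\phi(O;\mathbb{P})\right] + o_{\mathbb{P}}(1).
\]
The random variable in brackets differs from $N$ in the statement only by the deterministic constant $\int_0^1 \delta\, q_{1-\delta}\,d\delta$, which the centering $\mathbb{P}_n - \mathbb{P}$ annihilates, so the CLT yields $\kappa^2 = \mathrm{Var}(N)$. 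For the normalized AUPBC, the same argument produces the joint linear expansion
\[
\bigl(\widehat{\mathcal{A}} - \mathcal{A},\ \mathbb{P}_n[\phi(O;\widehat{\mathbb{P}})] - \mathbb{E}[\beta]\bigr) = (\mathbb{P}_n - \mathbb{P})\bigl[(N,\ \phi(O;\mathbb{P}))\bigr] + o_{\mathbb{P}}(n^{-1/2}),
\]
and the delta method applied to $(a,e) \mapsto 2a/e$ at $(\mathcal{A}, \mathbb{E}[\beta])$ yields influence function $\{2N - \overline{\mathcal{A}}\,\phi(O;\mathbb{P})\}/\mathbb{E}(\beta)$, giving $\zeta^2$ as asserted.

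The main obstacle I anticipate is precisely the uniform-in-$\delta$ empirical process step: whereas Theorem~\ref{thm:est-value} required only pointwise control, here the replacement of the estimated integrand by the true integrand must hold at rate $o_{\mathbb{P}}(n^{-1/2})$ uniformly over $\delta$ so that the error survives integration. Once that replacement is in place, the remainder of the argument is essentially bookkeeping on top of Theorem~\ref{thm:est-value}, Lemma~\ref{lemma:pseudo-bias}, and the classical delta method.
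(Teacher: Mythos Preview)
Your proposal is correct and follows essentially the same route as the paper: the paper packages your uniform-in-$\delta$ expansion of $\widehat{V}_\delta - V_\delta$ into a dedicated lemma (Lemma~\ref{lemma:sup-emp-process}), proved via bracketing numbers for monotone function classes and a maximal inequality with the assumed $o_{\mathbb{P}}(1)$ envelope, then bounds $|\widehat{\mathcal{A}} - \mathcal{A} - (\mathbb{P}_n - \mathbb{P})N|$ by the supremum over $\delta$ plus the $\delta = 1$ term, and handles $\widehat{\overline{\mathcal{A}}}$ by the quotient/delta-method argument you outline. One small slip: the centered integrand in the empirical process replacement is $\widehat{\Delta}_\delta^*\{\phi(O;\widehat{\mathbb{P}}) - q_{1-\delta}\}$ (true quantile, not $\widehat{q}_{1-\delta}$), since $\widehat{q}_{1-\delta}$ enters only through $\widehat{\Delta}_\delta^*$ and the uniform calibration assumption.
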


We note that the extra conditions of Theorem~\ref{thm:est-AUPBC} (compared to Theorem~\ref{thm:est-value}) are relatively mild, and enable uniform convergence for $\widehat{V}_{\delta}$ over $\delta \in [0,1]$, which aids in the analysis of $\widehat{\mathcal{A}}$ and $\widehat{\overline{\mathcal{A}}}$. The other comments that followed Theorem~\ref{thm:est-value} apply here as well, for interpreting the bias terms $R_{1,n}, R_{2,n}, R_{3,n}^*$. Moreover, under the conditions of Theorem~\ref{thm:est-AUPBC} that yield asymptotic normality, one can construct simple asymptotically valid Wald-based confidence intervals using plug-in estimators of $\kappa^2$ and $\zeta^2$.

We remark that the functionals $\mathcal{A}$ and $\overline{\mathcal{A}}$ present challenging statistical estimation problems, and it seems that authors targeting analogous area measures in dynamic treatment regime settings tend to use simple plug-in estimators (e.g., \citet{imai2023}). 
% To our knowledge, we present the first robust estimator of such a quantity appealing to nonparametric efficiency theory. 
In the same vein as \citet{ledell2015}, who estimate the area under the ROC curve in a binary classification setting, the proposed robust estimator of the AUPBC appeals to nonparametric efficiency theory, and the proposed methodology may be useful for constructing estimators of similar area measures in other dynamic treatment regime contexts.

\section{Data Application}
\label{sec:applications}

In this section, we illustrate the empirical application of our methodology on data from the (SPOT)light study, a prospective cohort of ward patients with deteriorating health \citep{harris2018, keele2019}. Here, we follow \citet{keele2019} and \citet{mcclean2024}, and use data on $n = $ 13,011 patients across the UK referred to critical care between November 1, 2010, and December 31, 2011, to study the impact of intensive care unit (ICU) transfer of critically ill patients on mortality. In addition to estimating the CPB to assess which subgroups are suboptimally allocated or not to the ICU, we are interested in evaluating treatment rules of the form~\eqref{eq:contact-policy}, in which decision makers (e.g., hospital staff) could select a subpopulation on which to intervene and provide tailored recommendations for ICU transfer and admission, in place of the status quo procedure. In a hospital setting, resource constraints may only allow for such interventions to be performed on a relatively small subset of patients, and there may be only a limited amount of information available to target this intervention.

We analyze a publicly available version of the (SPOT)light dataset, in which the rows are $n$ samples with replacement from the original data. Data and code to reproduce all analyses are available at~\href{https://github.com/alexlevis/CPB}{https://github.com/alexlevis/CPB}.

\subsection{Data and Models}

The exposure, $A$, we took to be transfer and admission to the ICU, while the outcome of interest, $Y$, was a binary indicator of survival 28 days after the admission decision. The (SPOT)light study involved measurement of a rich set of potential baseline confounders. In $X$, we included age, sex, diagnosis of sepsis, peri-arrest status, a measure of bed availability in the ICU, indicators for whether the visit took place during the weekend, or during winter, health care site, and three risk severity scores constructed based on physiology measurements: the Intensive Care National Audit \& Research  Centre physiology score \citep{harrison2007}, the Sepsis-related Organ Failure Assessment (SOFA) score \citep{vincent1996}, and the NHS National Early Warning Score \citep{london2012}. Based on the UK Critical Care Minimum Dataset levels of care \citep{danbury2015}, we also included in $X$ the level of care for the patient prior to assessment, as well as the recommended level of care after assessment.

To estimate $(\pi, \mu_0, \mu_1)$, we used Super Learner ensembles \citep{vanderlaan2007} of logistic generalized linear models, random forests, and regression trees. To estimate the CPB, we used the robust learner developed in Section~\ref{sec:estimation-CPB}, and fit a regression of $\phi(O; \widehat{\mathbb{P}})$ on $X$; in our results below, we summarize CPB across two key covariates by estimating $\mathbb{E}(Y(h^*) - Y \mid V)$ where $V$ is age or SOFA risk score (see Remark~\ref{rem:CPB-subs}), using as the second stage regression procedure (i.e., Step (ii) in Algorithm~\ref{alg:DR-CPB}) a Super Learner ensemble of a linear model, regression tree, a polynomial adaptive spline and a smoothing spline. We also estimated the value of optimal constrained policies as outlined in Section~\ref{sec:estimation-value}, across a grid of budget values $\delta \in [0,1]$, and combined these to estimate the normalized AUPBC summary measure for this setting. Since $V_{\delta}$ is monotone in $\delta$, we used the rearrangement procedure in~\citet{chernozhukov2009} to preserve monotonicity in the estimated values (this had a minor effect as estimates were nearly monotone without any rearrangement). For the CPB estimators, we used a single sample split, while for estimators of all other functionals, four-fold cross-fitting was employed.

Finally, using the methodology developed in Appendix~\ref{app:subset}, we consider the case where a decision maker might only have access to the variables \[W = \{\texttt{age}, \texttt{sex}, \texttt{sepsis}, \texttt{weekend}, \texttt{site}, \texttt{season}\} \subsetneq X.\]
For instance, in a clinical setting, more in-depth information such as the risk severity scores may not be readily available to clinicians or hospital staff. We estimate analogous optimal constrained rules where the contact rule and/or the ensuing policy on the selected subpopulation are restricted to only depend on $W$.

\subsection{Results}
The results for CPB estimation are illustrated in Figure~\ref{fig:CPB-ICU}. We can see that both age and SOFA score are predictive of potential benefit. For instance, we estimate that those above age $\sim\!\!55$ years, or above a SOFA score of $\sim\!\!2$, can expect $\geq 5$\% increased probability of 28-day survival under optimal treatment allocation compared to the status quo. Additionally, we estimate that increased age or SOFA scores are associated with even further gains from targeted intervention.

\begin{figure}[ht]
  \centering
  \includegraphics[width = 0.9\linewidth]{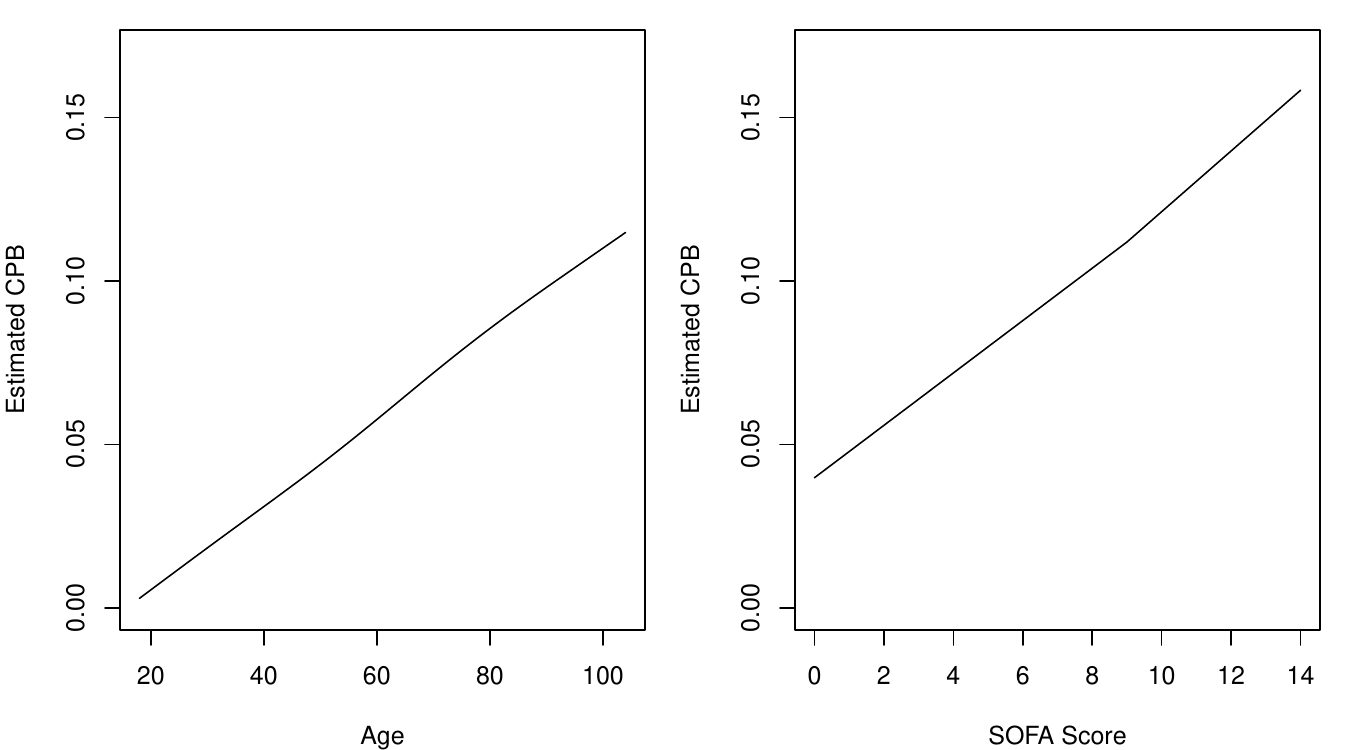}
  \caption{Estimated CPB across age and SOFA risk score in (SPOT)light data.}
  \label{fig:CPB-ICU}
\end{figure}

Next, in Figure~\ref{fig:Qini-ICU}, we plot the estimated Qini curves for the value of optimal constrained policies across all possible budgets. The estimated normalized AUPBC for unrestricted treatment rules (black curve) is $0.79$ (95\% CI: $0.73, 0.84$), but when we restrict contact rules to depend only on $W$ (blue curve), the AUPBC is $0.44$ ($0.39, 0.50$); this clearly demonstrates the importance of including key predictors of potential benefit to maximize outcomes under these treatment rules. Moreover, while the estimated value of the optimal unconstrained policy, $\widehat{V}_1 = \widehat{\mathbb{E}}(Y(h^*))$, is 0.84 ($0.83, 0.85$), when both the contact rule and ensuing policy are restricted to depend only on $W$ (red curve), the optimal unconstrained policy has value $0.80$ ($0.79, 0.81$). This is quite a substantial difference---especially relative to the mean survival probability under the status quo regime, $0.76$ ($0.75, 0.77$)---reflecting the impact of using maximally predictive covariates not only on the selection of subjects on whom to intervene, but also on the effectiveness of tailored treatment on the selected subpopulation.

% \subsection{ICU admission: the (SPOT)light study}

\begin{figure}[ht]
  \centering
  \includegraphics[width = 0.8\linewidth]{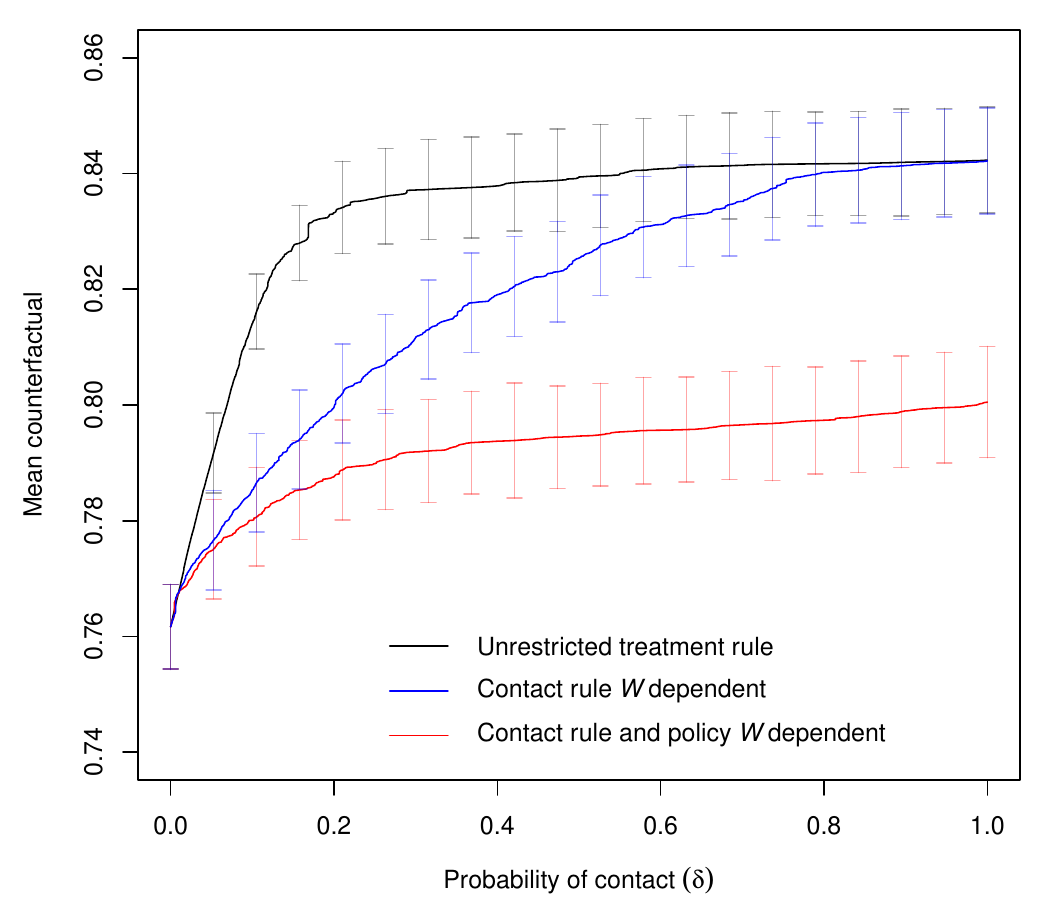}
  \caption{Estimated Qini curves for the (SPOT)light data. The black curve plots the estimated optimal constrained value $\mathbb{E}(Y(d(\Delta_{\delta}^*, h^*)))$---the expected survival under the optimal constrained policies---against the budget parameter $\delta$. The blue curve represents the estimated optimal constrained values when the contact rule is restricted to only depend on $W$, and the red curve when both the contact rule and ensuing policy may only depend on $W$. Vertical bars represent Wald-based pointwise 95\% confidence intervals.}
  % \eli{Minor: the purple is hard to see and the orange is hard to differentiate from the red.}} \textbf{[Agreed - EK]} 
  \label{fig:Qini-ICU}
\end{figure}

\section{Discussion}
\label{sec:discussion}
In this work, we proposed the \textit{conditional potential benefit} (CPB), a general measure of the potential benefit of a targeted intervention compared to status quo. On the basis of the proposed CPB metric, we developed a framework for designing optimal intervention strategies under real-world constraints. Namely, we introduced contact rules, characterized optimal interventions that can tailor treatment on only a given proportion of the population, proposed an area under the potential benefit curve (AUPBC) summary measure, and developed efficient estimators of these treatment rules, their values, and other related quantities. These functionals can be estimated in any unconfounded observational study, and can provide valuable insight whether or not policy design is of primary interest. Such analysis can help in understanding the inefficiencies in given scientific contexts and suggest possible strategies for improvement, in that one can probe whether key subgroups with large treatment effects and/or high probability of incorrect treatment selection are being overlooked.

Several interesting extensions of our framework are possible, and will be followed up on in future work. First, one generalization of our constrained policy framework would involve variable costs that depend on covariates and, possibly, treatment options. For instance, it may be more costly to contact and induce treatment uptake for some subjects compared to others. In such cases, one could consider optimal treatment rules under a fixed total expected cost. A second issue is that of compliance, which is important in the application of dynamic treatment regimes more broadly. In particular, it will be important to extend our framework to account for possible imperfect compliance when implementing a tailored intervention. A third interesting avenue will be to explore longitudinal settings, and characterize constraints and optimal treatment rules that are defined across a series of time points. Lastly, it may be that the batch data to which the analyst has access differs in some way from the population on which an intervention is to be implemented. To deal with this, tools from the transportability and generalization literature will likely be useful \citep{cole2010, dahabreh2020, zeng2023}.

\section*{Acknowledgements}
EHK was supported by NSF CAREER
Award 2047444.

\vspace{10mm}

\begin{table}[ht]
\centering
\begin{tabular}{|c|c|c|}
\hline
Symbol & Definition & Description   \\
\hline
\hline
    $\pi$ & $\mathbb{P}[A = 1\mid X]$ & Propensity score  \\
     $\mu_a$  & $\mathbb{E}(Y \mid X, A = a)$ & Outcome model  \\
     \hline\hline
     $\tau$  & $\mu_1 - \mu_0$ & Conditional average treatment effect (CATE) \\
      $h^*$ & $\mathds{1}(\tau > 0)$ & Optimal unconstrained treatment rule\\
      $c$ & $h^*(1 - \pi) + (1 - h^*)\pi$ & Probability of suboptimal treatment\\
      $\beta$ & $\tau(h^* - \pi)$ & Conditional potential benefit (CPB) \\
      $\Delta_{\delta}^*$ & $\mathds{1}(\beta > q_{1 - \delta})$ & Optimal constrained contact rule\\
       \hline
       \hline
       $q_{1 - \delta}$ & solves $\delta = \mathbb{P}[\beta > q_{1 - \delta}]$& $(1 - \delta)$-quantile of CPB \\
       $V_{\delta}$ & $\mathbb{E}(\Delta_{\delta}^* \beta + Y)$ & Optimal constrained value \\
       $\mathcal{A}$ & $\int_0^1 \mathbb{E}\left(\{\Delta_{\delta}^* - \delta\}\beta\right)\, d\delta$ & Area under the potential benefit curve (AUPBC) \\
       $\overline{\mathcal{A}}$ & $2 \mathcal{A} / \mathbb{E}(\beta)$ & Normalized AUPBC \\
       \hline
\end{tabular}
\caption{Reference list for important functions and functionals}
\label{tab:nuisance}
\end{table}

\clearpage

\section*{References}
\vspace{-1cm}
\bibliographystyle{asa}
\bibliography{bibliography.bib}

\newcommand{\noop}[1]{}
\begin{thebibliography}{59}
\newcommand{\enquote}[1]{``#1''}
\expandafter\ifx\csname natexlab\endcsname\relax\def\natexlab#1{#1}\fi

\bibitem[{Audibert and Tsybakov(2007)}]{audibert2007}
Audibert, J.-Y. and Tsybakov, A.~B. (2007), \enquote{Fast learning rates for
  plug-in classifiers,} \textit{The Annals of {S}tatistics}, 35, 608--633.

\bibitem[{Ben-Michael et~al.(2024)Ben-Michael, Imai, and
  Jiang}]{benmichael_asymm_2024}
Ben-Michael, E., Imai, K., and Jiang, Z. (2024), \enquote{Policy {Learning}
  with {Asymmetric} {Counterfactual} {Utilities},} \textit{Journal of the
  American Statistical Association}, 0, 1--14.

\bibitem[{Bickel et~al.(1993)Bickel, Klaassen, Ritov, and Wellner}]{bkrw1993}
Bickel, P., Klaassen, C., Ritov, Y., and Wellner, J. (1993), \textit{Efficient
  and adaptive estimation for semiparametric models}, Johns Hopkins University
  Press Baltimore.

\bibitem[{Bickel and Ritov(1988)}]{bickel1988}
Bickel, P.~J. and Ritov, Y. (1988), \enquote{Estimating integrated squared
  density derivatives: sharp best order of convergence estimates,}
  \textit{Sankhy{\=a}: The Indian Journal of Statistics, Series A}, 381--393.

\bibitem[{Bonvini and Kennedy(2022)}]{bonvini2022}
Bonvini, M. and Kennedy, E.~H. (2022), \enquote{Sensitivity analysis via the
  proportion of unmeasured confounding,} \textit{Journal of the American
  Statistical Association}, 117, 1540--1550.

\bibitem[{Brand and Xie(2010)}]{brand2010}
Brand, J.~E. and Xie, Y. (2010), \enquote{Who benefits most from college?
  Evidence for negative selection in heterogeneous economic returns to higher
  education,} \textit{American sociological review}, 75, 273--302.

\bibitem[{Carneiro et~al.(2003)Carneiro, Hansen, and Heckman}]{carneiro2003}
Carneiro, P., Hansen, K.~T., and Heckman, J.~J. (2003), \enquote{Estimating
  Distributions of Treatment Effects with an Application to the Returns to
  Schooling and Measurement of the Effects of Uncertainty on College,} .

\bibitem[{Carneiro et~al.(2001)Carneiro, Heckman, and Vytlacil}]{carneiro2001}
Carneiro, P., Heckman, J., and Vytlacil, E. (2001), \enquote{Estimating the
  return to education when it varies among individuals,} Tech. rep., mimeo.

\bibitem[{Carneiro et~al.(2011)Carneiro, Heckman, and Vytlacil}]{carneiro2011}
Carneiro, P., Heckman, J.~J., and Vytlacil, E.~J. (2011), \enquote{Estimating
  marginal returns to education,} \textit{American Economic Review}, 101,
  2754--2781.

\bibitem[{Chakraborty and Moodie(2013)}]{chakraborty2013}
Chakraborty, B. and Moodie, E.~E. (2013), \enquote{Statistical methods for
  dynamic treatment regimes,} \textit{Springer-Verlag}, 10, 978--1.

\bibitem[{Chernozhukov et~al.(2018)Chernozhukov, Chetverikov, Demirer, Duflo,
  Hansen, Newey, and Robins}]{chernozhukov2018}
Chernozhukov, V., Chetverikov, D., Demirer, M., Duflo, E., Hansen, C., Newey,
  W., and Robins, J. (2018), \enquote{Double/debiased machine learning for
  treatment and structural parameters,} .

\bibitem[{Chernozhukov et~al.(2009)Chernozhukov, Fernandez-Val, and
  Galichon}]{chernozhukov2009}
Chernozhukov, V., Fernandez-Val, I., and Galichon, A. (2009),
  \enquote{Improving point and interval estimators of monotone functions by
  rearrangement,} \textit{Biometrika}, 96, 559--575.

\bibitem[{Cole and Stuart(2010)}]{cole2010}
Cole, S.~R. and Stuart, E.~A. (2010), \enquote{Generalizing evidence from
  randomized clinical trials to target populations: the ACTG 320 trial,}
  \textit{American journal of epidemiology}, 172, 107--115.

\bibitem[{D'Adamo(2021)}]{DAdamo_2023}
D'Adamo, R. (2021), \enquote{Orthogonal Policy Learning Under Ambiguity,}
  \textit{arXiv preprint arXiv:2111.10904}.

\bibitem[{Dahabreh et~al.(2020)Dahabreh, Petito, Robertson, Hern{\'a}n, and
  Steingrimsson}]{dahabreh2020}
Dahabreh, I.~J., Petito, L.~C., Robertson, S.~E., Hern{\'a}n, M.~A., and
  Steingrimsson, J.~A. (2020), \enquote{Towards causally interpretable
  meta-analysis: transporting inferences from multiple randomized trials to a
  new target population,} \textit{Epidemiology (Cambridge, Mass.)}, 31, 334.

\bibitem[{Danbury et~al.(2015)Danbury, Gould, Baudouin, Berry, Bolton,
  Borthwick, et~al.}]{danbury2015}
Danbury, C., Gould, T., Baudouin, S., Berry, A., Bolton, S., Borthwick, M.,
  et~al. (2015), \textit{Guidelines for the Provision of Intensive Care
  Services}, London: The Intensive Care Society.

\bibitem[{Foster and Syrgkanis(2023)}]{foster2023}
Foster, D.~J. and Syrgkanis, V. (2023), \enquote{Orthogonal statistical
  learning,} \textit{The Annals of Statistics}, 51, 879--908.

\bibitem[{Gy{\"o}rfi et~al.(2002)Gy{\"o}rfi, Kohler, Krzyzak, Walk,
  et~al.}]{gyorfi2002}
Gy{\"o}rfi, L., Kohler, M., Krzyzak, A., Walk, H., et~al. (2002), \textit{A
  distribution-free theory of nonparametric regression}, vol.~1, Springer.

\bibitem[{Haneuse and Rotnitzky(2013)}]{haneuse2013}
Haneuse, S. and Rotnitzky, A. (2013), \enquote{Estimation of the effect of
  interventions that modify the received treatment,} \textit{Statistics in
  medicine}, 32, 5260--5277.

\bibitem[{Harris et~al.(2018)Harris, Singer, Sanderson, Grieve, Harrison, and
  Rowan}]{harris2018}
Harris, S., Singer, M., Sanderson, C., Grieve, R., Harrison, D., and Rowan, K.
  (2018), \enquote{Impact on mortality of prompt admission to critical care for
  deteriorating ward patients: an instrumental variable analysis using critical
  care bed strain,} \textit{Intensive care medicine}, 44, 606--615.

\bibitem[{Harrison et~al.(2007)Harrison, Parry, Carpenter, Short, and
  Rowan}]{harrison2007}
Harrison, D.~A., Parry, G.~J., Carpenter, J.~R., Short, A., and Rowan, K.
  (2007), \enquote{A new risk prediction model for critical care: the Intensive
  Care National Audit \& Research Centre (ICNARC) model,} \textit{Critical care
  medicine}, 35, 1091--1098.

\bibitem[{Imai and Li(2023)}]{imai2023}
Imai, K. and Li, M.~L. (2023), \enquote{Experimental evaluation of
  individualized treatment rules,} \textit{Journal of the American Statistical
  Association}, 118, 242--256.

\bibitem[{Kallus(2022)}]{kallus_harm_2022}
Kallus, N. (2022), \enquote{What’s the {Harm}? {Sharp} {Bounds} on the
  {Fraction} {Negatively} {Affected} by {Treatment},} in \textit{36th
  {Conference} on {Neural} {Information} {Processing} {Systems}}.

\bibitem[{Keele et~al.(2019)Keele, Harris, and Grieve}]{keele2019}
Keele, L., Harris, S., and Grieve, R. (2019), \enquote{Does transfer to
  intensive care units reduce mortality? A comparison of an instrumental
  variables design to risk adjustment,} \textit{Medical care}, 57, e73--e79.

\bibitem[{Kennedy(2022)}]{kennedy2022review}
Kennedy, E.~H. (2022), \enquote{Semiparametric doubly robust targeted double
  machine learning: a review,} \textit{arXiv preprint arXiv:2203.06469}.

\bibitem[{Kennedy(2023)}]{kennedy2023}
--- (2023), \enquote{Towards optimal doubly robust estimation of heterogeneous
  causal effects,} \textit{Electronic Journal of Statistics}, 17, 3008--3049.

\bibitem[{Kennedy et~al.(2020)Kennedy, Balakrishnan, and
  G’Sell}]{kennedy2020b}
Kennedy, E.~H., Balakrishnan, S., and G’Sell, M. (2020), \enquote{Sharp
  instruments for classifying compliers and generalizing causal effects,}
  \textit{The Annals of Statistics}, 48, 2008--2030.

\bibitem[{Kosorok(2008)}]{kosorok2008}
Kosorok, M.~R. (2008), \textit{Introduction to empirical processes and
  semiparametric inference}, vol.~61, Springer.

\bibitem[{LeDell et~al.(2015)LeDell, Petersen, and van~der Laan}]{ledell2015}
LeDell, E., Petersen, M., and van~der Laan, M. (2015), \enquote{Computationally
  efficient confidence intervals for cross-validated area under the ROC curve
  estimates,} \textit{Electronic journal of statistics}, 9, 1583.

\bibitem[{Levis et~al.(2023)Levis, Bonvini, Zeng, Keele, and
  Kennedy}]{levis2023}
Levis, A.~W., Bonvini, M., Zeng, Z., Keele, L., and Kennedy, E.~H. (2023),
  \enquote{Covariate-assisted bounds on causal effects with instrumental
  variables,} \textit{arXiv preprint arXiv:2301.12106}.

\bibitem[{Luedtke and van~der Laan(2016{\natexlab{a}})}]{luedtke2016b}
Luedtke, A.~R. and van~der Laan, M.~J. (2016{\natexlab{a}}), \enquote{Optimal
  individualized treatments in resource-limited settings,} \textit{The
  {I}nternational {J}ournal of {B}iostatistics}, 12, 283--303.

\bibitem[{Luedtke and van~der Laan(2016{\natexlab{b}})}]{luedtke2016a}
--- (2016{\natexlab{b}}), \enquote{Statistical inference for the mean outcome
  under a possibly non-unique optimal treatment strategy,} \textit{Annals of
  {S}tatistics}, 44, 713.

\bibitem[{McClean et~al.(2024)McClean, Branson, and Kennedy}]{mcclean2024}
McClean, A., Branson, Z., and Kennedy, E.~H. (2024), \enquote{Nonparametric
  estimation of conditional incremental effects,} \textit{Journal of Causal
  Inference}, 12, 20230024.

\bibitem[{Murphy(2003)}]{murphy2003}
Murphy, S.~A. (2003), \enquote{Optimal dynamic treatment regimes,}
  \textit{Journal of the Royal Statistical Society: Series B (Statistical
  Methodology)}, 65, 331--355.

\bibitem[{Qian and Murphy(2011)}]{qian2011}
Qian, M. and Murphy, S.~A. (2011), \enquote{Performance guarantees for
  individualized treatment rules,} \textit{Annals of statistics}, 39, 1180.

\bibitem[{Qiu et~al.(2022)Qiu, Carone, and Luedtke}]{qiu2022}
Qiu, H., Carone, M., and Luedtke, A. (2022), \enquote{Individualized treatment
  rules under stochastic treatment cost constraints,} \textit{Journal of Causal
  Inference}, 10, 480--493.

\bibitem[{Qiu et~al.(2021)Qiu, Carone, Sadikova, Petukhova, Kessler, and
  Luedtke}]{qiu2021}
Qiu, H., Carone, M., Sadikova, E., Petukhova, M., Kessler, R.~C., and Luedtke,
  A. (2021), \enquote{Optimal individualized decision rules using instrumental
  variable methods,} \textit{Journal of the American Statistical Association},
  116, 174--191.

\bibitem[{Radcliffe(2007)}]{radcliffe2007}
Radcliffe, N. (2007), \enquote{Using control groups to target on predicted
  lift: Building and assessing uplift model,} \textit{Direct Marketing
  Analytics Journal}, 14--21.

\bibitem[{Rambachan et~al.(2022)Rambachan, Coston, and Kennedy}]{rambachan2022}
Rambachan, A., Coston, A., and Kennedy, E. (2022), \enquote{Counterfactual risk
  assessments under unmeasured confounding,} \textit{arXiv preprint
  arXiv:2212.09844}.

\bibitem[{Robins et~al.(2008)Robins, Li, Tchetgen, van~der Vaart,
  et~al.}]{robins2008}
Robins, J., Li, L., Tchetgen, E., van~der Vaart, A., et~al. (2008),
  \enquote{Higher order influence functions and minimax estimation of nonlinear
  functionals,} in \textit{Probability and statistics: essays in honor of David
  A. Freedman}, Institute of Mathematical Statistics, vol.~2, pp. 335--422.

\bibitem[{Robins(2004)}]{robins2004}
Robins, J.~M. (2004), \enquote{Optimal structural nested models for optimal
  sequential decisions,} in \textit{Proceedings of the Second Seattle Symposium
  in Biostatistics: analysis of correlated data}, Springer, pp. 189--326.

\bibitem[{{Royal College of Physicians}(2012)}]{london2012}
{Royal College of Physicians} (2012), \textit{National early warning score
  (NEWS): standardising the assessment of acute-illness severity in the NHS},
  RCP, London.

\bibitem[{Stensrud et~al.(2022)Stensrud, Laurendeau, and Sarvet}]{stensrud2022}
Stensrud, M.~J., Laurendeau, J., and Sarvet, A.~L. (2022), \enquote{Optimal
  regimes for algorithm-assisted human decision-making,} \textit{arXiv preprint
  arXiv:2203.03020}.

\bibitem[{Sun et~al.(2021)Sun, Munro, Kalashnov, Du, and Wager}]{sun2021}
Sun, H., Munro, E., Kalashnov, G., Du, S., and Wager, S. (2021),
  \enquote{Treatment allocation under uncertain costs,} \textit{arXiv preprint
  arXiv:2103.11066}.

\bibitem[{Takatsu et~al.(2023)Takatsu, Levis, Kennedy, Kelz, and
  Keele}]{takatsu2023}
Takatsu, K., Levis, A.~W., Kennedy, E., Kelz, R., and Keele, L. (2023),
  \enquote{Doubly robust machine learning for an instrumental variable study of
  surgical care for cholecystitis,} \textit{arXiv preprint arXiv:2307.06269}.

\bibitem[{Tsiatis(2007)}]{tsiatis2007}
Tsiatis, A. (2007), \textit{Semiparametric theory and missing data}, Springer
  Science \& Business Media.

\bibitem[{Tsiatis et~al.(2019)Tsiatis, Davidian, Holloway, and
  Laber}]{tsiatis2019}
Tsiatis, A.~A., Davidian, M., Holloway, S.~T., and Laber, E.~B. (2019),
  \textit{Dynamic treatment regimes: Statistical methods for precision
  medicine}, CRC press.

\bibitem[{Tsybakov(2004)}]{tsybakov2004}
Tsybakov, A.~B. (2004), \enquote{Optimal aggregation of classifiers in
  statistical learning,} \textit{The Annals of Statistics}, 32, 135--166.

\bibitem[{Tsybakov(2009)}]{tsybakov2009}
--- (2009), \textit{Introduction to nonparametric estimation}, New York:
  Springer.

\bibitem[{van~der Laan et~al.(2007)van~der Laan, Polley, and
  Hubbard}]{vanderlaan2007}
van~der Laan, M.~J., Polley, E.~C., and Hubbard, A.~E. (2007), \enquote{Super
  {L}earner,} \textit{Statistical applications in genetics and molecular
  biology}, 6.

\bibitem[{van~der Vaart and Wellner(1996)}]{vandervaart1996}
van~der Vaart, A. and Wellner, J.~A. (1996), \textit{Weak Convergence and
  Empirical Processes: With Applications to Statistics}, Springer Verlage.

\bibitem[{van~der Vaart(2000)}]{vandervaart2000}
van~der Vaart, A.~W. (2000), \textit{Asymptotic Statistics}, vol.~3, Cambridge
  University Press.

\bibitem[{Vincent et~al.(1996)Vincent, Moreno, Takala, Willatts,
  De~Mendon{\c{c}}a, Bruining, Reinhart, Suter, and Thijs}]{vincent1996}
Vincent, J.~L., Moreno, R., Takala, J., Willatts, S., De~Mendon{\c{c}}a, A.,
  Bruining, H., Reinhart, C., Suter, P., and Thijs, L.~G. (1996), \enquote{The
  SOFA (Sepsis-related Organ Failure Assessment) score to describe organ
  dysfunction/failure: On behalf of the Working Group on Sepsis-Related
  Problems of the European Society of Intensive Care Medicine (see contributors
  to the project in the appendix),} .

\bibitem[{Willis and Rosen(1979)}]{willis1979}
Willis, R.~J. and Rosen, S. (1979), \enquote{Education and self-selection,}
  \textit{Journal of political Economy}, 87, S7--S36.

\bibitem[{Young et~al.(2014)Young, Hern{\'a}n, and Robins}]{young2014}
Young, J.~G., Hern{\'a}n, M.~A., and Robins, J.~M. (2014),
  \enquote{Identification, estimation and approximation of risk under
  interventions that depend on the natural value of treatment using
  observational data,} \textit{Epidemiologic {M}ethods}, 3, 1--19.

\bibitem[{Zeng et~al.(2023)Zeng, Kennedy, Bodnar, and Naimi}]{zeng2023}
Zeng, Z., Kennedy, E.~H., Bodnar, L.~M., and Naimi, A.~I. (2023),
  \enquote{Efficient generalization and transportation,} \textit{arXiv preprint
  arXiv:2302.00092}.

\bibitem[{Zhao et~al.(2012)Zhao, Zeng, Rush, and Kosorok}]{zhao2012}
Zhao, Y., Zeng, D., Rush, A.~J., and Kosorok, M.~R. (2012), \enquote{Estimating
  individualized treatment rules using outcome weighted learning,}
  \textit{Journal of the American Statistical Association}, 107, 1106--1118.

\bibitem[{Zheng and van~der Laan(2010)}]{zheng2010}
Zheng, W. and van~der Laan, M.~J. (2010), \enquote{Asymptotic theory for
  cross-validated targeted maximum likelihood estimation,} \textit{U.C.
  Berkeley Division of Biostatistics Working Paper Series}.

\bibitem[{Zhou and Xie(2020)}]{zhou2020}
Zhou, X. and Xie, Y. (2020), \enquote{Heterogeneous treatment effects in the
  presence of self-selection: a propensity score perspective,}
  \textit{Sociological Methodology}, 50, 350--385.

\end{thebibliography}

\pagebreak

\begin{appendices}

\section{Proofs of Results in Section~\ref{sec:policy}}
\subsection{Proof of Proposition~\ref{prop:ident}}

Proposition~\ref{prop:ident} is a special case of Proposition~\ref{prop:ident-gen}, stated and proved in Appendix~\ref{app:sens}.

\subsection{Proof of Proposition~\ref{prop:optimal}}
  The first thing to show is that for any fixed $\Delta: \mathcal{X} \to [0,1]$, the optimal unconstrained rule $h^* = \mathds{1}(\tau > 0)$ optimizes $\mathbb{E}[Y(d(\Delta, h))]$ over all possible policies $h$. To see this, observe by Proposition~\ref{prop:ident} that for arbitrary policy $h: \mathcal{X} \to \{0,1\}$,
  \[\mathbb{E}[Y(d(\Delta, h^*))] - \mathbb{E}[Y(d(\Delta, h))] = \mathbb{E}(\Delta \tau (h^* - h)).\]
  But note that by definition of $h^*$,
  \[\tau (h^* - h) = h^* \tau(1 - h) - (1 - h^*)\tau h= |\tau| \{h^*(1 - h) + (1 - h^*)h\} \geq 0, \]
  from which it immediately follows that $\mathbb{E}[Y(d(\Delta, h^*))] \geq \mathbb{E}[Y(d(\Delta, h))]$.

  Now, the remaining goal is to solve a constrained linear optimization problem over a large function class. Specifically, by Proposition~\ref{prop:ident} together with what we just showed, the objective to maximize is $\mathbb{E}(\Delta\tau(h^* - \pi)) = \mathbb{E}(\Delta\beta)$ over $\left\{\Delta : \mathcal{X} \to [0,1] \mid \mathbb{E}(\Delta(X)) \leq \delta\right\}$.
  One way to prove the stated result is to characterize the solution over a finite set of decision variables when $X$ is discrete, then generalize to arbitrary $X$ via a measure-theoretic limiting argument. We will proceed instead, with the benefit of hindsight, by directly comparing the value of the objective function for an arbitrary contact rule $\Delta$ to that of $\Delta_{\delta}^*$. The argument closely follows that of Theorem 1 in \citet{kennedy2020b}.

  \vspace{3mm}
  Let $\Delta: \mathcal{X} \to [0,1]$ be an arbitrary contact
  rule such that $\mathbb{E}(\Delta(X)) \leq \delta$. Observe that,
  \begin{equation} \label{eq:diff}
    \Delta_{\delta}^*(X) - H_{\Delta} = \mathds{1}(\beta(X) > q_{1 - \delta})
    \left\{1 - H_{\Delta}\right\} - \mathds{1}(\beta(X) \leq q_{1 -
      \delta})H_{\Delta},
  \end{equation}
  hence, taking expectations,
  \begin{equation} \label{eq:diffex}
    0 \leq \delta \mathbb{E}(1 - H_{\Delta} \mid \beta(X) > q_{1 -
      \delta}) - (1 - \delta) \mathbb{E}(H_{\Delta} \mid \beta(X)
    \leq q_{1 - \delta}) ,
  \end{equation}
  since
  $\mathbb{E}(\Delta_{\delta}^*(X)) = \delta \geq \mathbb{E}(\Delta(X))$ by
  construction. Therefore, using \eqref{eq:diff},
  \begin{align*}
    &\mathbb{E}(\Delta_{\delta}^*(X) \beta(X)) - \mathbb{E}(\Delta(X) \beta(X)) \\
    &= \mathbb{E}(\beta(X) \{\Delta_{\delta}^*(X) - H_{\Delta}\}) \\
    &= \mathbb{E}(\beta(X)\{\mathds{1}(\beta(X) > q_{1 - \delta})
      \left\{1 - H_{\Delta}\right\} - \mathds{1}(\beta(X) \leq q_{1 -
      \delta})H_{\Delta}\}) \\
    & \geq q_{1 - \delta}\left\{\delta \mathbb{E}(1 - H_{\Delta}
      \mid \beta(X) > q_{1 - \delta}) - (1 - \delta)
      \mathbb{E}(H_{\Delta} \mid \beta(X)
      \leq q_{1 - \delta})\right\} \\
    & \geq 0,
  \end{align*}
  using \eqref{eq:diffex}. This proves the result.

\subsection{Proof of Corollary~\ref{cor:gap}}
The policy $d(1, h^*)$ is the same as $h^*$, and is optimal in the context of Proposition~\ref{prop:optimal} under the vacuous budget constraint that $\mathbb{E}(\Delta(X)) \leq 1$. So by this last result,
\[\mathbb{E}(Y(h^*)) - \mathbb{E}[Y(d(\Delta_{\delta}^*, h^*))] = \mathbb{E}(\beta + Y - \Delta_{\delta}^*\beta - Y) = \mathbb{E}((1 - \Delta_{\delta}^*)\beta)).\]
The bound is obtained by noting that $1 - \Delta_{\delta}^* = \mathds{1}(\beta \leq q_{1 - \delta})$, so
\[\mathbb{E}(Y(h^*)) - \mathbb{E}[Y(d(\Delta_{\delta}^*, h^*))] = \mathbb{E}\left(\mathds{1}(\beta \leq q_{1 - \delta}) \beta\right) \leq q_{1 - \delta}\mathbb{P}[\beta \leq q_{1 - \delta}] = q_{1 - \delta}(1 - \delta),\]
when the quantile is unique.

\subsection{Proof of Proposition~\ref{prop:AUPBC}}
Observe that
\[
    \mathcal{A}=\int_0^1 \mathbb{E}\left((\Delta_{\delta}^* - \delta)\beta\right) \, d \delta
    = \mathbb{E}\left(\beta\int_0^1 \{\Delta_{\delta}^* - \delta\} \, d \delta\right)
\]
by Fubini's theorem (assuming implicitly that $\mathbb{E}(\beta) < \infty$). But note that since $F_{\beta}$ is strictly monotone, $q_{1 - \delta} = F^{-1}_{\beta}(1 - \delta)$ and $\beta > q_{1 - \delta} \iff F_{\beta}(\beta) > 1 - \delta$. Thus,
\[\int_0^1 \{\Delta_{\delta}^* - \delta\} \, d \delta = \int_0^1 \mathds{1}(\beta > q_{1 - \delta})\, d \delta - \frac{1}{2} = \int_{1 - F_{\beta}(\beta)}^1\, d \delta - \frac{1}{2} = F_{\beta}(\beta) - \frac{1}{2}.\]
This shows that $\mathcal{A} = \mathbb{E}\left(\beta\left\{F_{\beta}(\beta) - \frac{1}{2}\right\}\right) = \mathrm{Cov}(\beta, F_{\beta}(\beta))$, since our assumptions on $F_{\beta}$ imply $F_\beta(\beta) \sim \mathrm{Unif}(0,1)$ and $\mathbb{E}(F_{\beta}(\beta)) = \frac{1}{2}$.

\section{Sensitivity to Unmeasured Confounders}\label{app:sens}
In this appendix, we consider potential violations of the no unmeasured confounding assumption (Assumption~\ref{ass:NUC}), and derive sharp bounds on the value $\mathbb{E}(Y(d(\Delta_{\delta}^*, h^*)))$ under an outcome-based sensitivity model. We also provide a characterization of the true optimal treatment rule $d(\Delta_{\delta}^{\dagger}, h^{\dagger})$ in the absence of Assumption~\ref{ass:NUC}, and bound the gap between its value and that of $d(\Delta_{\delta}^*, h^*)$ in the same sensitivity model.

Throughout this appendix, we will work under Assumptions~\ref{ass:consistency} and \ref{ass:positivity}---positivity is required for $\mu_a(x)$ to be well-defined for each $x$. Let $\nu_a(X) = \mathbb{E}(Y(a) \mid X, A = 1 - a)$, for $a \in \{0,1\}$, so that $\mu_1^{\dagger} - \mu_1 = (1 - \pi)(\nu_1 - \mu_1)$ and $\mu_0^{\dagger} - \mu_0 = \pi(\nu_0 - \mu_0)$. We begin by providing a general result which implies Proposition~\ref{prop:ident}, but is also useful for the ensuing sensitivity analysis results in this appendix.

\begin{proposition}\label{prop:ident-gen}
    Under Assumption~\ref{ass:consistency}, and given a contact rule $\Delta: \mathcal{X} \to [0,1]$ and a policy $h:\mathcal{X}\to\{0,1\}$, the conditional value of the policy $d(\Delta, h)$, defined by~\eqref{eq:contact-policy}, is given by
    \[\mathbb{E}[Y(d(\Delta, h)) \mid X]= \Delta \left\{h \tau^{\dagger} + \mu_0^{\dagger} - m\right\} + m,\]
    where $m(X) = \mathbb{E}(Y \mid X)$. If we further assert Assumptions~\ref{ass:positivity} and \ref{ass:NUC}, then the conditional value is identified by $\Delta\tau (h - \pi) + m$, so that the marginal value of the policy $d(\Delta, h)$ is 
    \[\mathbb{E}[Y(d(\Delta, h))] = \mathbb{E}\left(\Delta\tau (h - \pi) + Y\right).\]
\end{proposition}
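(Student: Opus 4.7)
The plan is to reduce the counterfactual $Y(d(\Delta,h))$ to a mixture of $Y(h)$ and the observed $Y$ via consistency, and then take conditional expectations, using that $H_\Delta$ is conditionally independent of $(A, Y(0), Y(1))$ given $X$. Concretely, write $d = d(\Delta,h) = H_\Delta h + (1-H_\Delta)A$. On the event $\{H_\Delta = 1\}$ we have $d = h \in \{0,1\}$, so by Assumption~\ref{ass:consistency}, $Y(d) = hY(1) + (1-h)Y(0) =: Y(h)$. On the event $\{H_\Delta = 0\}$ we have $d = A$, so $Y(d) = AY(1) + (1-A)Y(0) = Y(A) = Y$, again by Assumption~\ref{ass:consistency}. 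Combining,
\[
Y(d(\Delta,h)) = H_\Delta\, Y(h) + (1-H_\Delta)\, Y.
\]

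Next I would condition on $X$ and exploit $H_\Delta \independent (A, Y(0), Y(1)) \mid X$ with $H_\Delta \mid X \sim \mathrm{Bernoulli}(\Delta(X))$. Since $Y$ is a function of $(A, Y(0), Y(1))$, this gives
\[
\mathbb{E}[Y(d(\Delta,h)) \mid X] = \Delta(X)\,\mathbb{E}[Y(h) \mid X] + (1-\Delta(X))\,\mathbb{E}[Y \mid X].
\]
Because $h \equiv h(X)$ is $X$-measurable,
\[
\mathbb{E}[Y(h) \mid X] = h\, \mu_1^{\dagger} + (1-h)\,\mu_0^{\dagger} = h\,\tau^{\dagger} + \mu_0^{\dagger},
\]
and $\mathbb{E}[Y \mid X] = m(X)$. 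Substituting yields the first claim,
\[
\mathbb{E}[Y(d(\Delta,h)) \mid X] = \Delta\,\{h\tau^{\dagger} + \mu_0^{\dagger} - m\} + m.
\]

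For the identified form, I would invoke Assumptions~\ref{ass:positivity}--\ref{ass:NUC}: the standard argument gives $\mu_a^{\dagger} \equiv \mu_a$ (hence $\tau^{\dagger} \equiv \tau$), and by iterated expectation $m = \pi\mu_1 + (1-\pi)\mu_0$, so $\mu_0 - m = -\pi\tau$. Therefore $h\tau + \mu_0 - m = \tau(h - \pi)$, which collapses the conditional value to $\Delta\tau(h - \pi) + m$. Marginalizing over $X$ and noting $\mathbb{E}(m(X)) = \mathbb{E}(Y)$ gives the stated marginal identity $\mathbb{E}[Y(d(\Delta,h))] = \mathbb{E}(\Delta\tau(h - \pi) + Y)$.

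There is no real obstacle here; the proof is a direct calculation. The only point worth flagging is the careful use of consistency in two pieces (for the static rules $h$ and $A$), and the fact that the conditional independence of $H_\Delta$ from $(A, Y(0), Y(1))$ given $X$, built into the definition of $d(\Delta,h)$, is what allows $\Delta(X) = \mathbb{E}[H_\Delta \mid X]$ to factor cleanly out of both terms.
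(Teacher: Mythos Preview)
Your proof is correct and follows essentially the same approach as the paper: decompose $Y(d(\Delta,h))$ into $H_\Delta Y(h) + (1-H_\Delta)Y$ via consistency, take conditional expectations using $H_\Delta \independent (A,Y(0),Y(1)) \mid X$, and then simplify under Assumptions~\ref{ass:positivity}--\ref{ass:NUC} using $\mu_a^\dagger = \mu_a$ and $m = \pi\tau + \mu_0$. The only cosmetic difference is that you argue case-by-case on $\{H_\Delta = 1\}$ and $\{H_\Delta = 0\}$, whereas the paper writes out the algebraic expansion $Y(d) = Y(1)\{H_\Delta h + (1-H_\Delta)A\} + Y(0)\{H_\Delta(1-h) + (1-H_\Delta)(1-A)\}$ first; these are equivalent.
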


\begin{proof}
    Note that by consistency (i.e., Assumption~\ref{ass:consistency}),
  \begin{align*}
      Y(d(\Delta, h)) &= Y(1) \left\{H_{\Delta} h(X) + \left[1 -
        H_{\Delta}\right]A\right\} + Y(0) \left\{H_{\Delta}\left[1 -
        h(X)\right] + \left[1 - H_{\Delta}\right](1 - A)\right\} \\
        &= H_{\Delta}\left\{h(X) Y(1) + (1 - h(X)) Y(0))\right\} + (1 - H_{\Delta})Y,
  \end{align*}
  so that, as $H_{\Delta} \ind (A, Y(0), Y(1)) \mid X$,
  \begin{align*}
    & \mathbb{E}(Y(d(\Delta, h))) \mid X) \\
    &= \Delta(X)\left\{h(X) \mathbb{E}(Y(1) \mid X) + (1 - h(X)) \mathbb{E}(Y(0) \mid X)\right\} + (1-\Delta(X))\mathbb{E}(Y \mid X) \\
    &= \Delta(X)\left\{h(X) \mu_1^{\dagger}(X) + (1 - h(X))\mu_{0}^{\dagger}(X)\right\} + (1 - \Delta(X))m(X) \\
    &= \Delta(X) \left\{h(X) \tau^{\dagger}(X) + \mu_0^{\dagger}(X) - m(X)\right\} + m(X),
  \end{align*}
  where the second equality is by definition of $\mu_a^{\dagger}$ and $m$, and the third by rearranging. When positivity (i.e., Assumption~\ref{ass:positivity}) holds, then $m = \pi\mu_1 + (1 - \pi)\mu_0 = \pi\tau + \mu_0$, so we further have, omitting inputs,
  \[
    \mathbb{E}(Y(d(\Delta, h))) \mid X)
    = \Delta \left\{h\big(\tau^{\dagger} - \tau\big) + \mu_0^{\dagger} - \mu_0\right\} + \Delta \tau \left\{h - \pi\right\} + m.\]
    Lastly, when Assumption~\ref{ass:NUC} holds, we have $\tau^{\dagger} = \tau$ and $\mu_0^{\dagger} = \mu_0$, so
    \[ \mathbb{E}(Y(d(\Delta, h))) \mid X) = \Delta \tau \left\{h - \pi\right\} + m \implies \mathbb{E}[Y(d(\Delta, h))]  = \mathbb{E}\left(\Delta \tau \left\{h - \pi\right\} + Y\right),\]
  by iterated expectations.
\end{proof}

For a sensitivity parameter $\Gamma \geq 0$, we consider the following outcome-based sensitivity model
\begin{equation}\label{eq:sens-outcome}
|\nu_a - \mu_a| \leq \Gamma, \text{ for } a \in \{0,1\}.
\end{equation}
That is, for a fixed $\Gamma$, model \eqref{eq:sens-outcome} says that conditional mean outcomes for the counterfactual $Y(a)$ differ by at most $\Gamma$ between treatment groups. The following result gives sharp bounds for $\mathbb{E}(Y(d(\Delta_{\delta}^*, h^*)))$ under this model.

\begin{proposition}\label{prop:sens-bounds}
    Under Assumptions~\ref{ass:consistency} and \ref{ass:positivity}, and sensitivity model~\eqref{eq:sens-outcome},
    \[\mathbb{E}(Y(d(\Delta_{\delta}^*, h^*))) \in \big[V_{\delta} -\Gamma\mathbb{E}(\Delta_{\delta}^* c), V_{\delta} +\Gamma\mathbb{E}(\Delta_{\delta}^* c)\big],\]
    where $c = h^*(1 - \pi) + (1 - h^*)\pi = \mathbb{P}[A \neq h^*(X) \mid X]$ as defined in Section~\ref{sec:estimands}. Without further conditions, these bounds are sharp.
\end{proposition}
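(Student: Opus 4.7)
The plan is to compute $\mathbb{E}[Y(d(\Delta_\delta^*, h^*))]$ explicitly using the general identification result in Proposition~\ref{prop:ident-gen} (which holds under only Assumption~\ref{ass:consistency} and \ref{ass:positivity}), then express the deviation from $V_\delta$ in terms of the unidentified bias quantities $\nu_a - \mu_a$, and finally invoke the sensitivity model to bound this deviation. Sharpness will be handled by explicit construction.

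First I would instantiate Proposition~\ref{prop:ident-gen} at $(\Delta, h) = (\Delta_\delta^*, h^*)$, giving
\[
\mathbb{E}[Y(d(\Delta_\delta^*, h^*))] \;=\; \mathbb{E}\!\left(\Delta_\delta^*\bigl\{h^* \tau^\dagger + \mu_0^\dagger - m\bigr\} + m\right).
\]
Next, I would use the standard decompositions $\mu_1^\dagger = \mu_1 + (1-\pi)(\nu_1 - \mu_1)$ and $\mu_0^\dagger = \mu_0 + \pi(\nu_0 - \mu_0)$ (which are immediate from iterated expectations and the definition of $\nu_a$) to substitute for $\tau^\dagger$ and $\mu_0^\dagger$. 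Combined with $m = \pi\tau + \mu_0$ from positivity, a short algebraic rearrangement collapses the ``identified part'' to $\tau(h^* - \pi) = \beta$, leaving the remainder
\[
\mathbb{E}[Y(d(\Delta_\delta^*, h^*))] - V_\delta \;=\; \mathbb{E}\!\left(\Delta_\delta^*\bigl\{h^*(1 - \pi)(\nu_1 - \mu_1) + (1 - h^*)\pi(\nu_0 - \mu_0)\bigr\}\right).
\]

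The upper and lower bounds now follow immediately: the coefficients $h^*(1-\pi)$ and $(1 - h^*)\pi$ are nonnegative and sum exactly to $c$, so the triangle inequality combined with $|\nu_a - \mu_a| \leq \Gamma$ delivers $\bigl|\mathbb{E}[Y(d(\Delta_\delta^*, h^*))] - V_\delta\bigr| \leq \Gamma\,\mathbb{E}(\Delta_\delta^* c)$. This part should be purely mechanical once the decomposition above is in hand.

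For sharpness, I would exhibit two full-data laws for $(X, A, Y(0), Y(1))$ that are compatible with the observed marginal $\mathbb{P}$ on $(X, A, Y)$, satisfy the sensitivity constraint, and attain the two endpoints. To achieve the upper (lower) endpoint one takes $\nu_a \equiv \mu_a + \Gamma$ ($\nu_a \equiv \mu_a - \Gamma$) for $a \in \{0,1\}$; the coefficients of $\nu_a - \mu_a$ in the display above are nonnegative, so this choice saturates the bound. The only nontrivial step is verifying that such $\nu_a$ can actually be realized by a valid joint distribution---one can concretely place a point mass for the counterfactual branch at the target conditional mean (e.g., under $A = 1-a$, let $Y(a) \mid X$ be degenerate at $\mu_a(X) \pm \Gamma$), which is compatible with the observed law because that law only constrains $Y(a)$ on the event $A = a$. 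This is the step I expect to require the most care, but no genuine obstacle arises since the sensitivity model only constrains conditional means and imposes no restrictions on the shape of the counterfactual distributions.
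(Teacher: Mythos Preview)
Your proposal is correct and follows essentially the same route as the paper: apply Proposition~\ref{prop:ident-gen}, substitute the decompositions $\mu_a^\dagger - \mu_a$ in terms of $\nu_a - \mu_a$, collapse the identified part to $\beta$, and bound the remainder by $\Gamma\,\mathbb{E}(\Delta_\delta^* c)$ via the nonnegative coefficients summing to $c$; sharpness is likewise shown by taking $\nu_a - \mu_a \equiv \pm\Gamma$. Your write-up simply spells out the algebra and the sharpness construction in a bit more detail than the paper does.
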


\begin{proof}
    Note that by the first part of Proposition~\ref{prop:ident-gen},
    \begin{align*}
        \left|\mathbb{E}(Y(d(\Delta_{\delta}^*, h^*))) - V_{\delta}\right|  &= \left|\mathbb{E}\left(\Delta_{\delta}^*\left\{h^*(\mu_1^{\dagger} - \mu_1) + (1 - h^*)(\mu_0^{\dagger} - \mu_0)\right\}\right)\right| \\
        &= \left|\mathbb{E}\left(\Delta_{\delta}^*\left\{h^*(1 - \pi)(\nu_1 - \mu_1) + (1 - h^*)\pi(\nu_0 - \mu_0)\right\}\right)\right| \\
        & \leq \Gamma\mathbb{E}(\Delta_{\delta}^* c),
    \end{align*}
    by model~\eqref{eq:sens-outcome} and the definition of $c$. To see that this bound is attainable, we can trivially construct a data generating mechanism for $(X, A, Y(0), Y(1))$ compatible with the observed data distribution $\mathbb{P}$, such that either $\nu_a - \mu_a \equiv - \Gamma$ or $\nu_a - \mu_a \equiv \Gamma$ for both $a$.
\end{proof}

The bounds in Proposition~\ref{prop:sens-bounds} take a relatively simple form, and estimation can proceed almost exactly as described for $\widehat{V}_{\delta}$ in Section~\ref{sec:estimation-value}, but with extra terms. To obtain efficient similar asymptotic behavior as described in Theorem~\ref{thm:est-value} for $\widehat{V}_{\delta}$, one should take
\[\widehat{V}_{\delta} \pm \Gamma\mathbb{P}_n \left[\widehat{\Delta}_{\delta}^* \left\{\widehat{h}^* + (1 - 2\widehat{h}^*)(A - \widehat{\pi})\right\}\right].\]
This results in similar remainder terms and convergence rates, though we omit details here.

One feature of these bounds is that they collapse to the identified functional $V_{\delta}$ when either $\Gamma \to 0$ or $\delta \to 0$---note that $\mathbb{E}(\Delta_{\delta}^* c) \leq \mathbb{E}(\Delta_{\delta}^*) = \delta$. The former property is obvious, but the latter is especially nice in settings with low budgets. In particular, when $\delta$ is small, the performance of the treatment rule $d(\Delta_{\delta}^*, h^*)$ will not differ much from what we would estimate under Assumption~\ref{ass:NUC}, even under fairly substantial violations.

While inference about the putative treatment rule $d(\Delta_{\delta}^*, h^*)$ may be of interest, one may also wish to know the actual optimal rule when Assumption~\ref{ass:NUC} is violated. Our next result gives this characterization.

\begin{proposition}\label{prop:sens-optimal}
    Under Assumption~\ref{ass:consistency}, define $h^{\dagger} = \mathds{1}(\tau^{\dagger} > 0)$ and $\beta^{\dagger} = h^{\dagger}\tau^{\dagger} + \mu_{0}^{\dagger} - m$, and assume the $(1-\delta)$-quantile of $\beta^{\dagger}$, $q_{1-\delta}^{\dagger}$, is unique. Then the contact rule $\Delta_{\delta}^{\dagger} = \mathds{1}(\beta^{\dagger} > q_{1 - \delta}^{\dagger})$ and policy $h^{\dagger}$ jointly maximize the value $\mathbb{E}[Y(d(\Delta, h))]$ over all rules $d(\Delta, h)$ of the form~\eqref{eq:contact-policy}, subject to the constraint that $\mathbb{E}(\Delta(X)) \leq \delta$, with maximal value $\mathbb{E}(\Delta_{\delta}^{\dagger} \beta^{\dagger} + Y)$.
\end{proposition}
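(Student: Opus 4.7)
The plan is to mirror the two-step argument used for Proposition~\ref{prop:optimal}, but now starting from the more general identification formula provided by Proposition~\ref{prop:ident-gen}, which does not require Assumption~\ref{ass:NUC}. Specifically, for any contact rule $\Delta:\mathcal{X}\to[0,1]$ and any policy $h:\mathcal{X}\to\{0,1\}$, Proposition~\ref{prop:ident-gen} gives
\[
\mathbb{E}[Y(d(\Delta,h))] \;=\; \mathbb{E}\!\left(\Delta\{h\tau^{\dagger} + \mu_0^{\dagger} - m\} + m\right),
\]
which is the analogue of Proposition~\ref{prop:ident} with $\tau,\mu_0$ replaced by the full-data counterparts $\tau^{\dagger}, \mu_0^{\dagger}$. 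All subsequent reasoning will operate on this representation.

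First I would fix $\Delta$ and show that $h^{\dagger}=\mathds{1}(\tau^{\dagger}>0)$ maximizes the objective pointwise. Taking the difference of the values at $h^{\dagger}$ and any competing $h$ yields
\[
\mathbb{E}[Y(d(\Delta,h^{\dagger}))] - \mathbb{E}[Y(d(\Delta,h))] \;=\; \mathbb{E}\!\left(\Delta\,\tau^{\dagger}(h^{\dagger}-h)\right),
\]
and exactly as in the proof of Proposition~\ref{prop:optimal} we can write
\[
\tau^{\dagger}(h^{\dagger}-h) \;=\; |\tau^{\dagger}|\left\{h^{\dagger}(1-h) + (1-h^{\dagger})h\right\} \;\geq\; 0.
\]
Since $\Delta\geq 0$, the difference is nonnegative, so $h^{\dagger}$ is optimal for any fixed $\Delta$. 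This reduces the joint problem to optimizing over $\Delta$ alone with $h=h^{\dagger}$ plugged in, for which the identification formula collapses to
\[
\mathbb{E}[Y(d(\Delta,h^{\dagger}))] \;=\; \mathbb{E}(\Delta\beta^{\dagger}) + \mathbb{E}(Y),
\]
with $\beta^{\dagger} = h^{\dagger}\tau^{\dagger} + \mu_0^{\dagger} - m$ by definition.

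It then remains to solve $\max_{\Delta} \mathbb{E}(\Delta\beta^{\dagger})$ subject to $\mathbb{E}(\Delta(X))\leq \delta$. This is the exact linear program already analyzed in the proof of Proposition~\ref{prop:optimal}: replaying that comparison verbatim with $\beta\mapsto \beta^{\dagger}$ and $q_{1-\delta}\mapsto q_{1-\delta}^{\dagger}$, for any feasible $\Delta$ one shows
\[
\mathbb{E}(\Delta_{\delta}^{\dagger}\beta^{\dagger}) - \mathbb{E}(\Delta\beta^{\dagger})
\;\geq\; q_{1-\delta}^{\dagger}\!\left\{\delta\,\mathbb{E}(1-H_{\Delta}\mid \beta^{\dagger}>q_{1-\delta}^{\dagger}) - (1-\delta)\,\mathbb{E}(H_{\Delta}\mid \beta^{\dagger}\leq q_{1-\delta}^{\dagger})\right\}\geq 0,
\]
where the last inequality uses $\mathbb{E}(\Delta_{\delta}^{\dagger})=\delta\geq \mathbb{E}(\Delta)$ together with uniqueness of $q_{1-\delta}^{\dagger}$. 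Combining the two steps establishes that $(\Delta_{\delta}^{\dagger}, h^{\dagger})$ is jointly optimal, with achieved value $\mathbb{E}(\Delta_{\delta}^{\dagger}\beta^{\dagger} + Y)$.

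There is no real obstacle here: Proposition~\ref{prop:ident-gen} isolates exactly the piece of the argument that depended on Assumption~\ref{ass:NUC} in Proposition~\ref{prop:optimal}, and once that piece is replaced with its full-data analogue both stages of the proof go through unchanged. The only point worth flagging is cosmetic: unlike $\beta = \tau(h^*-\pi)$, the sensitivity-model CPB $\beta^{\dagger}$ does not factor through a propensity contrast because, absent no unmeasured confounding, $m\neq \pi\tau^{\dagger}+\mu_0^{\dagger}$ in general; one must therefore keep $\beta^{\dagger}$ in its unsimplified form $h^{\dagger}\tau^{\dagger}+\mu_0^{\dagger}-m$ throughout.
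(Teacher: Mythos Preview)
Your proposal is correct and follows exactly the approach in the paper: the paper's proof is a one-line reference to Proposition~\ref{prop:ident-gen} together with the argument of Proposition~\ref{prop:optimal}, and you have simply spelled out those two steps explicitly (first showing $h^{\dagger}$ is optimal for any fixed $\Delta$, then replaying the quantile-threshold comparison for $\Delta$ with $\beta$ replaced by $\beta^{\dagger}$). Your closing observation that $\beta^{\dagger}$ cannot be factored through a propensity contrast absent Assumption~\ref{ass:NUC} is a nice clarification but not needed for the argument itself.
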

\begin{proof}
    This follows by the first part of Proposition~\ref{prop:ident-gen}, and the same argument as the proof of Proposition~\ref{prop:optimal}.
\end{proof}

Proposition~\ref{prop:sens-optimal} tells us two things about the setting where there are unmeasured confounders: (1) the optimal unconstrained policy based on covariates $X$ would be a function (namely $h^{\dagger}$) of true CATE, $\tau^{\dagger}$, and (2) the optimal contact rule would be based on the true CPB, $\beta^{\dagger} = \mathbb{E}(Y(h^{\dagger}) - Y \mid X)$. Both of these are, of course, not identified when Assumption~\ref{ass:NUC} is violated. In practice, we may wish to know how much the observational treatment rule $d(\Delta_{\delta}^*, h^*)$ differs from the optimal rule $d(\Delta_{\delta}^\dagger, h^\dagger)$ in terms of value. Our final result in this section provides a bound on this gap.

\begin{proposition}\label{prop:sens-optimal-gap}
    Under Assumptions~\ref{ass:consistency} and \ref{ass:positivity}, and model~\eqref{eq:sens-outcome},
    \[\left|\mathbb{E}[Y(d(\Delta_{\delta}^\dagger, h^\dagger))] - \mathbb{E}[Y(d(\Delta_{\delta}^*, h^*))]\right| \leq \Gamma (4 + \delta)\]
\end{proposition}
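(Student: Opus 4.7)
The plan is to exploit Proposition~\ref{prop:sens-optimal} together with the fact that $(\Delta_\delta^*, h^*)$ maximizes the NUC-based identification functional. For any contact rule $\Delta$ with $\mathbb{E}(\Delta)\leq \delta$ and policy $h:\mathcal{X}\to\{0,1\}$, define $V_\delta^{\mathrm{true}}(\Delta,h) := \mathbb{E}(Y(d(\Delta,h)))$, the true value, and $V_\delta^{\mathrm{obs}}(\Delta,h) := \mathbb{E}(\Delta\tau(h-\pi)+Y)$, the value identified under Assumption~\ref{ass:NUC}. Then $V_\delta^\dagger = V_\delta^{\mathrm{true}}(\Delta_\delta^\dagger, h^\dagger)$ and $V_\delta^* = V_\delta^{\mathrm{true}}(\Delta_\delta^*, h^*)$, while $V_\delta = V_\delta^{\mathrm{obs}}(\Delta_\delta^*, h^*)$ is the maximum of $V_\delta^{\mathrm{obs}}$ over feasible rules by Proposition~\ref{prop:optimal}.

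The first step is to establish, for any feasible $(\Delta,h)$, the uniform gap bound $|V_\delta^{\mathrm{true}}(\Delta,h) - V_\delta^{\mathrm{obs}}(\Delta,h)| \leq \Gamma\delta$. Using the first part of Proposition~\ref{prop:ident-gen} for $V_\delta^{\mathrm{true}}$, together with the identity $m = \pi\tau + \mu_0$ (from positivity) and the decompositions $\mu_1^\dagger - \mu_1 = (1-\pi)(\nu_1-\mu_1)$, $\mu_0^\dagger - \mu_0 = \pi(\nu_0-\mu_0)$, the cross terms collapse and I would rewrite the difference as
\[V_\delta^{\mathrm{true}}(\Delta,h) - V_\delta^{\mathrm{obs}}(\Delta,h) = \mathbb{E}\big(\Delta\big\{h(1-\pi)(\nu_1-\mu_1) + (1-h)\pi(\nu_0-\mu_0)\big\}\big).\]
Applying the sensitivity model $|\nu_a-\mu_a|\leq \Gamma$ together with $h(1-\pi)+(1-h)\pi \leq 1$ and $\mathbb{E}(\Delta)\leq \delta$ then yields the gap bound.

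The second step combines this with two optimality observations. Since $(\Delta_\delta^\dagger, h^\dagger)$ is feasible---its contact rule satisfies $\mathbb{E}(\Delta_\delta^\dagger)=\delta$ under the quantile-uniqueness assumption inherited from Proposition~\ref{prop:sens-optimal}---the optimality of $(\Delta_\delta^*, h^*)$ for $V_\delta^{\mathrm{obs}}$ gives $V_\delta^{\mathrm{obs}}(\Delta_\delta^\dagger, h^\dagger) \leq V_\delta$. Applying the gap bound at $(\Delta_\delta^\dagger, h^\dagger)$ then yields $V_\delta^\dagger \leq V_\delta + \Gamma\delta$. Dually, applying the gap bound at $(\Delta_\delta^*, h^*)$ gives $V_\delta^* \geq V_\delta - \Gamma\delta$. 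Subtracting these, and noting $V_\delta^\dagger \geq V_\delta^*$ since by Proposition~\ref{prop:sens-optimal} $V_\delta^\dagger$ maximizes $V_\delta^{\mathrm{true}}$ over feasible rules, I conclude $|V_\delta^\dagger - V_\delta^*| \leq 2\Gamma\delta \leq \Gamma(4+\delta)$ for every $\delta\in[0,1]$.

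The main obstacle is the algebraic reorganization in the first step: the key is that the gap $V_\delta^{\mathrm{true}}-V_\delta^{\mathrm{obs}}$ collapses to a clean expectation of $\Delta$ against a convex combination of the sensitivity deviations $\nu_a-\mu_a$, rather than producing uncontrolled terms involving the optimality indicators $h^*$ or $h^\dagger$. Once that identity is in place, the optimality sandwich and the trivial comparison $2\delta \leq 4+\delta$ complete the proof.
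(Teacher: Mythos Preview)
Your proof is correct and in fact sharper than the paper's. Your key identity
\[
V_\delta^{\mathrm{true}}(\Delta,h) - V_\delta^{\mathrm{obs}}(\Delta,h) = \mathbb{E}\big(\Delta\{h(1-\pi)(\nu_1-\mu_1) + (1-h)\pi(\nu_0-\mu_0)\}\big)
\]
is valid, and your optimality sandwich yields the tighter bound $|V_\delta^\dagger - V_\delta^*| \leq 2\Gamma\delta$, which you then relax to $\Gamma(4+\delta)$ only to match the stated proposition. The one small imprecision is your citation of Proposition~\ref{prop:optimal}: as written, that result is stated under Assumption~\ref{ass:NUC}, but what you actually need is the purely observational optimization claim that $(\Delta_\delta^*,h^*)$ maximizes $\mathbb{E}(\Delta\tau(h-\pi))$ subject to $\mathbb{E}(\Delta)\leq\delta$, which follows from the \emph{proof} of Proposition~\ref{prop:optimal} without invoking NUC.

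The paper takes a different route: it decomposes the value gap directly as
\[
\mathbb{E}\left(\{\Delta_\delta^\dagger - \Delta_\delta^*\}(\beta^\dagger - q_{1-\delta}^\dagger)\right) + \mathbb{E}\left(\Delta_\delta^*\tau^\dagger\{h^\dagger - h^*\}\right),
\]
then bounds each term pointwise via the indicator inequality of Lemma~\ref{lemma:diff}, using $|\tau^\dagger - \tau|\leq\Gamma$, $\lVert\beta^\dagger-\beta\rVert_\infty\leq 2\Gamma$, and the resulting quantile perturbation $|q_{1-\delta}^\dagger - q_{1-\delta}|\leq 2\Gamma$. This margin-style argument gives $\Gamma\delta$ for the second term but only $4\Gamma$ for the first, hence the $\Gamma(4+\delta)$ total. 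Your sandwich avoids the lossy Lemma~\ref{lemma:diff} step entirely by exploiting that both rules are maximizers of nearby functionals; the authors themselves note after the proposition that they suspect the bound can be improved, and your argument does exactly that.
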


\begin{proof}
    By the first part of Proposition~\ref{prop:ident-gen},
    \begin{align*}
    &\mathbb{E}[Y(d(\Delta_{\delta}^\dagger, h^\dagger))] - \mathbb{E}[Y(d(\Delta_{\delta}^*, h^*))] \\
    &= \mathbb{E}\left(\Delta_{\delta}^{\dagger} \left\{h^{\dagger} \tau^{\dagger} + \mu_0^{\dagger} - m\right\} - \Delta_{\delta}^* \left\{h^* \tau^{\dagger} + \mu_0^{\dagger} - m\right\}\right) \\
    &= \mathbb{E}\left(\left\{\Delta_{\delta}^{\dagger} - \Delta_{\delta}^*\right\}\beta^{\dagger} + \Delta_{\delta}^* \tau^{\dagger}\left\{h^{\dagger} - h^*\right\}\right) \\
    &= \mathbb{E}\left(\left\{\Delta_{\delta}^{\dagger} - \Delta_{\delta}^*\right\}(\beta^{\dagger} - q_{1 - \delta}^{\dagger})\right) + \mathbb{E}\left(\Delta_{\delta}^* \tau^{\dagger}\left\{h^{\dagger} - h^*\right\}\right),
    \end{align*}
    using that fact that $\mathbb{E}(\Delta_{\delta}^*) = \mathbb{E}(\Delta_{\delta}^\dagger) = \delta$.
    % Note that
    % \[h^{\dagger} \tau^{\dagger} + \mu_0^{\dagger} - m = h^{\dagger}(\mu_1^{\dagger} - \mu_1) + (1 - h^{\dagger})(\mu_0^{\dagger} - \mu_0) + (h^{\dagger} - \pi)\tau\]
    By Lemma~\ref{lemma:diff}, and \eqref{eq:sens-outcome},
    $|\tau^{\dagger}||h^{\dagger} - h^*| \leq |\tau^{\dagger} - \tau| \leq \Gamma$, so $\mathbb{E}\left(\Delta_{\delta}^* \tau^{\dagger}\left\{h^{\dagger} - h^*\right\}\right) \leq \Gamma \delta$. Similarly, by Lemma~\ref{lemma:diff},
    \[\left|\Delta_{\delta}^{\dagger} - \Delta_{\delta}^*\right|\left|\beta^{\dagger} - q_{1 - \delta}^{\dagger}\right| \leq \left|\beta^{\dagger} - \beta\right| + \left|q_{1 - \delta}^{\dagger} - q_{1 - \delta}\right| \leq 4 \Gamma,\]
    since $\lVert \beta^{\dagger} - \beta \rVert_{\infty} \leq 2\Gamma$ under~\eqref{eq:sens-outcome}, which also implies that $|q_{1 - \delta}^{\dagger} - q_{1 - \delta}| \leq 2 \Gamma$.
\end{proof}

The gap described in Proposition~\ref{prop:sens-optimal-gap} may be quite wide, and we do not claim that these bounds are sharp---we suspect these can be improved with more careful arguments. 

We lastly note that in this setting with unmeasured confounders, one may argue that contact rules and policies based solely on covariates $X$ do not make use of all available information. Indeed, when Assumption~\ref{ass:NUC} is violated, one can consider ``superoptimal'' regimes that incorporate the natural value of treatment itself, $A$ \citep{stensrud2022}. Intuitively, the natural value of treatment is a proxy for unmeasured confounders, and can therefore be used for more targeted contact/treatment selection. In some settings it may be infeasible to know what $A$ is yet also intervene before a subject is treated with $A$, while in other settings it may be plausible, e.g., when a doctor knows which treatment they would have prescribed in the absence of any external intervention (see \citet{haneuse2013,young2014} for more discussion). In any case, we leave further exploration of this superoptimality issue in the context of our proposed treatment rules for future research.

\section{Proofs of Results in Section~\ref{sec:estimation}}

\subsection{Proof of Lemma~\ref{lemma:pseudo-bias}}
Observe that
\begin{align*}
    &\mathbb{E}(\phi(O; \widetilde{{\mathbb{P}}}) - \phi(O;{\mathbb{P}}) \mid X) \\
    &= \mathbb{E}\left(\left(\widetilde{h}^* - \widetilde{\pi}\right)\left\{\frac{A}{\widetilde{\pi}} - \frac{1 - A}{1 - \widetilde{\pi}}\right\}\left(Y - \widetilde{\mu}_A\right) + \widetilde{\tau}\left(\widetilde{h}^* - A\right) \, \bigg| \, X\right) - \beta\\
    &= \left(\widetilde{h}^* - \widetilde{\pi}\right)\left\{\frac{\pi}{\widetilde{\pi}}\left(\mu_1 - \widetilde{\mu}_1\right) - \frac{1 - \pi}{1 - \widetilde{\pi}}\left(\mu_0 - \widetilde{\mu}_0\right)\right\} + \widetilde{\tau}\left(\widetilde{h}^* - \pi\right) - \tau(h^* - \pi) \\
    & = \left(\widetilde{h}^* - \widetilde{\pi}\right)\left\{
      \frac{\pi}{\widetilde{\pi}}\left(\mu_1 - \widetilde{\mu}_1\right) -
      \frac{1 - \pi}{1 - \widetilde{\pi}}
      \left(\mu_0 -\widetilde{\mu}_0\right) + \widetilde{\tau} - \tau\right\} \\
      & \quad \quad +
      \widetilde{\tau}\left(\widetilde{\pi} - \pi\right) +
      \tau\left(\{\widetilde{h}^* - h^*\} - \{\widetilde{\pi} - \pi\}\right) \\
    &= \left(\widetilde{h}^* - \widetilde{\pi}\right) \sum_{j=0}^1
      \frac{\{\widetilde{\pi} - \pi\}
      \{\widetilde{\mu}_j - \mu_j\}}{j\widetilde{\pi} + (1 - j)(1 - \widetilde{\pi})} + \{\widetilde{\tau} - \tau\}\{\widetilde{\pi} - \pi\}
      + \{\widetilde{h}^* - h^*\}\tau.
\end{align*}
The first equality is by definition of $\phi$, the second by conditioning on $(A, X)$ then $X$, and the third and fourth by rearranging.

\subsection{Proof of Theorem~\ref{thm:est-value}}
 We may decompose the error as follows:
  \begin{align*}
    \widehat{V}_{\delta} - V_{\delta}
    &= \left(\mathbb{P}_n -
      {\mathbb{P}}\right)\left(\Delta_{\delta}^*(X)\phi(O;{\mathbb{P}}) + Y\right) + \underbrace{\left(\mathbb{P}_n -
      {\mathbb{P}}\right)\left(\widehat{\Delta}_{\delta}^*(X)\phi(O;\widehat{\mathbb{P}}) -
      \Delta_{\delta}^*(X)\phi(O;{\mathbb{P}})\right)}_{\mathrm{(i)}} \\
    & \quad \quad
      + \underbrace{\mathbb{P}\left(\widehat{\Delta}_{\delta}^*(X)\phi(O;\widehat{{\mathbb{P}}}) -
      \Delta_{\delta}^*(X)\phi(O;{\mathbb{P}})\right)}_{\mathrm{(ii)}}
  \end{align*}
  By showing that the empirical process term (i) is $o_{\mathbb{P}}(n^{-1/2})$, and that the bias term (ii) is $-q_{1-\delta}\left(\mathbb{P}_n - {\mathbb{P}}\right)\Delta_{\delta}^*(X) + O_{\mathbb{P}}(R_{1,n} + R_{2,n} + R_{3,n}) + o_{\mathbb{P}}(n^{-1/2})$, the desired result will follow.

  Starting with term (i), observe that
  \begin{align*}
      \left(\mathbb{P}_n - {\mathbb{P}}\right)\left(\widehat{\Delta}_{\delta}^*(X)\phi(O;\widehat{\mathbb{P}}) -\Delta_{\delta}^*(X)\phi(O;{\mathbb{P}})\right)
  &=  \left(\mathbb{P}_n - {\mathbb{P}}\right)\left(\left\{\widehat{\Delta}_{\delta}^*(X)-\Delta_{\delta}^*(X)\right\}\phi(O;{\widehat{\mathbb{P}}})\right)\\
  & \quad \quad + \left(\mathbb{P}_n - {\mathbb{P}}\right)\left(\Delta_{\delta}^*(X)\left\{\phi(O;\widehat{\mathbb{P}}) - \phi(O; \mathbb{P})\right\}\right)
  \end{align*}
  The first of these terms is shown to be $o_{\mathbb{P}}(n^{-1/2})$ in Lemma~\ref{lemma:quantile-emp-process}. For the second,
  observe that for any $t > 0$,
\begin{align*}
  \lVert \widehat{h}^* - h^*\rVert^2
  &= \mathbb{P}(\{\widehat{h}^* - h^*\}^2) \\
  &= \mathbb{P}(|\mathds{1}(\widehat{\tau} > 0) - \mathds{1}(\tau > 0)|) \\
  & \leq \mathbb{P}\left(\mathds{1}(|\tau| \leq |\widehat{\tau} - \tau|)\right) \\
  &\leq \mathbb{P}\left[|\tau| \leq t\right]
    + \mathbb{P}\left[|\widehat{\tau} - \tau| > t\right] \\
  & \lesssim t^{a} + \frac{1}{t}\lVert \widehat{\tau} - \tau\rVert,
\end{align*}
implying $\lVert \widehat{h}^* - h^*\rVert = o_{\mathbb{P}}(1)$ by Lemma
\ref{lemma:convbound}, as
$\lVert \widehat{\tau} - \tau\rVert \leq \lVert \widehat{\mu}_0 -
\mu_0\rVert + \lVert \widehat{\mu}_1 - \mu_1\rVert = o_{\mathbb{P}}(1)$ by
assumption. In the the third line, we used Lemma \ref{lemma:diff}, and
in the last line we used Assumption~\ref{ass:margin},
Markov's inequality, and the fact that
$\lVert \, \cdot \, \rVert_{L_1} \leq \lVert \, \cdot \,
\rVert_{L_2}$. Thus, we have
\begin{align*}
&\left\lVert \Delta_{\delta}^*(X)\left\{\phi(O;\widehat{\mathbb{P}}) -
    \phi(O;\mathbb{P})\right\}\right\rVert \\
    &\leq
  \big\lVert\phi(O;\widehat{\mathbb{P}}) -
    \phi(O;\mathbb{P})\big\rVert \\
  & = \bigg\lVert \left(\widehat{h}^* - \widehat{\pi}\right)
    \left\{\frac{A}{\widehat{\pi}} -
    \frac{1 - A}{1 - \widehat{\pi}}\right\}\left(Y - \widehat{\mu}_A\right) +
    \widehat{\tau}\left(\widehat{h}^* - A\right)  +
    \left(h^* - \pi\right)\left\{\frac{A}{\pi} -
    \frac{1 - A}{1 - \pi}\right\}\left(Y - \mu_A\right) -
    \tau\left(h^* - A\right)\bigg\rVert \\
  & \leq
    \bigg\lVert \left(\widehat{h}^* - \widehat{\pi}\right)
    \left\{\left(\frac{A}{\widehat{\pi}} - \frac{A}{\pi}\right)
    \left(\mu_1 - \widehat{\mu}_1\right)-
    \left(\frac{1 - A}{1 - \widehat{\pi}} - \frac{1 - A}{1 - \pi}\right)
    \left(\mu_0 - \widehat{\mu}_0\right)\right\} +
    \left(\widehat{h}^* - A\right)\left(\widehat{\tau} - \tau\right) \bigg\rVert \\
  & \quad \quad + \bigg\lVert \left(\widehat{h}^* - \widehat{\pi} - h^* +
    \pi\right)\left\{\frac{A}{\pi} -
    \frac{1 - A}{1 - \pi}\right\}\left(Y - \mu_A\right) +
    \left(\widehat{h}^* - h^*\right)
    \left(\widehat{\tau} - \tau\right)\bigg\rVert \\
  & \lesssim \lVert \widehat{\mu}_1 - \mu_1\rVert +
    \lVert \widehat{\mu}_0 - \mu_0\rVert + \lVert \widehat{h}^* - h^*\rVert
    + \lVert \widehat{\pi} - \pi\rVert \\
  &= o_{\mathbb{P}}(1),
  \end{align*}
where we used consistency of $\widehat{\mu}_0$, $\widehat{\mu}_1$, $\widehat{h}^*$, and $\widehat{\pi}$, as well as our boundedness assumptions. Hence, by Lemma~\ref{lemma:emp-process}, the second term above is also $o_{\mathbb{P}}(n^{-1/2})$.

  It remains to analyze the bias term (ii). See that
  \begin{align*}
      &\mathbb{P}\left(\widehat{\Delta}_{\delta}^*(X)\phi(O;\widehat{{\mathbb{P}}}) -
      \Delta_{\delta}^*(X)\phi(O;{\mathbb{P}})\right) \\
      &= \mathbb{P}\left(\widehat{\Delta}_{\delta}^*(X)(\phi(O;\widehat{{\mathbb{P}}}) - \phi(O;\mathbb{P}))\right) + \mathbb{P}\left((\widehat{\Delta}_{\delta}^*(X) - \Delta_{\delta}^*(X))\phi(O;{\mathbb{P}})\right).
  \end{align*}
  By the product bias in Lemma~\ref{lemma:pseudo-bias},
  \begin{align*}
\left|\mathbb{P}\left(\widehat{\Delta}_{\delta}^*(X)(\phi(O;\widehat{{\mathbb{P}}}) - \phi(O;\mathbb{P}))\right)\right| & \leq \mathbb{P}\left[\widehat{\Delta}_{\delta}^*(X) \left|\mathbb{P}\left(\phi(O;\widehat{{\mathbb{P}}}) - \phi(O;\mathbb{P})\mid X\right)\right|\right] \\
      & \lesssim R_{1,n} + \mathbb{P}\left(\big|\widehat{h}^* - h^*\big||\tau|\right),
  \end{align*}
  but by Lemma~\ref{lemma:diff} and Assumption~\ref{ass:margin},
  \begin{align*}
      \mathbb{P}\left(\big|\widehat{h}^* - h^*\big||\tau|\right) \leq \mathbb{P}\left(\mathds{1}(|\tau| \leq |\widehat{\tau} - \tau|)|\widehat{\tau} - \tau|\right) \lesssim \lVert \widehat{\tau} - \tau\rVert_{\infty}^{1 + a} = R_{2,n}.
  \end{align*}
  Meanwhile, by iterated expectations,
  \begin{align*}
      \mathbb{P}\left((\widehat{\Delta}_{\delta}^*(X) - \Delta_{\delta}^*(X))\phi(O;{\mathbb{P}})\right)
      &= \mathbb{P}\left((\widehat{\Delta}_{\delta}^* - \Delta_{\delta}^*)\beta\right) \\
      &= \mathbb{P}\left((\widehat{\Delta}_{\delta}^* - \Delta_{\delta}^*)\{\beta - q_{1 - \delta}\}\right) + q_{1 - \delta}\mathbb{P}\left(\widehat{\Delta}_{\delta}^* - \Delta_{\delta}^*\right),
  \end{align*}
  which can again be partly bounded by Lemma~\ref{lemma:diff} and Assumption~\ref{ass:margin},
  \begin{align*}
      \left|\mathbb{P}\left((\widehat{\Delta}_{\delta}^* - \Delta_{\delta}^*)\{\beta - q_{1 - \delta}\}\right)\right|
      & \leq \mathbb{P}\left(\mathds{1}(|\beta - q_{1 - \delta}| \leq |\widehat{\beta} - \widehat{q}_{1 - \delta} - \beta + q_{1 - \delta}|)|\beta - q_{1 - \delta}|\right)\\
      & \lesssim \left(\lVert \widehat{\beta} - \beta\rVert_{\infty} + |\widehat{q}_{1 - \delta} - q_{1 - \delta}|\right)^{1 + b} \\
      &= R_{3,n}.
  \end{align*}
  Recalling that by construction of $\widehat{q}_{1 - \delta}$, $\mathbb{P}_n\big[\widehat{\Delta}_{\delta}^*\big] = \delta + o_{\mathbb{P}}(n^{-1/2})$, we have
  \begin{align*}
      o_{\mathbb{P}}(n^{-1/2}) &= \mathbb{P}_n\big[\widehat{\Delta}_{\delta}^*\big] - \mathbb{P}\big[\Delta_{\delta}^*\big] \\
      &= (\mathbb{P}_n - \mathbb{P})(\widehat{\Delta}_{\delta}^* - \Delta_{\delta}^*) + (\mathbb{P}_n
    - \mathbb{P})\Delta_{\delta}^* + \mathbb{P}(\widehat{\Delta}_{\delta}^* - \Delta_{\delta}^*) \\
  & = (\mathbb{P}_n
    - \mathbb{P})\Delta_{\delta}^* + \mathbb{P}(\widehat{\Delta}_{\delta}^* - \Delta_{\delta}^*) + o_{\mathbb{P}}(n^{-1/2}),
  \end{align*}
  by Lemma~\ref{lemma:quantile-emp-process}. Hence $q_{1 - \delta}\mathbb{P}\left(\widehat{\Delta}_{\delta}^* - \Delta_{\delta}^*\right) = - q_{1 - \delta}(\mathbb{P}_n - \mathbb{P})\Delta_{\delta}^* + o_{\mathbb{P}}(n^{-1/2})$ which completes the proof.

\subsection{Proof of Theorem~\ref{thm:est-CPB}}
This is obtained immediately from Lemma~\ref{lemma:pseudo-bias} and Lemma~\ref{lemma:pseudo-oracle}.

\subsection{Proof of Proposition~\ref{prop:bias-rule}}
This is an immediate consequence of Proposition~\ref{prop:ident} and the proof of Theorem~\ref{thm:est-value}.

\subsection{Proof of Theorem~\ref{thm:est-AUPBC}}
The result follows almost immediately from Lemma~\ref{lemma:sup-emp-process}, as
\begin{align*}
    &\left|\widehat{\mathcal{A}} - \mathcal{A} - (\mathbb{P}_n - \mathbb{P})N\right| \\
    &= \bigg|\int_0^1 \{\widehat{V}_{\delta} - V_{\delta} - (\mathbb{P}_n - \mathbb{P})(\Delta_\delta^*\{\phi(O;\mathbb{P}) - q_{1 -\delta}\} + Y)\} \, d\delta \\
    & \quad \quad - \frac{1}{2}\{\widehat{V}_1 - V_1 - (\mathbb{P}_n - \mathbb{P})(\phi(O;\mathbb{P})+ Y)\}\bigg| \\
    &\leq \sup_{\delta \in [0,1]} \left|\widehat{V}_{\delta} - V_{\delta} - (\mathbb{P}_n - \mathbb{P})(\Delta_{\delta}^*\{\phi(O; \mathbb{P}) - q_{1 - \delta}\} + Y)\right| \\
    & \quad \quad + \frac{1}{2}\left|\widehat{V}_1 - V_1 - (\mathbb{P}_n - \mathbb{P})(\phi(O;\mathbb{P})+ Y)\right|
  \end{align*}
  The first term is $O_{\mathbb{P}}(R_{1,n} + R_{2,n} + R_{3,n}^*) + o_{\mathbb{P}}(n^{-1/2})$ by Lemma~\ref{lemma:sup-emp-process}, while the second term is $O_{\mathbb{P}}(R_{1,n} + R_{2,n}) + o_{\mathbb{P}}(n^{-1/2})$ by the proof of Theorem~\ref{thm:est-value}. The result for the normalized AUPBC estimator $\widehat{\overline{\mathcal{A}}}$ follows similarly, working with quotients exactly as in Lemmas 3.1, 3.2, and S.1 of \citet{takatsu2023}.
 
  % By showing that the empirical process term (i) is $o_{\mathbb{P}}(n^{-1/2})$, and that the bias term (ii) is $-\left(\mathbb{P}_n - {\mathbb{P}}\right)\int_0^1 q_{1-\delta}\Delta_{\delta}^* \, d\delta + O_{\mathbb{P}}(R_{1,n} + R_{2,n} + R_{3,n}^*) + o_{\mathbb{P}}(n^{-1/2})$, the desired result will follow\ldots

\section{General Pseudo-Outcome Results under $L_2(\mathbb{P})$ Stability}\label{app:L2}

In this appendix, we review the oracle inequality results for two-stage pseudo-outcome regression in \cite{kennedy2023}, extended to deal with $L_2(\mathbb{P})$ error in \citet{rambachan2022}.

The general setting is that we observe data $O \sim \mathbb{P}$, which includes covariates $X$, and we aim to estimate a function $m(X) = \mathbb{E}(f(O) \mid X)$. Leveraging sample splitting, we obtain an estimate $\widehat{f}$ using training data, then regress $\widehat{f}$ on covariates $X$ in test data. We will see that under mild conditions on the second stage regression estimator, we can relate error of the two-stage approach to that of an oracle regression of the true $f$ on covariates $X$ in the test data. We begin with the fundamental stability condition required for the second stage regression.

\begin{definition}[Assumption B.1 in \citet{rambachan2022}]\label{def:L2-stability}
    Let $D^n = (O_{01}, \ldots, O_{0n})$ and $O^n = (O_1,\ldots, O_n)$ be training and test sets, respectively, where both are random random samples from $\mathbb{P}$ such that $D^n \independent O^n$. Let $\widehat{f}(o) \equiv \widehat{f}(o ; D^n)$ be estimated with $D^n$, and $\widehat{b}(X) = \mathbb{E}(\widehat{f}(O) - f(O) \mid X, D^n)$ be the bias of $\widehat{f}$ at $X$, conditional on the training data. Let $\widehat{\mathbb{E}}_n(\, \cdot  \mid X = x)$ be a generic regression procedure that regresses functions of $O$ on $X$ in $O^n$, evaluated at $X = x$, and let $\rho$ be a metric on the space of functions of $O$. We say $\widehat{\mathbb{E}}_n$ is $L_2(\mathbb{P})$-\textit{stable for} $f$ \textit{with respect to} $\rho$ if
    \[\frac{\int \left\{\widehat{\mathbb{E}}_n(\widehat{f}(O) \mid X = x) - \widehat{\mathbb{E}}_n(f(O) \mid X = x) - \widehat{\mathbb{E}}_n(\widehat{b}(X) \mid X = x)\right\}^2\, d\mathbb{P}(x)}{\mathbb{E}\left(\int \left\{\widehat{\mathbb{E}}_n(f(O) \mid X = x) - \mathbb{E}(f(O) \mid X = x)\right\}^2 \, d\mathbb{P}(x)\right)} \overset{\mathbb{P}}{\to} 0,\]
    whenever $\rho(\widehat{f}, f) \overset{\mathbb{P}}{\to} 0$, where $\overset{\mathbb{P}}{\to}$ denotes convergence in probability under $\mathbb{P}$.
\end{definition}

While abstract, the following result shows that a large class of regression methods satisfy the stability result in Definition~\ref{def:L2-stability}. Namely, any linear smoother is $L_2(\mathbb{P})$-stable, and this includes linear regression, local polynomials, series estimators, smoothing splines, among many others.

\begin{proposition}[Proposition B.1 in \citet{rambachan2022}]
    Let $X^n = (X_1, \ldots, X_n)$ be covariates associated with test data $O^n = (O_1, \ldots, O_n)$. Consider a linear smoother of the form $\widehat{m}(x) = \widehat{\mathbb{E}}_n(\widehat{f}(O) \mid X = x) = \sum_{i = 1}^n w_i(x; X^n)\widehat{f}(O_i)$, and define
    \[\lVert h \rVert_{w^2} = \sum_{i=1}^n \frac{\lVert w_i(\, \cdot \,; X^n) \rVert^2}{\sum_{j=1}^n \lVert w_j(\, \cdot \,; X^n) \rVert^2} \int h(o)^2\, d\mathbb{P}(o \mid X_i),\]
    for any function $h$. Then, defining $\rho(\widehat{f}, f) = \lVert \widehat{f} - f\rVert_{w^2}$, $\widehat{\mathbb{E}}_n$ is $L_2(\mathbb{P})$-stable for $f$ with respect to $\rho$ if $\frac{1}{\lVert \sigma \rVert_{w^2}} = O_{\mathbb{P}}(1)$, where $\sigma(X)^2 = \mathrm{Var}(f(O) \mid X)$.
\end{proposition}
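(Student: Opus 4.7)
The plan is to exploit the linearity of the smoother to decompose the numerator as a quadratic form in conditionally mean-zero terms. Define the centered residuals $Z_i = \widehat{f}(O_i) - f(O_i) - \widehat{b}(X_i)$. By construction of $\widehat{b}$, these satisfy $\mathbb{E}[Z_i \mid X_i, D^n] = 0$, and they are conditionally independent across $i$ given $(D^n, X^n)$ since the $O_i$ are i.i.d.\ draws from $\mathbb{P}$ and are independent of the training sample $D^n$. Because $\widehat{\mathbb{E}}_n$ is linear in its argument, the integrand appearing in the numerator of Definition~\ref{def:L2-stability} reduces to $\widehat{\mathbb{E}}_n(\widehat{f}(O) - f(O) - \widehat{b}(X) \mid X=x) = \sum_{i=1}^n w_i(x; X^n) Z_i$.

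Write $\mathcal{N}$ for the numerator and $S = \sum_j \lVert w_j(\,\cdot\,;X^n)\rVert^2$. Taking expectation conditional on $(D^n, X^n)$ and using that the cross-terms $\mathbb{E}[Z_i Z_j \mid D^n, X^n]$ vanish for $i \neq j$, I would obtain
\[\mathbb{E}[\mathcal{N} \mid D^n, X^n] = \sum_i \lVert w_i(\,\cdot\,; X^n)\rVert^2 \,\mathrm{Var}(Z_i \mid X_i, D^n) \leq S \cdot \lVert \widehat{f} - f\rVert_{w^2}^2 = S \cdot \rho(\widehat{f},f)^2,\]
using the bound $\mathrm{Var}(Z_i \mid X_i, D^n) \leq \int \{\widehat{f}(o) - f(o)\}^2 \, d\mathbb{P}(o \mid X_i)$ together with the definition of $\lVert \cdot \rVert_{w^2}$. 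A conditional Markov inequality then yields $\mathcal{N} = O_{\mathbb{P}}(S \cdot \rho(\widehat{f}, f)^2)$. For the denominator $\mathcal{D}$, applying the bias--variance decomposition conditional on $X^n$ to the oracle smoother $\widehat{m}_f(x) = \sum_i w_i(x; X^n) f(O_i)$ lower-bounds $\mathcal{D}$ by the integrated variance:
\[\mathcal{D} \geq \mathbb{E}\!\left[\int \mathrm{Var}(\widehat{m}_f(x) \mid X^n)\, d\mathbb{P}(x)\right] = \mathbb{E}\!\left[\sum_i \lVert w_i\rVert^2 \sigma(X_i)^2\right] = \mathbb{E}[S \cdot \lVert \sigma\rVert_{w^2}^2].\]

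The task then reduces to showing $O_{\mathbb{P}}(S \cdot \rho(\widehat{f},f)^2) / \mathbb{E}[S \cdot \lVert \sigma\rVert_{w^2}^2] \overset{\mathbb{P}}{\to} 0$, and here I expect the main obstacle: the numerator is random (involving $S$) while the denominator is a deterministic expectation, so one cannot naively cancel the factor of $S$. The assumption $1/\lVert \sigma \rVert_{w^2} = O_{\mathbb{P}}(1)$ provides the needed bridge, since on a high-probability event $\{\lVert \sigma\rVert_{w^2}^2 \geq c\}$ for some $c > 0$, one has $\mathbb{E}[S \cdot \lVert \sigma\rVert_{w^2}^2] \gtrsim c \cdot \mathbb{E}[S \cdot \mathbf{1}\{\lVert \sigma\rVert_{w^2}^2 \geq c\}]$, which is commensurate with $\mathbb{E}[S]$ up to a vanishing remainder. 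Combined with $\rho(\widehat{f}, f) \overset{\mathbb{P}}{\to} 0$, a truncation argument on the random weight sum $S$ (or expectation-level convergence via dominated convergence under mild integrability) would then drive the ratio to zero in probability, completing the proof.
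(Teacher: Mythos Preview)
The paper does not actually prove this proposition: it is quoted in Appendix~\ref{app:L2} as Proposition~B.1 of \citet{rambachan2022} and left without argument, so there is no in-paper proof to compare against. That said, your decomposition is the natural one, and the first two computations---$\mathbb{E}[\mathcal{N}\mid D^n,X^n]\leq S\cdot\rho(\widehat f,f)^2$ via conditional uncorrelatedness of the $Z_i$, and $\mathcal{D}\geq \mathbb{E}[S\cdot\lVert\sigma\rVert_{w^2}^2]$ via the conditional bias--variance split---are both correct.

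The gap is exactly where you flag it, and your sketched fix does not close it. You need $\dfrac{S\,\rho^2}{\mathbb{E}[S\,\lVert\sigma\rVert_{w^2}^2]}\overset{\mathbb{P}}{\to}0$, but truncating on $\{\lVert\sigma\rVert_{w^2}^2\geq c\}$ only yields $\mathbb{E}[S\,\lVert\sigma\rVert_{w^2}^2]\geq c\,\mathbb{E}[S\,\mathbf{1}_{\lVert\sigma\rVert_{w^2}^2\geq c}]$, and nothing in the hypotheses prevents $S$ from being large precisely on the complementary small-probability event. The assumption $1/\lVert\sigma\rVert_{w^2}=O_{\mathbb{P}}(1)$ is a statement about the \emph{weighted average} $\lVert\sigma\rVert_{w^2}$, not a pointwise lower bound on $\sigma$, so it does not decouple $S$ from that event. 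Equivalently, writing $\mathbb{E}[\mathcal{N}]/\mathcal{D}\leq \mathbb{E}_Q[\rho^2/\lVert\sigma\rVert_{w^2}^2]$ under the tilted law $dQ\propto S\,\lVert\sigma\rVert_{w^2}^2\,d\mathbb{P}$, you are trying to upgrade $\rho^2/\lVert\sigma\rVert_{w^2}^2\overset{\mathbb{P}}{\to}0$ to convergence in $Q$-mean; this genuinely requires a uniform integrability or boundedness condition that is not among the stated hypotheses. Your parenthetical ``under mild integrability'' is doing real work and should either be made an explicit assumption (e.g., $\widehat f - f$ uniformly bounded, together with a condition ensuring $Q\ll\mathbb{P}$ with controlled density) or replaced by a direct argument from \citet{rambachan2022}.
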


The main benefit of $L_2$-stability is that it allows a very simple way to characterize the rate of convergence of a two-stage pseudo-outcome regression estimator, compared to the oracle rate. The formal result is as follows.

\begin{lemma}[Lemma B.1 in \citet{rambachan2022}]\label{lemma:pseudo-oracle}
    Under the setup of Definition~\ref{def:L2-stability}, define $m(x) = \mathbb{E}(f(O) \mid X = x)$, $\widetilde{m}(x) = \widehat{\mathbb{E}}_n(f(O) \mid X = x)$, $\widehat{m}(x) = \widehat{\mathbb{E}}_n(\widehat{f}(O) \mid X = x)$, and $\widetilde{b}(x) = \widehat{\mathbb{E}}_n(\widehat{b}(X) \mid X = x)$. If
    \begin{enumerate}[(i)]
        \item $\widehat{\mathbb{E}}_n$ is $L_2(\mathbb{P})$-stable for $f$ with respect to $\rho$,
        \item $\rho(\widehat{f}, f) \overset{\mathbb{P}}{\to} 0$,
    \end{enumerate}
    then $\lVert \widehat{m} - \widetilde{m}\rVert = \lVert \widetilde{b} \rVert + o_{\mathbb{P}}(R_n^*)$, where $\{R_n^*\}^2 = \mathbb{E}\left(\lVert \widetilde{m} - m\rVert^2\right)$  is the squared oracle $L_2(\mathbb{P})$ risk. In particular, if $\lVert \widetilde{b} \rVert = o_{\mathbb{P}}(R_n^*)$, then $\lVert\widehat{m} - m\rVert = \lVert \widetilde{m} - m\rVert + o_{\mathbb{P}}(R_n^*)$ i.e., $\widehat{m}$ is oracle efficient.
\end{lemma}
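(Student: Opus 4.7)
The plan is to unpack the $L_2(\mathbb{P})$-stability condition directly and then apply the (reverse) triangle inequality twice. The crucial observation is that the numerator appearing in Definition~\ref{def:L2-stability} is precisely $\lVert \widehat{m} - \widetilde{m} - \widetilde{b}\rVert^2$, while the denominator is precisely $\{R_n^*\}^2$. Since hypothesis (ii) gives $\rho(\widehat{f}, f) \overset{\mathbb{P}}{\to} 0$, hypothesis (i) (stability) activates and we read off immediately that
\[
\lVert \widehat{m} - \widetilde{m} - \widetilde{b}\rVert \;=\; o_{\mathbb{P}}(R_n^*).
\]

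Next I would apply the reverse triangle inequality in $L_2(\mathbb{P})$,
\[
\bigl|\lVert \widehat{m} - \widetilde{m}\rVert - \lVert \widetilde{b}\rVert\bigr| \;\leq\; \lVert \widehat{m} - \widetilde{m} - \widetilde{b}\rVert \;=\; o_{\mathbb{P}}(R_n^*),
\]
which gives the first conclusion $\lVert \widehat{m} - \widetilde{m}\rVert = \lVert \widetilde{b}\rVert + o_{\mathbb{P}}(R_n^*)$. For the oracle-efficiency statement, combining this with the extra hypothesis $\lVert \widetilde{b}\rVert = o_{\mathbb{P}}(R_n^*)$ yields $\lVert \widehat{m} - \widetilde{m}\rVert = o_{\mathbb{P}}(R_n^*)$, and one final triangle inequality gives
\[
\bigl|\lVert \widehat{m} - m\rVert - \lVert \widetilde{m} - m\rVert\bigr| \;\leq\; \lVert \widehat{m} - \widetilde{m}\rVert \;=\; o_{\mathbb{P}}(R_n^*),
\]
hence $\lVert \widehat{m} - m\rVert = \lVert \widetilde{m} - m\rVert + o_{\mathbb{P}}(R_n^*)$. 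Note $\lVert \widetilde{m} - m\rVert = O_{\mathbb{P}}(R_n^*)$ by Markov's inequality on $\mathbb{E}(\lVert \widetilde{m} - m\rVert^2) = \{R_n^*\}^2$, so the additive error is of strictly smaller order than the oracle rate---justifying the ``oracle efficient'' terminology.

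There is no substantive obstacle here; the result is essentially an algebraic restatement of the stability definition. The only subtlety worth flagging is a bookkeeping point: the numerator in Definition~\ref{def:L2-stability} depends on $D^n$, while the denominator is an unconditional expectation, so $R_n^*$ is a deterministic sequence. Consequently, ``ratio $\overset{\mathbb{P}}{\to} 0$'' is exactly the $o_{\mathbb{P}}(R_n^*)$ statement invoked above, and no further probabilistic manipulation is needed. A second minor check is that the map $g \mapsto \widehat{\mathbb{E}}_n(g(O)\mid X=x)$ is (at least conditionally on $O^n$) linear in its argument in the regression procedures of interest---so that $\widehat{m} - \widetilde{m} - \widetilde{b}$ is indeed the quantity inside the numerator of the stability ratio. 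Under this structural assumption, the proof reduces to the two triangle-inequality steps outlined above.
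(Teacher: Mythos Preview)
Your proof is correct and is the natural argument: unpack the stability ratio, apply the reverse triangle inequality twice. Note that the paper does not actually supply its own proof of this lemma---it is stated in Appendix~\ref{app:L2} as a review result imported from \citet{rambachan2022}---so there is no in-paper proof to compare against. Your argument is the standard one.

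One small comment: your final caveat about needing linearity of $g \mapsto \widehat{\mathbb{E}}_n(g(O)\mid X=x)$ is unnecessary. By the definitions given in the lemma statement, $\widehat{m}(x) - \widetilde{m}(x) - \widetilde{b}(x)$ is \emph{literally} the expression inside the braces of the numerator in Definition~\ref{def:L2-stability}; no algebraic manipulation (and hence no linearity) is required to identify them. The stability definition is written precisely so that this identification is definitional, not structural.
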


\begin{remark}
    In the version of their paper accessible at the time of writing this manuscript, \citet{rambachan2022} had a typo in the above result, in which $\{R_n^*\}^2$ was written as $\{\mathbb{E}(\lVert \widetilde{m} - m \rVert)\}^2$. We have corrected this here to $\{R_n^*\}^2 = \mathbb{E}\left(\lVert \widetilde{m} - m\rVert^2\right)$.
\end{remark}

As a last result in this appendix, we present a way to relate the bias $\widehat{b}(x)$ to its smoothed version $\widetilde{b}(x) = \widehat{\mathbb{E}}_n(\widehat{b}(X) \mid X=x)$, in the case of linear smoothers.

\begin{proposition}[Proposition 2 in \citet{kennedy2023}]
    If $\widehat{b}(x) = \widehat{b}_1(x) \widehat{b}_2(x)$, and $\widehat{\mathbb{E}}_n$ is a linear smoother as above, such that $\sum_{i=1}^n |w_i(x; X^n)| = O_{\mathbb{P}}(c_{n, x})$, then
    \[\widetilde{b}(x) = O_{\mathbb{P}}\left(c_{n,x} \lVert \widehat{b}_1\rVert_{w_x,p} \lVert \widehat{b}_2\rVert_{w_x,1}\right),\]
    for any $p, q > 1$ such that $\frac{1}{p} + \frac{1}{q} = 1$, where $\lVert h \rVert_{w_x, p}^p = \sum_{i=1}^n \frac{|w_i(x; X^n)|}{\sum_{j=1}^n |w_j(x; X^n)|} |h(O_i)|^p$.
\end{proposition}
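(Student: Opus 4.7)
The statement is a clean application of H\"older's inequality to a weighted average, so the plan is correspondingly short. The plan is to expand $\widetilde{b}(x)$ explicitly using the linear smoother representation, then bound the resulting signed sum by a weighted expectation of $|\widehat{b}_1||\widehat{b}_2|$, then apply H\"older in the weighted $\ell_p/\ell_q$ norms defined in the statement.

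Concretely, first I would write
\[
\widetilde{b}(x) = \widehat{\mathbb{E}}_n(\widehat{b}(X) \mid X = x) = \sum_{i=1}^n w_i(x; X^n)\,\widehat{b}_1(X_i)\,\widehat{b}_2(X_i),
\]
using the linear smoother form together with the factorization $\widehat{b} = \widehat{b}_1 \widehat{b}_2$. Then the triangle inequality yields
\[
|\widetilde{b}(x)| \leq \sum_{i=1}^n |w_i(x; X^n)|\,|\widehat{b}_1(X_i)|\,|\widehat{b}_2(X_i)|.
\]
Let $W_n(x) = \sum_{j=1}^n |w_j(x; X^n)|$. Factoring $W_n(x)$ turns the coefficients $|w_i(x;X^n)|/W_n(x)$ into a probability mass function on $\{1,\dots,n\}$, so the right-hand side equals $W_n(x)$ times a weighted expectation of $|\widehat{b}_1(X_i)|\,|\widehat{b}_2(X_i)|$ under this discrete measure.

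Next, I would apply H\"older's inequality with conjugate exponents $p,q > 1$, $\tfrac{1}{p} + \tfrac{1}{q} = 1$, against this discrete probability measure:
\[
\sum_{i=1}^n \frac{|w_i(x;X^n)|}{W_n(x)}\,|\widehat{b}_1(X_i)|\,|\widehat{b}_2(X_i)| \leq \left(\sum_{i=1}^n \frac{|w_i|}{W_n(x)}|\widehat{b}_1(X_i)|^p\right)^{1/p}\left(\sum_{i=1}^n \frac{|w_i|}{W_n(x)}|\widehat{b}_2(X_i)|^q\right)^{1/q},
\]
which is exactly $\lVert \widehat{b}_1\rVert_{w_x,p}\,\lVert \widehat{b}_2\rVert_{w_x,q}$ in the notation of the proposition. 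Combining gives $|\widetilde{b}(x)| \leq W_n(x)\,\lVert \widehat{b}_1\rVert_{w_x,p}\,\lVert \widehat{b}_2\rVert_{w_x,q}$, and substituting the hypothesis $W_n(x) = O_{\mathbb{P}}(c_{n,x})$ delivers the stated rate. (I read the ``$\lVert \widehat b_2\rVert_{w_x,1}$'' in the statement as a typo for $\lVert \widehat b_2\rVert_{w_x,q}$; otherwise one would only obtain the case $p=\infty$ from the trivial bound $|\widehat b_1(X_i)| \leq \lVert \widehat b_1\rVert_\infty$.)

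There is no real obstacle in this proof; it is a bookkeeping exercise in H\"older's inequality. The only item worth attention is keeping the conditioning straight: the weights $w_i(x;X^n)$ depend on the test covariates $X^n$, and $\widehat{b}_1, \widehat{b}_2$ additionally depend on the training sample $D^n$, so the inequality is pathwise in $(D^n, X^n)$ and the $O_{\mathbb{P}}$ conclusion then follows immediately from the $O_{\mathbb{P}}$ assumption on $W_n(x)$ without any further probabilistic argument.
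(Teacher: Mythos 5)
Your proof is correct and matches the standard argument: the paper itself offers no proof of this proposition (it is quoted as a review item from \citet{kennedy2023}), and the source's proof is exactly your computation---expand the linear smoother, apply the triangle inequality, normalize by $W_n(x)=\sum_{j}|w_j(x;X^n)|$, and apply H\"older against the resulting discrete probability measure, finishing with the hypothesis $W_n(x)=O_{\mathbb{P}}(c_{n,x})$. Your reading of $\lVert \widehat{b}_2\rVert_{w_x,1}$ as a typo for $\lVert \widehat{b}_2\rVert_{w_x,q}$ is also right: with the literal weighted $\ell_1$ norm the claim is false (e.g., uniform weights and $\widehat{b}_1=\widehat{b}_2$ spiked at a single point give a left side of order $n$ against a claimed bound of order $\sqrt{n}$ when $p=2$), and otherwise the conjugate exponent $q$ appearing in the hypothesis $\tfrac{1}{p}+\tfrac{1}{q}=1$ would never be used.
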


Since for many linear smoothers (e.g., local polynomial estimators), $\sum_{i=1}^n |w_i(x; X^n)| \leq C$ for some universal constant $C$, this result shows that the smoothed bias $\widehat{b}$ can be expressed as a function of the weighted-norms of the components of the raw bias $\widehat{b}$ itself.

% \begin{proof}
%     Let $m(x) = \mathbb{E}(f(O) \mid X = x)$, $\widetilde{m}(x) = \widehat{\mathbb{E}}_n(f(O) \mid X = x)$ be the target function and oracle regression of $f$ on $X$, respectively, for any $x \in \mathcal{X}$. Let $R_n^* = \mathbb{E}(\lVert \widetilde{m} - m\rVert_{1})$ be the oracle $L_1(\mathbb{P})$ risk, and $T_n(x) = \widehat{m}(x) - \widetilde{m}(x) - \widehat{\mathbb{E}}_n(\widehat{b}(X) \mid X = x)$ for any $x \in \mathcal{X}$, i.e., the denominator and integrand of the numerator, respectively, of the stability condition in Definition~\ref{def:L2-stability}. We will show that
%     \[\lVert T_n \rVert_1 = O_{\mathbb{P}}\left(\frac{\lVert \widehat{f} - f\rVert_w}{\lVert \sigma\rVert_w} R_n^*\right),\]
%     which will immediately imply stability when $\frac{1}{\lVert \sigma\rVert_w} = O_{\mathbb{P}}(1)$. To this end, note that we can write $T_n(x) = \sum_{i = 1}^n w(x; X^n) \big\{\widehat{f}(O_i) - f(O_i) - \widehat{b}(X_i)\big\}$, and observe that
%     \begin{align*}
%         \mathbb{E}\left(\lVert T_n \rVert_1 \mid D^n, X^n\right)
%         &= \int 
%     \end{align*}
% \end{proof}

\section{Policies Based on a Subset of Covariates}\label{app:subset}

We consider, in this appendix, the case where policies are restricted to depend only on a subset of baseline covariates / confounders $W \subseteq X$. In general we will say $W = g(X)$ for some fixed and known coarsening function $g$, and let $\mathcal{W}$ be the support of $W$. As a simple example, we may be concerned that real-world decision makers (e.g., physicians) will only have access to a subset of covariates (e.g., patient characteristics) compared to those that happen to be measured in a study context. Alternatively, $X$ may be high-dimensional, and for interpretability or ease of use one might only want to consider treatment rules that depend on some lower dimensional summary of $X$. We will consider two possibilities: (i) the contact rule $\Delta$ and policy $h$ (to be applied on $\Delta$-selected subgroup) both may only depend on $W$, and (ii) the contact rule $\Delta$ may only depend on $W$, but the ensuing policy $h$ may use all of $X$. The second possibility may arise if one has limited information with which to \textit{select} a group to intervene on, but once these subjects are contacted one can measure all covariates $X$ prior to treatment selection.

\subsection{Restrictions on contact rules and policies}\label{app:rest-both}

Proceeding as in Section~\ref{sec:framework}, we consider contact rules $\Delta: \mathcal{W}\to [0,1]$, policies $h : \mathcal{W} \to \{0,1\}$, and treatment rules $d(\Delta, h) = H_{\Delta} h + (1 - H_{\Delta})A$, where $H_{\Delta} \overset{d}=\mathrm{Bernoulli}(\Delta(W))$. That is, $d(\Delta, h)$ sets treatment according to $h(W)$ with probability $\Delta(W)$, and leaves the natural value $A$ with probability $1 - \Delta(W)$. Importantly, we will make the same identifying Assumptions~\ref{ass:consistency}--\ref{ass:NUC}, where $X$ are considered to be \textit{all} the confounders that we need to adjust for, i.e., $A \independent Y(a) \mid X$. If $W$ alone is considered to be a sufficient adjustment set, then all the identification and estimation results go through by replacing $X$ with $W$. Otherwise, we will see that the distinction between $W$ and $X$ is necessary, albeit subtle.

First, we note that the result of Proposition~\ref{prop:ident} applies unchanged to treatment rules based on $W$, since these are a subset of rules that use all of $X$. Concretely, under Assumptions~\ref{ass:consistency}--\ref{ass:NUC}, $\mathbb{E}(Y(d(\Delta,h))) = \mathbb{E}(\Delta \tau (h - \pi) + Y)$, where $\tau(X) = \mu_1(X) - \mu_0(X)$ and $\pi(X) = \mathbb{P}[A = 1 \mid X]$ as before. In order to generalize Proposition~\ref{prop:optimal}, we will need some additional notation: let $\tau_w(W) = \mathbb{E}(\tau \mid W)$, $\xi_w(W) = \mathbb{E}(\tau \pi \mid W)$, and define $h_w^* = \mathds{1}(\tau_w > 0)$. Then, according to the following result, the appropriate priority score for maximizing expected outcomes under constraints on intervention capacity is $\beta_w \coloneqq \tau_w h_w^* - \xi_w$.

\begin{proposition}\label{prop:optimal-subset}
Consider the optimization problem
    \begin{align}
    \begin{split}\label{eq:opt-policy-subset}
        \max_{h, \Delta} \; & \mathbb{E}[Y(d(\Delta, h))] \\
        \text{subject to } \; & \mathbb{E}(\Delta(W)) \leq \delta 
    \end{split}
    \end{align}
    where $h$ and $\Delta$ are restricted to have domain $\mathcal{W}$. Under Assumptions~\ref{ass:consistency}--\ref{ass:NUC}, and assuming the $(1 - \delta)$-quantile of $\beta_w$ is unique, denoted $r_{1 - \delta}$, the contact rule $\Delta_{w, \delta}^*(W) = \mathds{1}(\beta_w(W) > r_{1 - \delta})$ and $h_w^*$ jointly solve~\eqref{eq:opt-policy-subset}, achieving maximal value $\mathbb{E}[Y(d(\Delta_{w, \delta}^*, h_w^*))] = \mathbb{E}(\Delta_{w, \delta}^* \beta_w + Y) \eqqcolon V_{w, \delta}$.
\end{proposition}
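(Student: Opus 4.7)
The plan is to reduce the constrained optimization over $W$-measurable rules to the structure already handled in Proposition~\ref{prop:optimal}, simply by conditioning on $W$ and exploiting that both $\Delta$ and $h$ are restricted to depend only on $W$. The identification step is inherited directly from Proposition~\ref{prop:ident}: since the rule $d(\Delta, h)$ is still of the form~\eqref{eq:contact-policy} (with the randomization $H_{\Delta}$ still satisfying $H_{\Delta} \ind (A, Y(0), Y(1)) \mid X$, as $W = g(X)$), we have under Assumptions~\ref{ass:consistency}--\ref{ass:NUC} that
\[
\mathbb{E}[Y(d(\Delta, h))] = \mathbb{E}\bigl(\Delta(W)\, \tau(X)\,\{h(W) - \pi(X)\}\bigr) + \mathbb{E}(Y).
\]

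Next I would take iterated expectations, conditioning on $W$. Because $\Delta$ and $h$ are $W$-measurable, they can be pulled out of the inner expectation, giving
\[
\mathbb{E}\bigl(\Delta\, \tau\, (h - \pi)\bigr) = \mathbb{E}\bigl(\Delta(W)\, \{h(W)\,\tau_w(W) - \xi_w(W)\}\bigr),
\]
by definition of $\tau_w(W) = \mathbb{E}(\tau \mid W)$ and $\xi_w(W) = \mathbb{E}(\tau \pi \mid W)$. This is the key reduction: the problem now lives entirely in the $W$-space with modified nuisance functions, and no further confounder adjustment issues arise.

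For any fixed contact rule $\Delta$, since $\Delta \geq 0$ and $\xi_w$ does not involve $h$, maximizing $\mathbb{E}(\Delta \{h \tau_w - \xi_w\})$ pointwise in $h(W) \in \{0, 1\}$ yields $h(W) = \mathds{1}(\tau_w(W) > 0) = h_w^*(W)$. Substituting $h_w^*$ back turns the objective into $\mathbb{E}(\Delta\, \beta_w) + \mathbb{E}(Y)$, where $\beta_w = h_w^* \tau_w - \xi_w$. The remaining task is to maximize $\mathbb{E}(\Delta(W)\,\beta_w(W))$ over all $W$-measurable $\Delta: \mathcal{W} \to [0,1]$ subject to $\mathbb{E}(\Delta(W)) \leq \delta$. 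This is structurally identical to the optimization solved in the proof of Proposition~\ref{prop:optimal}, with the priority score $\beta$ replaced by $\beta_w$ and its $(1-\delta)$-quantile $q_{1 - \delta}$ replaced by $r_{1 - \delta}$; the same comparison argument---using $\Delta_{w, \delta}^* - H_{\Delta} = \mathds{1}(\beta_w > r_{1-\delta})(1 - H_{\Delta}) - \mathds{1}(\beta_w \leq r_{1-\delta}) H_{\Delta}$ and integrating against $\beta_w - r_{1-\delta}$---shows that $\Delta_{w, \delta}^* = \mathds{1}(\beta_w > r_{1 - \delta})$ is optimal, with value $\mathbb{E}(\Delta_{w,\delta}^* \beta_w + Y)$.

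The only conceptual subtlety---and what I expect to be the main thing a careful reader will want clarified---is that $\beta_w$ is \emph{not} $\mathbb{E}(\beta \mid W)$ in general; the coarsened-covariate optimum commits to a single treatment label $h_w^*(W)$ within each $W$-stratum before taking expectations, which is why the relevant score is $\tau_w h_w^* - \xi_w$ rather than $\mathbb{E}(\tau(h^* - \pi) \mid W)$. Once this is flagged, every remaining step is either a direct application of Proposition~\ref{prop:ident} or a verbatim reuse of the comparison inequality from the proof of Proposition~\ref{prop:optimal}, so no additional heavy machinery is needed.
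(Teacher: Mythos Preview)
Your proposal is correct and follows essentially the same approach as the paper: both invoke Proposition~\ref{prop:ident}, condition on $W$ to reduce the objective to $\mathbb{E}(\Delta\{h\tau_w - \xi_w\})$, optimize pointwise in $h$ to obtain $h_w^*$, and then appeal directly to the comparison argument from the proof of Proposition~\ref{prop:optimal} for the contact rule. Your added remark distinguishing $\beta_w$ from $\mathbb{E}(\beta \mid W)$ is a helpful clarification not present in the paper's proof.
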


\begin{proof}
  As in the proof of Proposition~\ref{prop:optimal}, we first show that for any fixed $\Delta: \mathcal{W} \to [0,1]$, the policy $h_w^* = \mathds{1}(\tau_w > 0)$ optimizes $\mathbb{E}[Y(d(\Delta, h))]$ over all possible policies $h: \mathcal{W} \to \{0,1\}$. To see this, observe by Proposition~\ref{prop:ident} that for arbitrary policy $h: \mathcal{W} \to \{0,1\}$,
  \begin{align*}
      \mathbb{E}[Y(d(\Delta, h_w^*))] - \mathbb{E}[Y(d(\Delta, h))] &= \mathbb{E}(\Delta(W) \tau(X) (h_w^*(W) - h(W)) \\
      &= \mathbb{E}(\Delta(W) \tau_w(W) (h_w^*(W) - h(W)),
  \end{align*}
  by iterated expectations given $W$.
  But note that by definition of $h_w^*$,
  \[\tau_w (h_w^* - h) = h_w^* \tau_w(1 - h) - (1 - h_w^*)\tau_w h= |\tau_w| \{h_w^*(1 - h) + (1 - h_w^*)h\} \geq 0, \]
  from which it immediately follows that $\mathbb{E}[Y(d(\Delta, h_w^*))] \geq \mathbb{E}[Y(d(\Delta, h))]$.

  The remaining goal is to maximize, again by Proposition~\ref{prop:ident} and iterated expectations,
  \begin{align*}
      \mathbb{E}(\Delta(W)\tau(X)(h_w^*(W) - \pi(X))) = \mathbb{E}(\Delta(W)\{\tau_w(W)h_w^*(W) - \xi_w(W)\}) =
      \mathbb{E}(\Delta\beta_w)
  \end{align*} 
  over $\left\{\Delta : \mathcal{W} \to [0,1] \mid \mathbb{E}(\Delta(W)) \leq \delta\right\}$.
  This takes exactly the same form as the optimization problem we solved in the proof of Proposition~\ref{prop:optimal}, and the result follows immediately by an identical argument.
\end{proof}

With identification and characterization of optimal rules out of the way, we now briefly show how robust and efficient estimation can proceed, analogously to the approaches laid out in Section~\ref{sec:estimation}. The crucial component will be a new robust pseudo-outcome:
\begin{equation}\label{eq:beta-pseudo-subset}
\phi_w(O;\mathbb{P}) = \left(h_w^* - \pi\right)\left\{\frac{A}{\pi} - \frac{1 - A}{1 - \pi}\right\}\left(Y - \mu_A\right) + \tau\left(h_w^* - A\right).
\end{equation}
The function $\phi_w$ satisfies $\mathbb{E}(\phi_w(O; \mathbb{P}) \mid W) = \tau_w h_w^* - \xi_w = \beta_w$, and has a second-order bias property, similar to that for $\phi$ proved in Lemma~\ref{lemma:pseudo-bias}.

\begin{lemma}\label{lemma:pseudo-bias-subset}
Let $\widetilde{\mathbb{P}}$ be an alternative fixed distribution on $O$. Then the function $\phi_w$ defined in~\eqref{eq:beta-pseudo-subset} satisfies the following conditional bias decomposition:
\begin{align*}
    &\mathbb{E}\left(\phi_w(O; \widetilde{\mathbb{P}}) - \phi_w(O;\mathbb{P}) \mid W\right) \\
    &= \mathbb{E}\bigg(\left(\widetilde{h}_w^* - \widetilde{\pi}\right) \sum_{a=0}^1
      \frac{\{\widetilde{\pi} - \pi\}
      \{\widetilde{\mu}_a - \mu_a\}}{a\widetilde{\pi} + (1 - a)(1 - \widetilde{\pi})} \,\, \bigg| \, \,  W\bigg) + \mathbb{E}(\{\widetilde{\tau} - \tau\}\{\widetilde{\pi} - \pi\} \mid W)
      + \{\widetilde{h}_w^* - h_w^*\}\tau_w,
\end{align*}
where $(\widetilde{\mu}_0, \widetilde{\mu}_1, \widetilde{\pi}, \widetilde{\tau}, \widetilde{h}_w^*)$ represent the corresponding nuisance functions under $\widetilde{\mathbb{P}}$.
\end{lemma}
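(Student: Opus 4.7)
The plan is to first establish an $X$-conditional bias decomposition mirroring Lemma~\ref{lemma:pseudo-bias}, then apply the tower property $\mathbb{E}(\,\cdot\,\mid W) = \mathbb{E}(\mathbb{E}(\,\cdot\,\mid X)\mid W)$ to obtain the stated $W$-conditional form. The crucial observation is that the only change from $\phi$ to $\phi_w$ is replacing $h^*$ (a function of $X$) with $h_w^* = \mathds{1}(\tau_w > 0)$, which is a function of $W$ and hence still a function of $X$. Thus every algebraic manipulation used in the proof of Lemma~\ref{lemma:pseudo-bias} continues to be valid verbatim when we condition on $X$.

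Concretely, first I would verify that $\mathbb{E}(\phi_w(O;\mathbb{P})\mid X) = \tau(h_w^* - \pi)$ by conditioning first on $(A,X)$ then on $X$, using $\mathbb{E}(Y\mid X,A)=\mu_A$ and $\mathbb{E}(A\mid X)=\pi$; the only difference from the original argument is that $h_w^*$ takes the place of $h^*$, and both are measurable with respect to $X$. Next, following the same line-by-line computation as in the proof of Lemma~\ref{lemma:pseudo-bias} but with $(h^*,\widetilde h^*)$ replaced by $(h_w^*,\widetilde h_w^*)$, I would obtain
\begin{align*}
&\mathbb{E}\left(\phi_w(O;\widetilde{\mathbb{P}}) - \phi_w(O;\mathbb{P}) \mid X\right) \\
&= \left(\widetilde{h}_w^* - \widetilde{\pi}\right)\sum_{a=0}^1 \frac{\{\widetilde{\pi}-\pi\}\{\widetilde{\mu}_a - \mu_a\}}{a\widetilde{\pi}+(1-a)(1-\widetilde{\pi})} + \{\widetilde{\tau}-\tau\}\{\widetilde{\pi}-\pi\} + \{\widetilde h_w^* - h_w^*\}\tau.
\end{align*}

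Finally, I would take $\mathbb{E}(\,\cdot\,\mid W)$ of both sides. The first two summands on the right are functions of $X$ but not generally of $W$, so they simply remain inside an outer $\mathbb{E}(\,\cdot\,\mid W)$, yielding the first two terms of the claimed expression. For the last term, since $\widetilde h_w^* - h_w^*$ is $\sigma(W)$-measurable, it pulls out of the inner conditional expectation, leaving $\{\widetilde h_w^* - h_w^*\}\,\mathbb{E}(\tau\mid W) = \{\widetilde h_w^* - h_w^*\}\tau_w$ by definition of $\tau_w$. Combining these pieces gives the stated decomposition.

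There is no real obstacle here: the argument is entirely mechanical given Lemma~\ref{lemma:pseudo-bias}. The only point that requires care is bookkeeping which quantities are measurable with respect to $W$ versus $X$ when applying the tower property; in particular, $\widetilde h_w^*, h_w^*, \tau_w$ depend only on $W$, whereas $\widetilde\pi, \pi, \widetilde\mu_a, \mu_a, \widetilde\tau, \tau$ are generally functions of the full $X$ and must remain inside $\mathbb{E}(\,\cdot\,\mid W)$.
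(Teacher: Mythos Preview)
Your proposal is correct and matches the paper's own proof, which simply states that the result follows from the argument in the proof of Lemma~\ref{lemma:pseudo-bias} together with iterated expectations. You have in fact spelled out more detail than the paper does, and your bookkeeping of which quantities are $W$-measurable versus $X$-measurable is exactly the point needed to justify pulling $\widetilde h_w^* - h_w^*$ outside the conditional expectation.
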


\begin{proof}
    The result follows from the argument in the proof of Lemma~\ref{lemma:pseudo-bias}, and applying iterated expectations.
\end{proof}

One can leverage the robust pseudo-outcome $\phi_w$ to estimate all the relevant quantities in a robust and efficient manner. For instance, one can construct a ``doubly robust'' learner of $\beta_w$ by regressing $\phi_w(O;\widehat{\mathbb{P}})$ on $W$. We will, for brevity, provide details only for estimating the optimal constrained $W$-based value $V_{w, \delta}$, defined in Proposition~\ref{prop:optimal-subset}.

Following the development in Section~\ref{sec:estimation-value}, let $\widehat{\tau}_w$ be estimated in an arbitrary manner, e.g., regressing $\widehat{\tau}$ on $W$. From this, we can define the plug-in estimator $\widehat{h}_w^* = \mathds{1}(\widehat{\tau}_w > 0)$. For a given estimator $\widehat{\beta}_w$ constructed from training data $D^n$ (e.g., a plug-in $\widehat{\tau}_w\widehat{h}_w^* - \widehat{\xi}_w$, or else a doubly robust learner), let $\widehat{r}_{1 - \delta}$ solve $ \mathbb{P}_n\left[\widehat{\beta}_w(W) > \widehat{r}_{1 - \delta}\right] = \delta$ up to $o_{\mathbb{P}}(n^{-1/2})$ error, and define the plug-in optimal constrained contact rule as $\widehat{\Delta}_{w, \delta}^* = \mathds{1}\left(\widehat{\beta}_w > \widehat{r}_{1 - \delta}\right)$. The proposed estimator of $V_{w,\delta}$ is then given by
\[\widehat{V}_{w,\delta} = \mathbb{P}_n\left[\widehat{\Delta}_{w,\delta}^*\phi_w(O; \widehat{\mathbb{P}}) + Y\right],\]
where $\phi_w(O; \widehat{\mathbb{P}})$ is obtained from equation~\eqref{eq:beta-pseudo-subset}, plugging in $(\widehat{\pi}, \widehat{\mu}_0, \widehat{\mu}_1)$ (and the derived nuisance estimates $\widehat{\tau}, \widehat{h}_w^*$).

We can develop the asymptotic properties of $\widehat{V}_{w, \delta}$ (in a result akin to Theorem~\ref{thm:est-value}), but first we need an appropriate margin condition (similar to Assumption~\ref{ass:margin}). We next present the relevant assumption, along with a formal result.

\begin{assumption}\label{ass:margin-subset}
    For some $a_w, b_w > 0$,
    \[\mathbb{P}\left[|\tau_w(W)| \leq t\right]
  \lesssim t^{a_w}, \text{ and }\mathbb{P}[|\beta_w(W) - r_{1 - \delta}
  | \leq t] \lesssim t^{b_w} \text{ for all } t \geq 0.\]
\end{assumption}

\begin{theorem}\label{thm:est-value-subset}
  Assume that $\lVert \widehat{\mu}_0 -\mu_0\rVert + \lVert \widehat{\mu}_1 -\mu_1\rVert + \lVert \widehat{\pi} -\pi\rVert + \lVert \widehat{\beta}_w -\beta_w\rVert + \lVert \widehat{\tau}_w -\tau_w\rVert + |\widehat{r}_{1 - \delta} - r_{1 - \delta}| = o_{\mathbb{P}}(1)$. Moreover, assume that there exists $\epsilon > 0, M > 0$ such that $\mathbb{P}[\epsilon \leq \pi \leq 1 - \epsilon] = \mathbb{P}[\epsilon \leq \widehat{\pi} \leq 1 - \epsilon] = 1$, $\mathbb{P}[|Y| \leq M] = 1$. Then, under Assumption~\ref{ass:margin-subset},
  \[\widehat{V}_{w,\delta} - V_{w,\delta} =  O_{\mathbb{P}}\left(\frac{1}{\sqrt{n}} + R_{1,n} + R_{2,w,n} + R_{3,w,n}\right),\]
  where 
  \[R_{1,n} = \left\lVert \widehat{\pi} - \pi \right\rVert
    \left(\lVert \widehat{\mu}_0 -\mu_0\rVert + \lVert \widehat{\mu}_1
      -\mu_1\rVert\right),\]
  and
  \[R_{2,w,n} = \left\lVert \widehat{\tau}_w - \tau_w
      \right\rVert_{\infty}^{1 + a_w}, \ R_{3,w,n} = \left(\lVert \widehat{\beta}_w - \beta_w
      \rVert_{\infty} + |\widehat{r}_{1 - \delta} - r_{1 -
        \delta}|\right)^{1 + b_w}.\] 
  If, in addition, $R_{1,n} + R_{2,w,n} + R_{3,w,n} = o_{\mathbb{P}}(n^{-1/2})$, then \[\sqrt{n}(\widehat{V}_{w,\delta} - V_{w,\delta}) \overset{d}{\to} \mathcal{N}\left(0, \sigma_w(\delta)^2\right),\]
  where $\sigma_w(\delta)^2 = \mathrm{Var}(\Delta_{w,\delta}^* \left\{\phi_w(O;\mathbb{P}) - r_{1-\delta}\right\} + Y)$.
\end{theorem}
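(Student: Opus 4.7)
\medskip

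\noindent\textbf{Proof proposal for Theorem~\ref{thm:est-value-subset}.}
The plan is to mirror the proof of Theorem~\ref{thm:est-value}, replacing $X$ with $W$ in the relevant places and substituting Lemma~\ref{lemma:pseudo-bias-subset} for Lemma~\ref{lemma:pseudo-bias}. I would begin by writing the decomposition
\[
\widehat{V}_{w,\delta} - V_{w,\delta}
= (\mathbb{P}_n - \mathbb{P})\bigl(\Delta_{w,\delta}^* \phi_w(O;\mathbb{P}) + Y\bigr) + \underbrace{(\mathbb{P}_n - \mathbb{P})\bigl(\widehat{\Delta}_{w,\delta}^*\phi_w(O;\widehat{\mathbb{P}}) - \Delta_{w,\delta}^*\phi_w(O;\mathbb{P})\bigr)}_{\mathrm{(i)}} + \underbrace{\mathbb{P}\bigl(\widehat{\Delta}_{w,\delta}^*\phi_w(O;\widehat{\mathbb{P}}) - \Delta_{w,\delta}^*\phi_w(O;\mathbb{P})\bigr)}_{\mathrm{(ii)}},
\]
and argue (i) and (ii) are controlled by $R_{1,n} + R_{2,w,n} + R_{3,w,n}$ modulo an $o_{\mathbb{P}}(n^{-1/2})$ slack coming from the approximate quantile condition. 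The leading empirical process term then yields the stated asymptotic normality via the standard CLT once we identify the recentering by $r_{1-\delta}\,\mathbb{P}_n\Delta_{w,\delta}^*$.

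For term (i), I would split it into $(\mathbb{P}_n - \mathbb{P})\bigl(\{\widehat{\Delta}_{w,\delta}^* - \Delta_{w,\delta}^*\}\phi_w(O;\widehat{\mathbb{P}})\bigr) + (\mathbb{P}_n - \mathbb{P})\bigl(\Delta_{w,\delta}^*\{\phi_w(O;\widehat{\mathbb{P}}) - \phi_w(O;\mathbb{P})\}\bigr)$. Each piece is handled by the same empirical-process tools used in Theorem~\ref{thm:est-value} (sample splitting plus an $L_2(\mathbb{P})$-consistency statement), provided we show $\lVert \phi_w(O;\widehat{\mathbb{P}}) - \phi_w(O;\mathbb{P})\rVert = o_{\mathbb{P}}(1)$ and $\lVert \widehat{\Delta}_{w,\delta}^* - \Delta_{w,\delta}^*\rVert = o_{\mathbb{P}}(1)$. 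The first bound follows from boundedness of $Y$ and $\widehat{\pi},\pi$ away from $0,1$, together with $L_2$-consistency of $\widehat{\mu}_a,\widehat{\pi}$ and of $\widehat{h}_w^* \to h_w^*$ (the latter from the margin bound $\lVert\widehat{h}_w^* - h_w^*\rVert^2 \lesssim t^{a_w} + t^{-1}\lVert\widehat{\tau}_w - \tau_w\rVert$ via Lemma~\ref{lemma:convbound}). The second bound follows by an analogous margin argument applied to $\widehat{\beta}_w$ and $\widehat{r}_{1-\delta}$.

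For the bias term (ii), I would write $\mathrm{(ii)} = \mathbb{P}\bigl(\widehat{\Delta}_{w,\delta}^*\{\phi_w(O;\widehat{\mathbb{P}}) - \phi_w(O;\mathbb{P})\}\bigr) + \mathbb{P}\bigl(\{\widehat{\Delta}_{w,\delta}^* - \Delta_{w,\delta}^*\}\beta_w\bigr)$, using $\mathbb{E}(\phi_w(O;\mathbb{P}) \mid W) = \beta_w$. The first piece is bounded via Lemma~\ref{lemma:pseudo-bias-subset}: the doubly robust term contributes $O_{\mathbb{P}}(R_{1,n})$ by Cauchy--Schwarz, the $\mathbb{E}[(\widetilde\tau - \tau)(\widetilde\pi - \pi) \mid W]$ term is likewise $O_{\mathbb{P}}(R_{1,n})$ after taking absolute values and applying Cauchy--Schwarz, and the indicator-difference piece $\mathbb{E}\bigl(\lvert\widehat{h}_w^* - h_w^*\rvert\,\lvert\tau_w\rvert\bigr)$ is $O_{\mathbb{P}}(R_{2,w,n})$ by the margin condition on $\tau_w$ (exactly the calculation used for $\tau$ in Theorem~\ref{thm:est-value}). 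The second piece splits as $\mathbb{P}\bigl((\widehat{\Delta}_{w,\delta}^* - \Delta_{w,\delta}^*)(\beta_w - r_{1-\delta})\bigr) + r_{1-\delta}\,\mathbb{P}(\widehat{\Delta}_{w,\delta}^* - \Delta_{w,\delta}^*)$; the first is $O_{\mathbb{P}}(R_{3,w,n})$ via the margin condition on $\beta_w$, and for the second I would invoke the approximate quantile condition $\mathbb{P}_n[\widehat{\Delta}_{w,\delta}^*] = \delta + o_{\mathbb{P}}(n^{-1/2})$ exactly as in the final display of the proof of Theorem~\ref{thm:est-value} to rewrite $r_{1-\delta}\mathbb{P}(\widehat{\Delta}_{w,\delta}^* - \Delta_{w,\delta}^*) = -r_{1-\delta}(\mathbb{P}_n - \mathbb{P})\Delta_{w,\delta}^* + o_{\mathbb{P}}(n^{-1/2})$.

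The main obstacle, and the only point where the argument departs substantively from Theorem~\ref{thm:est-value}, is the bias contribution from the conditional-expectation-over-$W$ terms in Lemma~\ref{lemma:pseudo-bias-subset}: because $(\widehat\pi, \widehat\mu_a)$ live at the finer resolution of $X$ while the bias is integrated against $\widehat\Delta_{w,\delta}^*$ which depends only on $W$, one must verify that iterated expectations still yield a genuine product structure. This works because $|\mathbb{E}(\{\widetilde\pi - \pi\}\{\widetilde\mu_a - \mu_a\}/\{\text{bdd}\} \mid W)| \leq \{\text{const}\}\cdot\mathbb{E}(|\widetilde\pi - \pi|\cdot|\widetilde\mu_a - \mu_a| \mid W)$, and a Cauchy--Schwarz bound on the outer $\mathbb{P}$-expectation still yields $R_{1,n}$. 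Assembling the three contributions gives the stated $O_{\mathbb{P}}(n^{-1/2} + R_{1,n} + R_{2,w,n} + R_{3,w,n})$ rate, and when the nuisance errors are $o_{\mathbb{P}}(n^{-1/2})$, the variance $\sigma_w(\delta)^2$ follows from a direct variance computation on $\Delta_{w,\delta}^*\{\phi_w(O;\mathbb{P}) - r_{1-\delta}\} + Y$ combined with Slutsky's theorem.
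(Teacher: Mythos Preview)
Your proposal is correct and follows essentially the same approach as the paper, which simply states that ``the proof of Theorem~\ref{thm:est-value-subset} follows the exact reasoning as the proof of Theorem~\ref{thm:est-value}.'' You have accurately unpacked that reasoning: the same three-term decomposition, the same empirical-process handling of term (i) via Lemmas~\ref{lemma:emp-process} and~\ref{lemma:quantile-emp-process}, and the same bias analysis of term (ii) using Lemma~\ref{lemma:pseudo-bias-subset} in place of Lemma~\ref{lemma:pseudo-bias}, with the margin conditions in Assumption~\ref{ass:margin-subset} controlling $R_{2,w,n}$ and $R_{3,w,n}$.
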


The proof of Theorem~\ref{thm:est-value-subset} follows the exact reasoning as the proof of Theorem~\ref{thm:est-value}. Moreover, similar comments to those that follow Theorem~\ref{thm:est-value} in Section~\ref{sec:estimation-value} are relevant again here. For instance, in the case that $R_{1,n} + R_{2,w,n} + R_{3,w,n} = o_{\mathbb{P}}(n^{-1/2})$, one can obtain asymptotically valid inference via simple Wald-based confidence intervals for $V_{w,\delta}$:
  $\widehat{V}_{w,\delta} \pm z_{1- \alpha/2}\frac{\widehat{\sigma}_w(\delta)}{\sqrt{n}}$, where
  \[\widehat{\sigma}_w(\delta)^2 = \mathbb{P}_n\left[\left(\widehat{\Delta}_{w,\delta}^* \left\{\phi_w(O;\widehat{\mathbb{P}}) - \widehat{r}_{1-\delta}\right\} + Y - \left\{\widehat{V}_{w,\delta} - \delta \widehat{r}_{1 - \delta}\right\}\right)^2\right].\]

\subsection{Restriction on contact rule only}

Now we consider contact rules $\Delta: \mathcal{W}\to [0,1]$, but unrestricted policies $h : \mathcal{X} \to \{0,1\}$. As before, we are interested in treatment rules $d(\Delta, h) = H_{\Delta} h + (1 - H_{\Delta})A$, where $H_{\Delta} \overset{d}=\mathrm{Bernoulli}(\Delta(W))$, and we work under Assumptions~\ref{ass:consistency}--\ref{ass:NUC}, such that $X$ are considered to be all the confounders for which we need to adjust.

For brevity, we provide only the formal results and value estimators for these treatment rules, as the arguments follow exactly as in preceding results. In this case, the appropriate priority score for targeted intervention is the conditional mean of the CPB given $W$: $\check{\beta}_w(W) \coloneqq \mathbb{E}(\beta \mid W)$.

\begin{proposition}\label{prop:optimal-subcont}
Consider the optimization problem
    \begin{align}
    \begin{split}\label{eq:opt-policy-subcont}
        \max_{h, \Delta} \; & \mathbb{E}[Y(d(\Delta, h))] \\
        \text{subject to } \; & \mathbb{E}(\Delta(W)) \leq \delta 
    \end{split}
    \end{align}
    where $\Delta$ is restricted to have domain $\mathcal{W}$. Under Assumptions~\ref{ass:consistency}--\ref{ass:NUC}, and assuming the $(1 - \delta)$-quantile of $\check{\beta}_w$ is unique, denoted $s_{1 - \delta}$, the contact rule $\check{\Delta}_{w, \delta}^*(W) = \mathds{1}(\check{\beta}_w(W) > s_{1 - \delta})$ and $h^*$ jointly solve~\eqref{eq:opt-policy-subcont}, achieving maximal value $\mathbb{E}[Y(d(\check{\Delta}_{w, \delta}^*, h^*))] = \mathbb{E}(\check{\Delta}_{w, \delta}^*\beta + Y) \eqqcolon \check{V}_{w, \delta}$.
\end{proposition}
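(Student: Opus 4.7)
The plan is to mirror the two-step structure of the proof of Proposition~\ref{prop:optimal}, with the key new ingredient being an iterated-expectations reduction that lets the argument go through with $\check{\beta}_w$ in place of $\beta$.

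First, I would show that for any fixed contact rule $\Delta:\mathcal{W}\to[0,1]$, the unrestricted optimum over policies $h:\mathcal{X}\to\{0,1\}$ is attained at $h^*$. By Proposition~\ref{prop:ident},
\[
\mathbb{E}[Y(d(\Delta,h^*))] - \mathbb{E}[Y(d(\Delta,h))] = \mathbb{E}\big(\Delta(W)\,\tau(X)\,(h^*(X) - h(X))\big).
\]
Because $\Delta\geq 0$ and $\tau(h^* - h) = |\tau|\{h^*(1-h)+(1-h^*)h\}\geq 0$ pointwise, the right-hand side is nonnegative. Crucially, the restriction $\Delta\in[0,1]$ rather than $\Delta\in\{0,1\}$ is irrelevant here: the integrand is nonnegative for every realization of $W$, so the inequality survives any averaging. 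This confirms that $h^*$ is optimal for every admissible $\Delta$, so the outer optimization reduces to one over contact rules only.

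Next, substituting $h^*$ and using Proposition~\ref{prop:ident} again, the objective becomes
\[
\mathbb{E}[Y(d(\Delta, h^*))] = \mathbb{E}\big(\Delta(W)\,\tau(X)(h^*(X)-\pi(X))\big) + \mathbb{E}(Y) = \mathbb{E}\big(\Delta(W)\beta(X)\big) + \mathbb{E}(Y).
\]
Since $\Delta$ is $\sigma(W)$-measurable, iterated expectations yield
\[
\mathbb{E}\big(\Delta(W)\beta(X)\big) = \mathbb{E}\big(\Delta(W)\,\mathbb{E}(\beta(X)\mid W)\big) = \mathbb{E}\big(\Delta(W)\,\check{\beta}_w(W)\big).
\]
The problem thus reduces to maximizing $\mathbb{E}(\Delta(W)\check{\beta}_w(W))$ over $\Delta:\mathcal{W}\to[0,1]$ subject to $\mathbb{E}(\Delta(W))\leq\delta$.

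Finally, I would invoke the same linear-programming style argument used in the proof of Proposition~\ref{prop:optimal}, now with $(W,\check{\beta}_w,s_{1-\delta})$ in place of $(X,\beta,q_{1-\delta})$: for any admissible $\Delta$,
\[
\mathbb{E}(\check{\Delta}_{w,\delta}^*\check{\beta}_w) - \mathbb{E}(\Delta\check{\beta}_w) \geq s_{1-\delta}\big\{\mathbb{E}(\check{\Delta}_{w,\delta}^*) - \mathbb{E}(\Delta)\big\} \geq 0,
\]
using the pointwise identity $(\check{\Delta}_{w,\delta}^* - \Delta)(\check{\beta}_w - s_{1-\delta})\geq 0$ together with $\mathbb{E}(\check{\Delta}_{w,\delta}^*) = \delta \geq \mathbb{E}(\Delta)$, which holds under the uniqueness of $s_{1-\delta}$. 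Collecting pieces, the maximal value is $\mathbb{E}(\check{\Delta}_{w,\delta}^*\check{\beta}_w + Y) = \mathbb{E}(\check{\Delta}_{w,\delta}^*\beta + Y)$ by a final application of iterated expectations, matching the displayed value $\check{V}_{w,\delta}$.

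The only delicate step is the reduction by iterated expectations; once that is in place the argument is a direct transcription of Proposition~\ref{prop:optimal}'s proof, so I do not anticipate a genuine obstacle. The main thing to be careful about is confirming that the restriction on the domain of $\Delta$ (but not of $h$) does not interact with the inner optimization over $h$: since the inequality $\tau(h^*-h)\geq 0$ is pointwise in $X$, multiplying by the nonnegative $W$-measurable $\Delta(W)$ preserves the sign regardless of whether $\Delta$ uses full or partial information.
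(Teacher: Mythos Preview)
Your proposal is correct and matches the paper's approach, which simply states that the argument follows exactly as in the preceding results (Propositions~\ref{prop:optimal} and \ref{prop:optimal-subset}). Your final contact-rule step is presented in a slightly more streamlined way than the paper's $H_{\Delta}$-based decomposition in the proof of Proposition~\ref{prop:optimal}, but the two are equivalent; note that the last inequality $s_{1-\delta}\{\mathbb{E}(\check{\Delta}_{w,\delta}^*) - \mathbb{E}(\Delta)\}\geq 0$ implicitly uses $s_{1-\delta}\geq 0$, which holds since $\check{\beta}_w=\mathbb{E}(\beta\mid W)\geq 0$.
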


For a given estimator $\widehat{\check{\beta}}_w$ constructed from training data $D^n$ (e.g., regressing $\widehat{\beta}$ on $W$, or else a doubly robust learner regressing $\phi(O; \widehat{\mathbb{P}})$ on $W$), let $\widehat{s}_{1 - \delta}$ solve $ \mathbb{P}_n\left[\widehat{\check{\beta}}_w(W) > \widehat{s}_{1 - \delta}\right] = \delta$ up to $o_{\mathbb{P}}(n^{-1/2})$ error, and define the plug-in optimal constrained contact rule as $\widehat{\check{\Delta}}_{w, \delta}^* = \mathds{1}\left(\widehat{\check{\beta}}_w > \widehat{s}_{1 - \delta}\right)$. The proposed estimator of $\check{V}_{w,\delta}$ is then given by
\[\widehat{\check{V}}_{w,\delta} = \mathbb{P}_n\left[\widehat{\check{\Delta}}_{w,\delta}^*\phi(O; \widehat{\mathbb{P}}) + Y\right],\]
where $\phi(O; \widehat{\mathbb{P}})$ is obtained from equation~\eqref{eq:beta-pseudo}, plugging in $(\widehat{\pi}, \widehat{\mu}_0, \widehat{\mu}_1)$ (and the derived nuisance estimates $\widehat{\tau}, \widehat{h}^*$).

\begin{assumption}\label{ass:margin-subcont}
    For some $\check{b}_w > 0$,
    \[\mathbb{P}[|\check{\beta}_w(W) - s_{1 - \delta}
  | \leq t] \lesssim t^{\check{b}_w} \text{ for all } t \geq 0.\]
\end{assumption}

\begin{theorem}\label{thm:est-value-subcont}
  Assume that $\lVert \widehat{\mu}_0 -\mu_0\rVert + \lVert \widehat{\mu}_1 -\mu_1\rVert + \lVert \widehat{\pi} -\pi\rVert + \lVert \widehat{\check{\beta}}_w -\check{\beta}_w\rVert + |\widehat{s}_{1 - \delta} - s_{1 - \delta}| = o_{\mathbb{P}}(1)$. Moreover, assume that there exists $\epsilon > 0, M > 0$ such that $\mathbb{P}[\epsilon \leq \pi \leq 1 - \epsilon] = \mathbb{P}[\epsilon \leq \widehat{\pi} \leq 1 - \epsilon] = 1$, $\mathbb{P}[|Y| \leq M] = 1$. Then, under Assumptions~\ref{ass:margin} and \ref{ass:margin-subcont},
  \[\widehat{\check{V}}_{w,\delta} - \check{V}_{w,\delta} =  O_{\mathbb{P}}\left(\frac{1}{\sqrt{n}} + R_{1,n} + R_{2,n} + \check{R}_{3,w,n}\right),\]
  where 
  \[R_{1,n} = \left\lVert \widehat{\pi} - \pi \right\rVert
    \left(\lVert \widehat{\mu}_0 -\mu_0\rVert + \lVert \widehat{\mu}_1
      -\mu_1\rVert\right),\]
  and
  \[R_{2,n} = \left\lVert \widehat{\tau} - \tau
      \right\rVert_{\infty}^{1 + a}, \ \check{R}_{3,w,n} = \left(\lVert \widehat{\check{\beta}}_w - \check{\beta}_w
      \rVert_{\infty} + |\widehat{s}_{1 - \delta} - s_{1 -
        \delta}|\right)^{1 + \check{b}_w}.\] 
  If, in addition, $R_{1,n} + R_{2,n} + \check{R}_{3,w,n} = o_{\mathbb{P}}(n^{-1/2})$, then \[\sqrt{n}(\widehat{\check{V}}_{w,\delta} - \check{V}_{w,\delta}) \overset{d}{\to} \mathcal{N}\left(0, \check{\sigma}_w(\delta)^2\right),\]
  where $\check{\sigma}_w(\delta)^2 = \mathrm{Var}(\check{\Delta}_{w,\delta}^* \left\{\phi(O;\mathbb{P}) - s_{1-\delta}\right\} + Y)$.
  
\end{theorem}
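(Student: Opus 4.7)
The plan is to mimic the proof of Theorem~\ref{thm:est-value} essentially verbatim, replacing the priority score $\beta$ with $\check\beta_w = \mathbb{E}(\beta \mid W)$ and the contact rule $\Delta_{\delta}^*$ with $\check\Delta_{w,\delta}^* = \mathds{1}(\check\beta_w > s_{1-\delta})$. The crucial observation enabling this is that the \emph{same} pseudo-outcome $\phi$ from~\eqref{eq:beta-pseudo} continues to work here, because $\mathbb{E}(\phi(O;\mathbb{P})\mid W) = \mathbb{E}(\beta\mid W) = \check\beta_w$ by iterated expectations on top of $\mathbb{E}(\phi(O;\mathbb{P})\mid X) = \beta$. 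I therefore begin with the decomposition
\begin{align*}
\widehat{\check V}_{w,\delta} - \check V_{w,\delta}
&= (\mathbb{P}_n - \mathbb{P})\bigl(\check\Delta_{w,\delta}^*\phi(O;\mathbb{P}) + Y\bigr) \\
&\quad + (\mathbb{P}_n - \mathbb{P})\bigl(\widehat{\check\Delta}_{w,\delta}^*\phi(O;\widehat{\mathbb{P}}) - \check\Delta_{w,\delta}^*\phi(O;\mathbb{P})\bigr) \\
&\quad + \mathbb{P}\bigl(\widehat{\check\Delta}_{w,\delta}^*\phi(O;\widehat{\mathbb{P}}) - \check\Delta_{w,\delta}^*\phi(O;\mathbb{P})\bigr).
\end{align*}
The first term is a mean-zero sample average of a bounded random variable; it is $O_{\mathbb{P}}(n^{-1/2})$ and (after absorbing the quantile-calibration correction below) will contribute exactly $\check\sigma_w(\delta)^2$ to the asymptotic variance.

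For the empirical process remainder, I split into $(\mathbb{P}_n - \mathbb{P})((\widehat{\check\Delta}_{w,\delta}^* - \check\Delta_{w,\delta}^*)\phi(O;\widehat{\mathbb{P}}))$ and $(\mathbb{P}_n - \mathbb{P})(\check\Delta_{w,\delta}^*(\phi(O;\widehat{\mathbb{P}}) - \phi(O;\mathbb{P})))$. The latter is handled identically to the corresponding step in Theorem~\ref{thm:est-value} via sample splitting plus $L_2(\mathbb{P})$ consistency of the nuisance estimates, since that argument never uses that the indicator is a function of $X$ rather than $W$. The former requires the natural $W$-based analog of Lemma~\ref{lemma:quantile-emp-process}: using Lemma~\ref{lemma:diff}, $\|\widehat{\check\Delta}_{w,\delta}^* - \check\Delta_{w,\delta}^*\|^2 \lesssim t^{\check b_w} + t^{-1}(\|\widehat{\check\beta}_w - \check\beta_w\| + |\widehat s_{1-\delta} - s_{1-\delta}|)$ by Assumption~\ref{ass:margin-subcont}, so this $L_2(\mathbb{P})$ norm is $o_{\mathbb{P}}(1)$, and the sample-split empirical process inequality yields $o_{\mathbb{P}}(n^{-1/2})$.

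For the bias term, I split further into $\mathbb{P}(\widehat{\check\Delta}_{w,\delta}^*(\phi(O;\widehat{\mathbb{P}}) - \phi(O;\mathbb{P})))$ and $\mathbb{P}((\widehat{\check\Delta}_{w,\delta}^* - \check\Delta_{w,\delta}^*)\phi(O;\mathbb{P}))$. For the first piece, I apply Lemma~\ref{lemma:pseudo-bias} with $\widetilde{\mathbb{P}} = \widehat{\mathbb{P}}$ conditionally on $X$ (noting $\widehat{\check\Delta}_{w,\delta}^*$ is $X$-measurable since $W = g(X)$), and use Assumption~\ref{ass:margin} to bound $\mathbb{P}(|\widehat h^* - h^*||\tau|) \lesssim R_{2,n}$, yielding $O_{\mathbb{P}}(R_{1,n} + R_{2,n})$. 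For the second piece I use that $\widehat{\check\Delta}_{w,\delta}^* - \check\Delta_{w,\delta}^*$ is $W$-measurable to rewrite it, via iterated expectations, as $\mathbb{P}((\widehat{\check\Delta}_{w,\delta}^* - \check\Delta_{w,\delta}^*)\check\beta_w)$. Adding and subtracting $s_{1-\delta}$ gives
\[
\mathbb{P}\bigl((\widehat{\check\Delta}_{w,\delta}^* - \check\Delta_{w,\delta}^*)(\check\beta_w - s_{1-\delta})\bigr) + s_{1-\delta}\,\mathbb{P}(\widehat{\check\Delta}_{w,\delta}^* - \check\Delta_{w,\delta}^*),
\]
the first summand being $O_{\mathbb{P}}(\check R_{3,w,n})$ by Lemma~\ref{lemma:diff} together with Assumption~\ref{ass:margin-subcont}, and the second resolved by the quantile calibration $\mathbb{P}_n[\widehat{\check\Delta}_{w,\delta}^*] = \delta + o_{\mathbb{P}}(n^{-1/2})$, yielding $-s_{1-\delta}(\mathbb{P}_n - \mathbb{P})\check\Delta_{w,\delta}^* + o_{\mathbb{P}}(n^{-1/2})$ exactly as at the end of the proof of Theorem~\ref{thm:est-value}.

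Assembling all pieces gives the asymptotically linear representation
\[
\widehat{\check V}_{w,\delta} - \check V_{w,\delta} = (\mathbb{P}_n - \mathbb{P})\bigl(\check\Delta_{w,\delta}^*\{\phi(O;\mathbb{P}) - s_{1-\delta}\} + Y\bigr) + O_{\mathbb{P}}(R_{1,n} + R_{2,n} + \check R_{3,w,n}) + o_{\mathbb{P}}(n^{-1/2}),
\]
from which both the stated rate and, under the $o_{\mathbb{P}}(n^{-1/2})$ bias condition, the asymptotic normality with variance $\check\sigma_w(\delta)^2$ follow by the CLT. The main obstacle, though largely a matter of bookkeeping, is the $W$-analog of Lemma~\ref{lemma:quantile-emp-process}: one must verify $L_2(\mathbb{P})$-convergence of the plug-in indicator $\widehat{\check\Delta}_{w,\delta}^*$ cleanly through Assumption~\ref{ass:margin-subcont}, and check that the contribution of the estimated cutoff $\widehat s_{1-\delta}$ is absorbed into the margin-controlled bias term without disturbing the remaining arguments from Theorem~\ref{thm:est-value}.
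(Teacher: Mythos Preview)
Your proposal is correct and follows exactly the approach the paper intends: the paper states only that ``the arguments follow exactly as in preceding results,'' and you have correctly carried out the substitution of $\check\beta_w$, $\check\Delta_{w,\delta}^*$, and $s_{1-\delta}$ into the proof of Theorem~\ref{thm:est-value}, identifying the key points that the same pseudo-outcome $\phi$ works because $\mathbb{E}(\phi(O;\mathbb{P})\mid W)=\check\beta_w$, that Assumption~\ref{ass:margin} controls $R_{2,n}$ while Assumption~\ref{ass:margin-subcont} controls $\check R_{3,w,n}$, and that the $W$-analog of Lemma~\ref{lemma:quantile-emp-process} (including the Donsker step for the data-dependent cutoff $\widehat s_{1-\delta}$) handles the empirical-process remainder.
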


Another observation is that the treatment rules considered in this subsection---in contrast to those in Section~\ref{app:rest-both}---are directly comparable to those that are not restricted and can depend fully on $X$. In particular, when $\delta = 1$, $d(\check{\Delta}_{w, 1}^*, h^*) = d(\Delta_1^*, h^*)= h^*$, so that $\check{V}_{w, 1} = V_1 = \mathbb{E}(Y(h^*))$. Consequently, we can consider a $W$-restricted AUPBC metric, which takes the following integral form:
\begin{equation}\label{eq:AUPBC-subset}
    \mathcal{A}_w = \int_0^1 \mathbb{E}\left((\check{\Delta}_{w,\delta}^* - \delta)\beta\right) \, d \delta.
\end{equation}
Its normalized version is $\overline{\mathcal{A}}_w = 2\mathcal{A}_w / \mathbb{E}(\beta) \in [0,1]$, and estimation and inference for these quantities follows similarly to their counterparts $\mathcal{A}$ and $\overline{\mathcal{A}}$. Note that, by construction, $\overline{\mathcal{A}}_w \leq \overline{\mathcal{A}}$, and the gap between these two summarizes how much more the information in $X$, compared to $W$ only, helps in targeting optimal interventions under budget constraints.

\section{Auxiliary \& Technical Lemmas}
In this appendix, we present several auxiliary results that are used in the proofs of other results.

\begin{lemma}[Lemma 1 in \citet{kennedy2020b}] \label{lemma:diff}
    Let $x, z \in \mathbb{R}$ be arbitrary. Then
    \[|\mathds{1}(x > 0) - \mathds{1}(z>0)| \leq \mathds{1}(\max{\{|x|, |z|\}}
      \leq |x - z|).\]
  \end{lemma}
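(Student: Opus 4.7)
The plan is a straightforward case analysis based on the signs of $x$ and $z$. The key observation is that $|\mathds{1}(x>0) - \mathds{1}(z>0)|$ only takes the values $0$ or $1$, so the inequality is only nontrivial when the two indicators disagree. The main obstacle, to the extent there is one, is simply bookkeeping on the two symmetric sub-cases where the signs differ.

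First, I would dispose of the trivial cases: if either both $x,z > 0$ or both $x,z \leq 0$, then the left-hand side is $0$ and the inequality holds trivially (regardless of the right-hand side). This reduces matters to the case where $x$ and $z$ have strictly different signs in the sense relevant to the indicator $\mathds{1}(\cdot > 0)$.

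By the symmetry of the bound in $(x,z)$, I can assume without loss of generality that $x > 0$ and $z \leq 0$. Then $|x - z| = x - z = x + (-z) = |x| + |z|$. Since $|x|, |z| \geq 0$, we have $|x| + |z| \geq \max\{|x|, |z|\}$, which is exactly the statement that $\max\{|x|, |z|\} \leq |x - z|$, so the indicator on the right-hand side equals $1$ and the inequality $1 \leq 1$ holds.

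Combining the two cases yields the claim. I expect the entire argument to be three or four lines in final form; no real obstacle arises, since the lemma is essentially the observation that two real numbers can only straddle zero if their distance is at least as large as both of their absolute values.
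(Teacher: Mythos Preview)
Your proposal is correct and follows essentially the same approach as the paper: both dispose of the trivial case where the indicators agree, then in the nontrivial case show $\max\{|x|,|z|\}\le |x-z|$ by exploiting that $x$ and $z$ straddle zero. Your use of symmetry and the identity $|x-z|=|x|+|z|$ is a slightly cleaner packaging of the same computation the paper does by handling both sub-cases explicitly.
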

  \begin{proof}
    If $|\mathds{1}(x > 0) - \mathds{1}(z>0)| = 0$, the equality holds
    trivially. Otherwise, either $x > 0$ and $z \leq 0$, meaning
    $|x| = x \leq x + |z| = x - z = |x - z|$, or $x \leq 0$ and $z > 0$,
    meaning $|x| = -x \leq z - x = |x - z|$. Symmetrically,
    $|z| \leq |x - z|$ in these two cases, so that
    $\max{\{|x|, |z|\}} \leq |x - z|$ when
    $|\mathds{1}(x > 0) - \mathds{1}(z>0)| = 1$.
  \end{proof}

\begin{lemma}[Lemma 2 in \citet{kennedy2020b}] \label{lemma:emp-process}
    Let $\widehat{f}(o)$ be a function estimated from training data $D^n = (O_{01}, \ldots, O_{0n})$, and let $\mathbb{P}_n$ be the empirical measure on $O^n = (O_1, \ldots, O_n)$ where $D^n$ and $O^n$ are iid samples from $\mathbb{P}$ with $O^n \independent D^n$. Write $\mathbb{P}(h) = \int h(o)\,d\mathbb{P}(o)$ for the mean of any function $h$ (possibly data-dependent) over a new observation. Then
    \begin{equation}\label{eq:kennedy-lemma}
    (\mathbb{P}_n - \mathbb{P})(\widehat{f} - f) = O_{\mathbb{P}}\left(\frac{\lVert \widehat{f} - f\rVert}{\sqrt{n}}\right),
    \end{equation}
    where $\lVert h \rVert^2 = \int h(o)^2 \, d\mathbb{P}(o)$, for any $h$.
\end{lemma}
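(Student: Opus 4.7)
\textbf{Proof plan for Lemma~\ref{lemma:emp-process}.}
The plan is to exploit the independence of $D^n$ and $O^n$ and apply a conditional Chebyshev argument. Conditional on $D^n$, the function $\widehat{f}$ is deterministic, so $(\mathbb{P}_n - \mathbb{P})(\widehat{f} - f)$ is a centered empirical process indexed by a single fixed function evaluated at the iid sample $O^n$.

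First, I would verify that the conditional mean vanishes: since $O^n \independent D^n$, for any fixed realization of $D^n$ we have $\mathbb{E}[(\mathbb{P}_n - \mathbb{P})(\widehat{f} - f) \mid D^n] = 0$. Next, I would compute the conditional variance. Because $O_1, \ldots, O_n$ are iid given $D^n$,
\[\mathrm{Var}\big[(\mathbb{P}_n - \mathbb{P})(\widehat{f} - f) \mid D^n\big] = \frac{1}{n}\mathrm{Var}\big[(\widehat{f} - f)(O) \mid D^n\big] \leq \frac{1}{n}\mathbb{E}\big[(\widehat{f} - f)^2(O) \mid D^n\big] = \frac{\lVert \widehat{f} - f \rVert^2}{n},\]
where the final equality uses that the $L_2(\mathbb{P})$ norm integrates over a new observation $O$ independent of $D^n$, consistent with the notation $\mathbb{P}(h) = \int h(o) \, d\mathbb{P}(o)$ in the excerpt.

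Then I would apply conditional Chebyshev's inequality: for any $M > 0$,
\[\mathbb{P}\!\left[\,\left|(\mathbb{P}_n - \mathbb{P})(\widehat{f} - f)\right| > M \cdot \frac{\lVert \widehat{f} - f\rVert}{\sqrt{n}} \,\bigg|\, D^n\right] \leq \frac{1}{M^2}.\]
Since the right-hand side does not depend on $D^n$, iterated expectation gives the same unconditional bound. Choosing $M$ arbitrarily large shows that $(\mathbb{P}_n - \mathbb{P})(\widehat{f} - f)/\big(\lVert \widehat{f} - f\rVert/\sqrt{n}\big)$ is tight, i.e., $O_{\mathbb{P}}(1)$, which is precisely the desired statement.

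There is no substantive obstacle here; the only minor subtlety is notational, namely being careful that $\lVert \widehat{f} - f\rVert$ is itself a random quantity (measurable with respect to $D^n$), so that the tightness assertion is stated in the correct ratio form rather than as a deterministic rate. Sample splitting is what makes the argument this clean: without the independence of $\widehat{f}$ and $O^n$, one would generally need either Donsker-class conditions or empirical process maximal inequalities to control $(\mathbb{P}_n - \mathbb{P})(\widehat{f} - f)$ at the same rate.
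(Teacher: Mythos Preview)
Your proposal is correct and essentially identical to the paper's own proof: both condition on $D^n$, note the conditional mean is zero, bound the conditional variance by $\lVert \widehat{f}-f\rVert^2/n$, apply Chebyshev (equivalently, Markov on the square) conditionally, and then average over $D^n$ to conclude tightness of the ratio. Your remark about $\lVert \widehat{f}-f\rVert$ being $D^n$-measurable and the role of sample splitting is exactly the point.
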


\begin{proof}
    Note that
    \[\mathbb{E}(\mathbb{P}_n[\widehat{f} - f ] \mid D^n) = \mathbb{E}(\widehat{f}(O) - f(O) \mid D^n) = \mathbb{P}(\widehat{f} - f),\]
    by identical distribution, $O^n \independent D^n$, and by definition of the operator $\mathbb{P}$. Moreover,
    \[\mathrm{Var}((\mathbb{P}_n - \mathbb{P})[\widehat{f} - f ]\mid D^n) = \mathrm{Var}(\mathbb{P}_n[\widehat{f} - f ]\mid D^n) = \frac{1}{n}\mathrm{Var}(\widehat{f} - f \mid D^n) \leq \frac{1}{n}\lVert \widehat{f} - f\rVert^2,\]
    by independence and identical distribution, and using the fact that $\mathrm{Var}(W) \leq \mathbb{E}(W^2)$ for any $W$. Thus, $\mathbb{E}(\{(\mathbb{P}_n - \mathbb{P})[\widehat{f} - f ]\}^2 \mid D^n) \leq \frac{1}{n} \lVert \widehat{f} - f\rVert^2$, and so
    \begin{align*}
    \mathbb{P}\left[\sqrt{n}\frac{(\mathbb{P}_n - \mathbb{P})(\widehat{f} - f)}{\lVert \widehat{f} - f\rVert} \geq t\right] 
    = \mathbb{E}\left(\mathbb{P}\left[\sqrt{n}\frac{(\mathbb{P}_n - \mathbb{P})(\widehat{f} - f)}{\lVert \widehat{f} - f\rVert} \geq t \, \bigg| \, D^n\right]\right) \leq \frac{1}{t^2},
    \end{align*}
    by applying Markov's inequality conditional on $D^n$. For any $\epsilon > 0$, we can choose $t = \frac{1}{\sqrt{\epsilon}}$ to bound this probability by $\epsilon$, which yields~\eqref{eq:kennedy-lemma}.
\end{proof}

    \begin{lemma} \label{lemma:convbound}
    Suppose that for a given sequence $X_n$, one can find for any
    $\epsilon > 0$ another sequence $Z_n^{(\epsilon)} \geq 0$ such that
    $|X_n| \leq \epsilon + Z_n^{(\epsilon)}$ and
    $Z_n^{(\epsilon)} = o_{\mathbb{P}}(1)$. Then $X_n = o_{\mathbb{P}}(1)$.
  \end{lemma}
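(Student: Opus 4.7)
\textbf{Proof proposal for Lemma~\ref{lemma:convbound}.}
The plan is to directly verify the definition of convergence in probability: show that for every $\eta > 0$, $\mathbb{P}[|X_n| > \eta] \to 0$ as $n \to \infty$. The key idea is to exploit the fact that the bound $|X_n| \le \epsilon + Z_n^{(\epsilon)}$ is available for \emph{every} $\epsilon > 0$, so we can tune $\epsilon$ against the chosen tolerance $\eta$.

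First, fix an arbitrary $\eta > 0$. The natural choice is $\epsilon = \eta/2$, which by hypothesis yields a sequence $Z_n^{(\eta/2)} \ge 0$ with $|X_n| \le \eta/2 + Z_n^{(\eta/2)}$ and $Z_n^{(\eta/2)} = o_{\mathbb{P}}(1)$. On the event $\{|X_n| > \eta\}$, the bound forces $Z_n^{(\eta/2)} > \eta - \eta/2 = \eta/2$, so we obtain the inclusion $\{|X_n| > \eta\} \subseteq \{Z_n^{(\eta/2)} > \eta/2\}$. Taking probabilities, $\mathbb{P}[|X_n| > \eta] \le \mathbb{P}[Z_n^{(\eta/2)} > \eta/2] \to 0$, where the limit uses $Z_n^{(\eta/2)} = o_{\mathbb{P}}(1)$ (with $\eta/2$ fixed). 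Since $\eta > 0$ was arbitrary, this gives $X_n = o_{\mathbb{P}}(1)$.

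There is essentially no obstacle here; the only subtlety worth noting is that the sequence $Z_n^{(\epsilon)}$ may depend on $\epsilon$, so one must be careful to fix $\epsilon$ (as a function of the target tolerance $\eta$) \emph{before} taking the limit in $n$, rather than trying to let $\epsilon \to 0$ simultaneously with $n \to \infty$. With the choice $\epsilon = \eta/2$ made first, the argument reduces to a single application of the definition of $o_{\mathbb{P}}(1)$ for the fixed sequence $Z_n^{(\eta/2)}$.
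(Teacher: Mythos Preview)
Your proof is correct and takes essentially the same approach as the paper: fix the tolerance, halve it to choose the $\epsilon$ in the hypothesis, and bound $\mathbb{P}[|X_n| > \eta]$ by $\mathbb{P}[Z_n^{(\eta/2)} > \eta/2] \to 0$. The only difference is cosmetic (you use $\eta$ where the paper reuses the symbol $\epsilon$).
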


    \begin{proof}
    Fixing $\epsilon > 0$, consider a non-negative sequence
    $Z_n^{(\epsilon / 2)} = o_{\mathbb{P}}(1)$ satisfying
    $|X_n| \leq \epsilon / 2 + Z_n^{(\epsilon / 2)}$. Then
  \[{\mathbb{P}}[|X_n| > \epsilon] \leq {\mathbb{P}}[\epsilon / 2 + Z_n^{(\epsilon / 2)} >
    \epsilon] = {\mathbb{P}}[Z_n^{(\epsilon / 2)} > \epsilon / 2] \to 0 \text{ as
    } n \to \infty,\] since $Z_n^{(\epsilon / 2)} = o_{\mathbb{P}}(1)$, thus proving the result.
\end{proof}

\begin{lemma}\label{lemma:quantile-emp-process}
Under the conditions of Theorem~\ref{thm:est-value},
    \[\left(\mathbb{P}_n - \mathbb{P}\right)\left(\left\{\widehat{\Delta}_{\delta}^* - \Delta_{\delta}^*\right\}\phi(O; \mathbb{P})\right) = o_{\mathbb{P}}(n^{-1/2}),\]
    and similarly 
    \[\left(\mathbb{P}_n - \mathbb{P}\right)\left(\left\{\widehat{\Delta}_{\delta}^* - \Delta_{\delta}^*\right\}\phi(O; \widehat{\mathbb{P}})\right) = o_{\mathbb{P}}(n^{-1/2}), \ \left(\mathbb{P}_n - \mathbb{P}\right)\left(\widehat{\Delta}_{\delta}^* - \Delta_{\delta}^*\right) = o_{\mathbb{P}}(n^{-1/2}).\]
\end{lemma}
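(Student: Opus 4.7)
The plan is to invoke Lemma~\ref{lemma:emp-process} three times, treating the nuisance functions and the quantile estimate $\widehat{q}_{1-\delta}$ as built from the training sample $D^n$. For a function $\widehat{f}$ learned from $D^n$ and its population target $f$, the lemma bounds $(\mathbb{P}_n-\mathbb{P})(\widehat{f}-f)$ by $O_{\mathbb{P}}(n^{-1/2}\lVert \widehat{f}-f\rVert)$. Hence, for each of the three displayed equalities, it suffices to show that the relevant $L_2(\mathbb{P})$ norm is $o_{\mathbb{P}}(1)$, which will upgrade the bound to $o_{\mathbb{P}}(n^{-1/2})$.

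For the first statement I would take $\widehat{f}(O)=\widehat{\Delta}_{\delta}^*(X)\phi(O;\mathbb{P})$ and $f(O)=\Delta_{\delta}^*(X)\phi(O;\mathbb{P})$; for the second, replace $\phi(\cdot;\mathbb{P})$ by $\phi(\cdot;\widehat{\mathbb{P}})$; for the third, take the constant function $1$. Under the positivity assumption $\pi,\widehat{\pi}\in[\epsilon,1-\epsilon]$ and $|Y|\leq M$, the pseudo-outcomes $\phi(\cdot;\mathbb{P})$ and $\phi(\cdot;\widehat{\mathbb{P}})$ are uniformly bounded by some constant $C$. Since $(\widehat{\Delta}_{\delta}^*-\Delta_{\delta}^*)^2=|\widehat{\Delta}_{\delta}^*-\Delta_{\delta}^*|$ (the difference of $\{0,1\}$-valued variables), each of the three squared $L_2(\mathbb{P})$ norms is dominated by $C^2\,\mathbb{P}[|\widehat{\Delta}_{\delta}^*-\Delta_{\delta}^*|]$.

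The crux is therefore to show $\mathbb{P}[|\widehat{\Delta}_{\delta}^*-\Delta_{\delta}^*|]=o_{\mathbb{P}}(1)$. By Lemma~\ref{lemma:diff} applied to the indicators of $\{\widehat{\beta}-\widehat{q}_{1-\delta}>0\}$ and $\{\beta-q_{1-\delta}>0\}$,
\[
|\widehat{\Delta}_{\delta}^*-\Delta_{\delta}^*|\leq\mathds{1}\!\left(|\beta-q_{1-\delta}|\leq|(\widehat{\beta}-\beta)-(\widehat{q}_{1-\delta}-q_{1-\delta})|\right).
\]
For any $t>0$, I would split on whether $|(\widehat{\beta}-\beta)-(\widehat{q}_{1-\delta}-q_{1-\delta})|$ exceeds $t$: on the small-discrepancy event the margin condition in Assumption~\ref{ass:margin} bounds the probability by $\lesssim t^{b}$, and on the complement Markov's inequality together with the consistency assumption $\lVert\widehat{\beta}-\beta\rVert+|\widehat{q}_{1-\delta}-q_{1-\delta}|=o_{\mathbb{P}}(1)$ controls the contribution by $t^{-2}(\lVert\widehat{\beta}-\beta\rVert^{2}+(\widehat{q}_{1-\delta}-q_{1-\delta})^{2})$. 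Taking $t=t_{n}\to 0$ slowly (formally via Lemma~\ref{lemma:convbound}) then yields the desired $o_{\mathbb{P}}(1)$ bound, completing all three empirical-process claims.

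The main obstacle is that Lemma~\ref{lemma:emp-process} as stated requires $\widehat{f}$ to be measurable with respect to $D^n$ alone, whereas $\widehat{q}_{1-\delta}$ is defined through the batch empirical measure $\mathbb{P}_n$ and so depends on $O^n$ as well as $D^n$. I would resolve this by substituting in the population $(1-\delta)$-quantile $q^{\ast}_{1-\delta}$ of the conditional distribution of $\widehat{\beta}(X)$ given $D^n$, which is genuinely $D^n$-measurable, and then arguing that $|\widehat{q}_{1-\delta}-q^{\ast}_{1-\delta}|=O_{\mathbb{P}}(n^{-1/2})$ via a standard sample-quantile argument applied conditionally on $D^n$ (using that the CDF of $\widehat{\beta}$ is well-behaved near its $(1-\delta)$-quantile, which one gets from the margin-type density bound). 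This swap costs only an $O_{\mathbb{P}}(n^{-1/2})$ term that is absorbed into the stated $o_{\mathbb{P}}(n^{-1/2})$ conclusion, and is where most of the technical care is needed.
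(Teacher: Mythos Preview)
Your decomposition and the margin-based $L_2$ consistency argument for $\mathbb{P}[|\widehat{\Delta}_{\delta}^* - \Delta_{\delta}^*|]=o_{\mathbb{P}}(1)$ are fine, and match what the paper does for the portion of the problem where all random inputs are $D^n$-measurable. The gap is precisely where you flag it yourself, and your proposed fix does not close it. After swapping $\widehat{q}_{1-\delta}$ for a $D^n$-measurable surrogate $q^{\ast}_{1-\delta}$, you still must control
\[
(\mathbb{P}_n - \mathbb{P})\Big(\big\{\mathds{1}(\widehat{\beta} > \widehat{q}_{1-\delta}) - \mathds{1}(\widehat{\beta} > q^{\ast}_{1-\delta})\big\}\,\phi(O;\mathbb{P})\Big).
\]
A crude bound via $|\mathbb{P}_n(\cdot)|+|\mathbb{P}(\cdot)|$, together with $|\widehat{q}_{1-\delta}-q^{\ast}_{1-\delta}|=O_{\mathbb{P}}(n^{-1/2})$ and a density bound, gives only $O_{\mathbb{P}}(n^{-1/2})$, not $o_{\mathbb{P}}(n^{-1/2})$; your sentence ``$O_{\mathbb{P}}(n^{-1/2})$ \ldots absorbed into the stated $o_{\mathbb{P}}(n^{-1/2})$ conclusion'' is a non sequitur. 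Lemma~\ref{lemma:emp-process} cannot be invoked here either, since $\widehat{q}_{1-\delta}$ remains $O^n$-dependent. (There is also a minor issue that the margin condition is stated for the distribution of $\beta$, not of $\widehat{\beta}$, so the density regularity you invoke for $\widehat{\beta}$ at its quantile is an extra unverified assumption.)

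The paper's resolution is different: it decomposes $\widehat{\Delta}_{\delta}^* - \Delta_{\delta}^*$ as $\{\mathds{1}(\widehat{\beta}>q_{1-\delta})-\mathds{1}(\beta>q_{1-\delta})\}+\{\mathds{1}(\widehat{\beta}>\widehat{q}_{1-\delta})-\mathds{1}(\widehat{\beta}>q_{1-\delta})\}$, using the \emph{true} quantile $q_{1-\delta}$ as the anchor. The first piece is fully $D^n$-measurable and is handled by Lemma~\ref{lemma:emp-process} exactly as you propose. For the second piece, conditionally on $D^n$ the function $\widehat{\beta}$ is fixed, so the relevant class $\{\mathds{1}(\widehat{\beta}(\cdot)>b)\,\phi(\cdot;\mathbb{P}):b\in\mathbb{R}\}$ is Donsker (indicators of half-lines composed with a fixed function, multiplied by a bounded function). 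One then shows the $L_2(\mathbb{P})$ norm of the difference at $b=\widehat{q}_{1-\delta}$ versus $b=q_{1-\delta}$ is $o_{\mathbb{P}}(1)$ via the margin condition on $\beta$ and consistency of $\widehat{q}_{1-\delta}$, and applies asymptotic equicontinuity (Lemma~19.24 in van der Vaart) to conclude $o_{\mathbb{P}}(n^{-1/2})$. This Donsker step is the missing ingredient in your plan; without it, the $O^n$-dependence of $\widehat{q}_{1-\delta}$ cannot be upgraded from $O_{\mathbb{P}}(n^{-1/2})$ to $o_{\mathbb{P}}(n^{-1/2})$.
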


\begin{proof}
    % Note that by construction of the quantile $\widehat{q}_{1 - \delta}$,
    % \[\mathbb{P}_n\left[\mathds{1}(\widehat{\beta} > \widehat{q}_{1 - \delta}) - \delta\right] = o_{\mathbb{P}}(n^{-1/2}).\]
    % By definition, $\mathbb{P}\left(\mathds{1}(\beta > q_{1 - \delta}) - \delta\right) = 0$. 
    We show the first result, as the other two are similar. Observe that
    \begin{align*}
    \left(\mathbb{P}_n - \mathbb{P}\right)\left(\left\{\widehat{\Delta}_{\delta}^* - \Delta_{\delta}^*\right\}\phi(O; \mathbb{P})\right)
    &= \left(\mathbb{P}_n - \mathbb{P}\right)\left(\left\{\mathds{1}(\widehat{\beta} > \widehat{q}_{1 - \delta}) - \mathds{1}(\beta > q_{1 - \delta})\right\}\phi(O; \mathbb{P})\right) \\
    &= \left(\mathbb{P}_n - \mathbb{P}\right)\left(\left\{\mathds{1}(\widehat{\beta} > q_{1 - \delta}) - \mathds{1}(\beta > q_{1 - \delta})\right\}\phi(O; \mathbb{P})\right) \\
    & \quad \quad + \left(\mathbb{P}_n - \mathbb{P}\right)\left(\left\{\mathds{1}(\widehat{\beta} > \widehat{q}_{1 - \delta}) - \mathds{1}(\widehat{\beta} > q_{1 - \delta})\right\}\phi(O; \mathbb{P})\right)
    \end{align*}

    For the first of these two terms, we use consistency and the fact that we used sample splitting. The second term requires more careful analysis since $\widehat{q}_{1 - \delta}$ depends on $O^n$.

    For the first term, note that for any $t > 0$,
    \begin{align*}
        \left\lVert \left\{\mathds{1}(\widehat{\beta} > q_{1 - \delta}) - \mathds{1}(\beta > q_{1 - \delta})\right\}\phi(O; \mathbb{P})\right\rVert^2
        &= \mathbb{P}\left(\left\{\mathds{1}(\widehat{\beta} > q_{1 - \delta}) - \mathds{1}(\beta > q_{1 - \delta})\right\}^2\phi(O; \mathbb{P})^2\right) \\
        &\lesssim \mathbb{P}\left(\left|\mathds{1}(\widehat{\beta} > q_{1 - \delta}) - \mathds{1}(\beta > q_{1 - \delta})\right|\right) \\
        & \leq \mathbb{P}\left[|\beta - q_{1 - \delta}| \leq |\widehat{\beta} - \beta|\right] \\
        & \leq \mathbb{P}\left[|\beta - q_{1 - \delta}| \leq t\right]  + \mathbb{P}\left[  |\widehat{\beta} - \beta| > t\right] \\
        & \lesssim t^b + \frac{\lVert \widehat{\beta} - \beta\rVert}{t}
    \end{align*}
    where the second line follows from boundedness of $\phi(O; \mathbb{P})$, the third by Lemma~\ref{lemma:diff}, the fourth by logic, and the fifth by Assumption~\ref{ass:margin} and Markov's inequality. Thus, the norm on the left-hand side is $o_{\mathbb{P}}(1)$ by consistency of $\widehat{\beta}$ and Lemma~\ref{lemma:convbound}, so that our first term is $o_{\mathbb{P}}(n^{-1/2})$ by Lemma~\ref{lemma:emp-process}.

    % For the second term, define $\widehat{f}(O^n) = \mathbb{P}_n\left[\left\{\mathds{1}(\widehat{\beta} > \widehat{q}_{1 - \delta}) - \mathds{1}(\widehat{\beta} > q_{1 - \delta})\right\}\phi(O; \mathbb{P})\right]$. Note that $\widehat{f}$ satisfies a bounded difference property: for $O_{\backslash j}^n$ differing from $O^n$ in the $j$-th observation, $|\widehat{f}(O^n) - \widehat{f}(O_{\backslash j}^n)| \leq \frac{2B}{n}$, where $B = \lVert \phi(O;\mathbb{P})\rVert_{\infty} < \infty$ (finite by assumptions of Theorem~\ref{thm:est-value}). Thus,
    % \begin{align*}
    %     & \mathbb{P}\left[\sqrt{n}\left(\mathbb{P}_n - \mathbb{P}\right)\left(\left\{\mathds{1}(\widehat{\beta} > \widehat{q}_{1 - \delta}) - \mathds{1}(\widehat{\beta} > q_{1 - \delta})\right\}\phi(O; \mathbb{P})\right) \geq t\right] \\
    %     &= \mathbb{E}\left(\mathbb{P}\left[\widehat{f}(O^n) - \mathbb{E}(\widehat{f}(O^n) \mid D^n) \geq \frac{t}{\sqrt{n}} \, \bigg| \, D^n\right]\right) \\
    %     & \leq 2 \exp{\left\{\frac{- 2t^2 / n}{4B^2 / n}\right\}} \\
    %     &= \exp\left\{\right\} \to 0
    % \end{align*}
    % which means that our second term is $o_{\mathbb{P}}(n^{-1/2})$.

    For the second term, note that the function class $\left\{\mathds{1}(\overline{\beta}(\, \cdot \,) > b) : b \in \mathbb{R}\right\}$
    is Donsker (in $b$) for any fixed function $\overline{\beta}: \mathcal{X} \to \mathbb{R}$ (see Example 2.5.4 in \citet{vandervaart1996}). Since $\phi(O; \mathbb{P})$ is uniformly bounded, the class $\left\{\mathds{1}(\overline{\beta}(\, \cdot\, ) > b) \phi(\, \cdot \,; \mathbb{P}) : b \in \mathbb{R}\right\}$ is also Donsker (see Example 2.10.10 in \citet{vandervaart1996}). Working similarly as above, we see that for any $t > 0$,
    \begin{align*}
        & \left\lVert \left\{\mathds{1}(\widehat{\beta} > \widehat{q}_{1 - \delta}) - \mathds{1}(\widehat{\beta} > q_{1 - \delta})\right\}\phi(O; \mathbb{P})\right\rVert^2 \\
        &= \mathbb{P}\left(\left\{\mathds{1}(\widehat{\beta} > \widehat{q}_{1 - \delta}) - \mathds{1}(\widehat{\beta} > q_{1 - \delta})\right\}^2\phi(O; \mathbb{P})^2\right) \\
        & \lesssim \mathbb{P}\left(\left|\mathds{1}(\widehat{\beta} > \widehat{q}_{1 - \delta}) - \mathds{1}(\widehat{\beta} > q_{1 - \delta})\right|\right) \\
        & \leq \mathbb{P}\left(|\widehat{\beta} - q_{1 - \delta}| \leq |\widehat{q}_{1 - \delta} - q_{1 - \delta}|\right) \\
        & \leq \mathbb{P}\left(|\beta - q_{1 - \delta}| \leq |\widehat{q}_{1 - \delta} - q_{1 - \delta}| + |\widehat{\beta} - \beta|\right) \\
        & \leq \mathbb{P}\left(|\beta - q_{1 - \delta}| \leq t\right) + \mathbb{P}\left(|\widehat{q}_{1 - \delta} - q_{1 - \delta}| + |\widehat{\beta} - \beta| > t\right) \\
        & \lesssim t^b + \frac{|\widehat{q}_{1 - \delta} - q_{1 - \delta}| + \lVert\widehat{\beta} - \beta\rVert}{t},
    \end{align*}
    where the second line follows from boundedness of $\phi(O; \mathbb{P})$, the third by Lemma~\ref{lemma:diff}, the fourth by the triangle inequality (i.e., $|\beta - q_{1 - \delta}| \leq |\widehat{\beta} - q_{1 - \delta}| + |\widehat{\beta} - \beta|$), the fifth by logic, and the sixth by Assumption~\ref{ass:margin} and Markov's inequality. This shows the norm on the left-hand side is $o_{\mathbb{P}}(1)$ by Lemma~\ref{lemma:convbound}, and consistency of $\widehat{\beta}$ and $\widehat{q}_{1 - \delta}$. Applying Lemma 19.24 of \citet{vandervaart2000} (conditionally on $D^n$) we find that our second term is $o_{\mathbb{P}}(n^{-1/2})$ (conditionally on $D^n$ and hence unconditionally).
\end{proof}

\begin{lemma}\label{lemma:sup-emp-process}
Under the conditions of Theorem~\ref{thm:est-AUPBC},
    \begin{align*}
        &\sup_{\delta \in [0,1]} \left|\widehat{V}_{\delta} - V_{\delta} - (\mathbb{P}_n - \mathbb{P})(\Delta_{\delta}^*\{\phi(O; \mathbb{P}) - q_{1 - \delta}\} + Y)\right| \\
        & = O_{\mathbb{P}}(R_{1,n} + R_{2,n} + R_{3,n}^*) + o_{\mathbb{P}}(n^{-1/2}).
    \end{align*}
\end{lemma}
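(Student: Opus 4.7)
The plan is to upgrade every step in the proof of Theorem~\ref{thm:est-value} to hold uniformly in $\delta \in [0,1]$. Specifically, I would start from the same decomposition used there: writing $E_\delta = \widehat V_\delta - V_\delta - (\mathbb{P}_n - \mathbb{P})(\Delta_\delta^*\{\phi(O;\mathbb{P}) - q_{1-\delta}\} + Y)$ and using $\mathbb{P}_n - \mathbb{P}$ applied to $\widehat\Delta_\delta^*\phi(O;\widehat{\mathbb{P}}) - \Delta_\delta^*\phi(O;\mathbb{P})$ plus the bias $\mathbb{P}[\widehat\Delta_\delta^*\phi(O;\widehat{\mathbb{P}}) - \Delta_\delta^*\beta]$ and the identity $(\mathbb{P}_n-\mathbb{P})[\Delta_\delta^*\phi(O;\mathbb{P}) + Y] = (\mathbb{P}_n-\mathbb{P})[\Delta_\delta^*\{\phi(O;\mathbb{P}) - q_{1-\delta}\} + Y] + q_{1-\delta}(\mathbb{P}_n-\mathbb{P})\Delta_\delta^*$, one obtains
\[
E_\delta = q_{1-\delta}(\mathbb{P}_n - \mathbb{P})\Delta_\delta^* + (\mathbb{P}_n-\mathbb{P})\bigl[\widehat\Delta_\delta^*\phi(O;\widehat{\mathbb{P}}) - \Delta_\delta^*\phi(O;\mathbb{P})\bigr] + \mathbb{P}\bigl[\widehat\Delta_\delta^*\phi(O;\widehat{\mathbb{P}}) - \Delta_\delta^*\beta\bigr].
\]
The goal is then to show each of the three pieces is $O_{\mathbb{P}}(R_{1,n} + R_{2,n} + R_{3,n}^*) + o_{\mathbb{P}}(n^{-1/2})$ uniformly in $\delta$.

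For the bias term, I would apply Lemma~\ref{lemma:pseudo-bias} to bound $\mathbb{P}[\widehat\Delta_\delta^*(\phi(O;\widehat{\mathbb{P}})-\beta)]$ by $R_{1,n} + R_{2,n}$ uniformly in $\delta$ (since $|\widehat\Delta_\delta^*|\leq 1$ and the bias decomposition does not depend on $\delta$), and split the remaining piece as $\mathbb{P}[(\widehat\Delta_\delta^* - \Delta_\delta^*)(\beta - q_{1-\delta})] + q_{1-\delta}\mathbb{P}(\widehat\Delta_\delta^* - \Delta_\delta^*)$; the former is controlled by the margin condition exactly as in Theorem~\ref{thm:est-value}, but now using the supremum version $R_{3,n}^*$, yielding a bound uniform in $\delta$ because the margin constant $b$ is common. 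The latter will be combined with $q_{1-\delta}(\mathbb{P}_n-\mathbb{P})\Delta_\delta^*$: using the assumption $\sup_\delta|\mathbb{P}_n[\widehat\Delta_\delta^*]-\delta| = o_{\mathbb{P}}(n^{-1/2})$, I rewrite
\[
q_{1-\delta}(\mathbb{P}_n-\mathbb{P})\Delta_\delta^* + q_{1-\delta}\mathbb{P}(\widehat\Delta_\delta^* - \Delta_\delta^*) = q_{1-\delta}\bigl(\mathbb{P}_n[\widehat\Delta_\delta^*] - \delta\bigr) - q_{1-\delta}(\mathbb{P}_n-\mathbb{P})(\widehat\Delta_\delta^* - \Delta_\delta^*),
\]
whose first summand is uniformly $o_{\mathbb{P}}(n^{-1/2})$ since $q_{1-\delta}$ is bounded (by boundedness of $Y$, hence of $\beta$).

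The main obstacle is the empirical process terms — handling $(\mathbb{P}_n - \mathbb{P})(\widehat\Delta_\delta^*\phi(O;\widehat{\mathbb{P}}) - \Delta_\delta^*\phi(O;\mathbb{P}))$ and $(\mathbb{P}_n - \mathbb{P})(\widehat\Delta_\delta^* - \Delta_\delta^*)$ uniformly in $\delta$. Here I would combine the sample-splitting trick used in Lemma~\ref{lemma:quantile-emp-process} with Donsker-class arguments: the classes $\{\mathds{1}(\beta > q_{1-\delta}): \delta \in [0,1]\}$ and $\{\mathds{1}(\widehat\beta > t) : t \in \mathbb{R}\}$ (conditional on $D^n$) are VC-subgraph (indicators of superlevel sets of a single function) and hence Donsker; multiplying by the bounded $\phi(O;\mathbb{P})$ (or $\phi(O;\widehat{\mathbb{P}})$, which is uniformly bounded under the $\widehat\pi\in[\epsilon,1-\epsilon]$, $|\widehat\mu_a|\leq M$, $|Y|\leq M$ conditions) preserves the Donsker property. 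Splitting further as in Lemma~\ref{lemma:quantile-emp-process} into $\{\widehat\beta > q_{1-\delta}\} - \{\beta > q_{1-\delta}\}$ (nuisance consistency, handled via sample splitting conditional on $D^n$ and Lemma~\ref{lemma:emp-process}) plus $\{\widehat\beta > \widehat q_{1-\delta}\} - \{\widehat\beta > q_{1-\delta}\}$ (a random threshold within a Donsker class), I would invoke the uniform version of Lemma 19.24 of \citet{vandervaart2000}, using the supremum assumption $\sup_\delta|\widehat\Delta_\delta^*\{\phi(O;\widehat{\mathbb{P}}) - q_{1-\delta}\} - \Delta_\delta^*\{\phi(O;\mathbb{P}) - q_{1-\delta}\}| = o_{\mathbb{P}}(1)$ to ensure the relevant $L_2(\mathbb{P})$ envelopes are uniformly $o_{\mathbb{P}}(1)$. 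Putting together the three uniform bounds yields the stated result.
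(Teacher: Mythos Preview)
Your decomposition and the handling of the bias pieces are correct and essentially match the paper's argument: Lemma~\ref{lemma:pseudo-bias} plus the margin condition give the $R_{1,n}+R_{2,n}$ bound uniformly in $\delta$ (since $|\widehat\Delta_\delta^*|\le 1$), the $(\beta-q_{1-\delta})(\widehat\Delta_\delta^*-\Delta_\delta^*)$ piece gives $R_{3,n}^*$, and the $q_{1-\delta}$ terms are absorbed via the uniform assumption $\sup_\delta|\mathbb{P}_n[\widehat\Delta_\delta^*]-\delta|=o_{\mathbb{P}}(n^{-1/2})$.

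The gap is in the empirical process step. Lemma~\ref{lemma:emp-process} is a conditional Chebyshev bound that controls $(\mathbb{P}_n-\mathbb{P})(\widehat f-f)$ for a \emph{single} fixed $\widehat f$; it does not give you $\sup_{\delta}|(\mathbb{P}_n-\mathbb{P})(\widehat f_\delta-f_\delta)|=o_{\mathbb{P}}(n^{-1/2})$ for a $\delta$-indexed family, even after conditioning on $D^n$. Likewise, Lemma~19.24 of \citet{vandervaart2000} handles one random index $\widehat\theta\to\theta_0$, not a continuum of them; there is no ``uniform version'' to invoke off the shelf. What you actually need is either asymptotic equicontinuity of the Donsker class together with $\sup_\delta\|\cdot\|_{L_2}\to 0$, or a maximal inequality with a shrinking envelope.

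The paper does the latter, and it is cleaner than splitting as in Lemma~\ref{lemma:quantile-emp-process}. It keeps the empirical process term in the centered form
\[
(\mathbb{P}_n-\mathbb{P})\big(\widehat\Delta_\delta^*\{\phi(O;\widehat{\mathbb{P}})-q_{1-\delta}\}-\Delta_\delta^*\{\phi(O;\mathbb{P})-q_{1-\delta}\}\big),
\]
observes (conditionally on $D^n$) that the class $\mathcal{F}_n=\{\widehat\Delta_\delta^*\{\phi(O;\widehat{\mathbb{P}})-q_{1-\delta}\}-\Delta_\delta^*\{\phi(O;\mathbb{P})-q_{1-\delta}\}:\delta\in[0,1]\}$ has log-bracketing number $\lesssim 1/\gamma$ (monotone indicator classes times bounded functions, combined via Lemma~9.24 of \citet{kosorok2008}), and then applies Theorem~2.14.2 of \citet{vandervaart1996} with envelope
\[
F_n=\sup_{\delta\in[0,1]}\big|\widehat\Delta_\delta^*\{\phi(O;\widehat{\mathbb{P}})-q_{1-\delta}\}-\Delta_\delta^*\{\phi(O;\mathbb{P})-q_{1-\delta}\}\big|.
\]
This envelope is exactly the quantity assumed to be $o_{\mathbb{P}}(1)$ in Theorem~\ref{thm:est-AUPBC}, and the resulting entropy integral bound $\|F_n\|+2\|F_n\|^{1/2}$ then goes to zero, giving $\sup_\delta$ of the empirical process term $=o_{\mathbb{P}}(n^{-1/2})$ in one stroke. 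This is how the assumption on $F_n$ is actually used; your proposal references it only to control ``$L_2$ envelopes,'' but never connects it to a maximal inequality that delivers the uniform rate.
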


\begin{proof}
Our proof closely follows that of Theorem 2 in~\citet{bonvini2022}. To begin, see that for any $\delta \in [0,1]$,
    \begin{align*}
        & \widehat{V}_{\delta} - V_{\delta} - (\mathbb{P}_n - \mathbb{P})(\Delta_{\delta}^*\{\phi(O; \mathbb{P}) - q_{1 - \delta}\} + Y) \\
        &= \mathbb{P}_n\left[\widehat{\Delta}_{\delta}^* \phi(O; \widehat{\mathbb{P}})\right] - \mathbb{P}(\Delta_{\delta}^* \phi(O;\mathbb{P})) - (\mathbb{P}_n - \mathbb{P})(\Delta_{\delta}^*\{\phi(O; \mathbb{P}) - q_{1 - \delta}\}) \\
        &= \left(\mathbb{P}_n - \mathbb{P}\right)\left(\widehat{\Delta}_{\delta}^* \phi(O; \widehat{\mathbb{P}}) - \Delta_{\delta}^* \phi(O;\mathbb{P})\right) + \mathbb{P}\left(\widehat{\Delta}_{\delta}^* \phi(O; \widehat{\mathbb{P}}) - \Delta_{\delta}^* \phi(O;\mathbb{P})\right) + \left(\mathbb{P}_n - \mathbb{P}\right)\left(q_{1 - \delta}\Delta_{\delta}^*\right).
    \end{align*}
    Next, see that the middle term can be decomposed via
    \begin{align*}
&\mathbb{P}\left(\widehat{\Delta}_{\delta}^* \phi(O; \widehat{\mathbb{P}}) - \Delta_{\delta}^* \phi(O;\mathbb{P})\right) \\
&= \mathbb{P}\left(\widehat{\Delta}_{\delta}^* \{\phi(O; \widehat{\mathbb{P}}) - \phi(O;\mathbb{P})\} + (\phi(O; \mathbb{P}) - q_{1 - \delta})\{\widehat{\Delta}_{\delta}^* - \Delta_{\delta}^*\} + q_{1 - \delta}\{\widehat{\Delta}_{\delta}^* - \Delta_{\delta}^*\}\right),
    \end{align*}
    and that by construction of $\widehat{q}_{1 - \delta}$,
\begin{align*}
    o_{\mathbb{P}}(n^{-1/2}) &=
\mathbb{P}_n(\widehat{\Delta}_{\delta}^*) - \mathbb{P}(\Delta_{\delta}^*)
= (\mathbb{P}_n - \mathbb{P})(\widehat{\Delta}_{\delta}^* - \Delta_{\delta}^*) + \mathbb{P}(\widehat{\Delta}_{\delta}^* - \Delta_{\delta}^*) + (\mathbb{P}_n - \mathbb{P})(\Delta_{\delta}^*).
\end{align*}
Hence,
\[(\mathbb{P}_n - \mathbb{P})\left(q_{1 - \delta}\{\widehat{\Delta}_{\delta}^* - \Delta_{\delta}^*\}\right) = - q_{1 - \delta}(\mathbb{P}_n - \mathbb{P})(\widehat{\Delta}_{\delta}^* - \Delta_{\delta}^*) - (\mathbb{P}_n - \mathbb{P})(q_{1 - \delta}\Delta_{\delta}^*) + o_{\mathbb{P}}(n^{-1/2}),\]
so that our original decomposition becomes
\begin{align*}
    &\widehat{V}_{\delta} - V_{\delta} - (\mathbb{P}_n - \mathbb{P})(\Delta_{\delta}^*\{\phi(O; \mathbb{P}) - q_{1 - \delta}\} + Y) \\
    &= \left(\mathbb{P}_n - \mathbb{P}\right)\left(\widehat{\Delta}_{\delta}^* \{\phi(O; \widehat{\mathbb{P}}) - q_{1 - \delta}\} - \Delta_{\delta}^* \{\phi(O;\mathbb{P}) - q_{1 - \delta}\}\right) \\
    & \quad \quad + \mathbb{P}\left(\widehat{\Delta}_{\delta}^* \{\phi(O; \widehat{\mathbb{P}}) - \phi(O;\mathbb{P})\} + (\phi(O; \mathbb{P}) - q_{1 - \delta})\{\widehat{\Delta}_{\delta}^* - \Delta_{\delta}^*\}\right) + o_{\mathbb{P}}(n^{-1/2}).
\end{align*}
By our assumption that $\sup_{\delta \in [0,1]}\left|\mathbb{P}_n[\widehat{\Delta}_{\delta}^*] - \delta\right| = o_{\mathbb{P}}(n^{-1/2})$, the third term (i.e., the $o_{\mathbb{P}}(n^{-1/2})$ term) remains negligible after taking a supremum over $\delta \in [0,1]$. Thus, it suffices to show that $\mathcal{E}_n = o_{\mathbb{P}}(n^{-1/2})$ and $\mathcal{B}_n = O_{\mathbb{P}}(R_{1,n} + R_{2,n} + R_{3,n}^*)$, where
\[\mathcal{E}_n = \sup_{\delta \in [0,1]} \left|\left(\mathbb{P}_n - \mathbb{P}\right)\left(\widehat{\Delta}_{\delta}^* \{\phi(O; \widehat{\mathbb{P}}) - q_{1 - \delta}\} - \Delta_{\delta}^* \{\phi(O;\mathbb{P}) - q_{1 - \delta}\}\right)\right|,\]
and
\[\mathcal{B}_n = \sup_{\delta \in [0,1]} \left|\mathbb{P}\left(\widehat{\Delta}_{\delta}^* \{\phi(O; \widehat{\mathbb{P}}) - \phi(O;\mathbb{P})\} + (\phi(O; \mathbb{P}) - q_{1 - \delta})\{\widehat{\Delta}_{\delta}^* - \Delta_{\delta}^*\}\right)\right|.\]

We begin with bounding $\mathcal{E}_n$. To proceed, we will use powerful tools from empirical process theory~\citep{vandervaart1996, vandervaart2000, kosorok2008}. Define the function classes $\mathcal{F}_1 = \{\Delta_{\delta}^* : \delta \in [0,1]\}$ and $\mathcal{F}_2 = \{\phi(O; \mathbb{P}) - q_{1 - \delta} : \delta \in [0,1]\}$, and notice that both classes are uniformly bounded under the assumptions of Theorem~\ref{thm:est-AUPBC}.  The class $\mathcal{F}_1$ is contained in the class of all uniformly bounded, monotone functions, so that by Theorem 2.7.5 in \citet{vandervaart1996}, $\log{N_{[\,]}(\gamma , \mathcal{F}_1, L_2(\mathbb{P}))} \lesssim \frac{1}{\gamma}$, for all $\gamma > 0$. The class $\mathcal{F}_2$ also contains bounded monotone functions $-q_{1 - \delta}$ (plus the single, uniformly bounded function $\phi(O; \mathbb{P})$), so the same bracketing number bound holds. Analogous statements can be made for the classes $\widetilde{\mathcal{F}}_{1,n} = \{\widehat{\Delta}_{\delta}^* : \delta \in [0,1]\}$ and $\widetilde{\mathcal{F}}_{2,n} = \{\phi(O; \widehat{\mathbb{P}}) - q_{1 - \delta} : \delta \in [0,1]\}$, noting again that these are both uniformly bounded under our assumptions. We can apply Lemma 9.24 in \citet{kosorok2008} to find that $\frac{1}{\gamma}$ remains a valid bound (up to constants) on the log-bracketing number, even when combining the function classes through set products and sums. In particular, using the fact that these classes are uniformly bounded, we can conclude that $\log{N_{[\,]}(\gamma , \mathcal{F}_n, L_2(\mathbb{P}))} \lesssim \frac{1}{\gamma}$, for any $\gamma > 0$, where $\mathcal{F}_n = \mathcal{F}_1 \cdot \mathcal{F}_2 - \widetilde{\mathcal{F}}_{1,n} \cdot\widetilde{\mathcal{F}}_{2,n}$.

Now, observe that $\mathcal{E}_n \leq \sup_{f \in \mathcal{F}_n} \left|(\mathbb{P}_n - \mathbb{P})(f)\right|$. Conditioning on the training data $D^n$ (so as to view $(\widehat{\pi}, \widehat{\mu}_0, \widehat{\mu}_1)$ as fixed), and applying Theorem 2.14.2 in \citep{vandervaart1996}, we find that
\[\mathbb{E}\left(\sup_{f \in \mathcal{F}_n} \left|(\mathbb{P}_n - \mathbb{P})(f)\right| \,\middle| \, D^n\right) \lesssim \frac{\lVert F_n\rVert}{\sqrt{n}} \int_0^1 \sqrt{1 + \log{N_{[\,]}(\gamma\lVert F_n\rVert , \mathcal{F}_n, L_2(\mathbb{P}))}} \, d\gamma,\]
where $F_n$ is an envelope for $\mathcal{F}_n$---we will take
\[F_n = \sup_{\delta \in [0,1]} \left|\widehat{\Delta}_{\delta}^* \{\phi(O; \widehat{\mathbb{P}}) - q_{1 - \delta}\} - \Delta_{\delta}^* \{\phi(O; \mathbb{P}) - q_{1 - \delta}\}\right|,\]
which we have assumed is $o_{\mathbb{P}}(1)$. By our bracketing number estimate above, we have
\begin{align*}
    \lVert F_n\rVert \int_0^1 \sqrt{1 + \log{N_{[\,]}(\gamma\lVert F_n\rVert , \mathcal{F}_n, L_2(\mathbb{P}))}} \, d\gamma
    & \lesssim \lVert F_n\rVert \int_0^1 \sqrt{1 + \frac{1}{\gamma \lVert F_n\rVert}} \, d\gamma\\
    & \leq \lVert F_n\rVert + \lVert F_n\rVert^{1/2} \int_0^1 \gamma^{-1/2} \, d\gamma \\
    & = \lVert F_n\rVert + 2\lVert F_n\rVert^{1/2} ,
\end{align*}
where the second inequality used the fact that $\sqrt{a + b} \leq \sqrt{a} + \sqrt{b}$, for all $a, b \geq 0$. 
% It follows that
% \[\sqrt{n} \mathbb{E}\left(\sup_{f \in \mathcal{F}_n} \left|(\mathbb{P}_n - \mathbb{P})(f)\right| \,\middle| \, D^n\right) \overset{\mathbb{P}}{\to} 0,\]
Since the sequence $\lVert F_n\rVert + 2\lVert F_n\rVert^{1/2}$ is dominated (and thus uniformly integrable), it converges in $L_1(\mathbb{P})$ to 0, so that $\sqrt{n} \mathbb{E}\left(\sup_{f \in \mathcal{F}_n} \left|(\mathbb{P}_n - \mathbb{P})(f)\right|\right) \to 0$. By Markov's inequality, we can conclude that $\sup_{f \in \mathcal{F}_n} \left|(\mathbb{P}_n - \mathbb{P})(f)\right| = o_{\mathbb{P}}(n^{-1/2})$, and thus $\mathcal{E}_n = o_{\mathbb{P}}(n^{-1/2})$.

Finally, we consider bounding $\mathcal{B}_n$. As in the proof of Theorem~\ref{thm:est-value}, by the product bias in Lemma~\ref{lemma:pseudo-bias}, as well as Lemma~\ref{lemma:diff} and Assumption~\ref{ass:margin},
  \begin{align*}
\sup_{\delta \in [0,1]} \left|\mathbb{P}\left(\widehat{\Delta}_{\delta}^*(X)(\phi(O;\widehat{{\mathbb{P}}}) - \phi(O;\mathbb{P}))\right)\right| & \leq \mathbb{P}\left[ \left|\mathbb{P}\left(\phi(O;\widehat{{\mathbb{P}}}) - \phi(O;\mathbb{P})\mid X\right)\right|\right] \\
      & \lesssim R_{1,n} + \mathbb{P}\left(\big|\widehat{h}^* - h^*\big||\tau|\right) \\
      & \lesssim R_{1,n} + R_{2,n},
  \end{align*}
  as $\widehat{\Delta}_{\delta}^* \leq 1$. We also showed in the proof of Theorem~\ref{thm:est-value}, by Lemma~\ref{lemma:diff} and Assumption~\ref{ass:margin}, that
  \begin{align*}
\left|\mathbb{P}\left((\widehat{\Delta}_{\delta}^* - \Delta_{\delta}^*)\{\phi(O; \mathbb{P}) - q_{1 - \delta}\}\right)\right|
      \lesssim \left(\lVert \widehat{\beta} - \beta\rVert_{\infty} + |\widehat{q}_{1 - \delta} - q_{1 - \delta}|\right)^{1 + b},
  \end{align*}
  which implies that
  \[\sup_{\delta \in [0,1]} \left|\mathbb{P}\left((\phi(O; \mathbb{P}) - q_{1 - \delta})\{\widehat{\Delta}_{\delta}^* - \Delta_{\delta}^*\}\right)\right| \lesssim R_{3,n}^*,\]
  so that $\mathcal{B}_n = O_{\mathbb{P}}(R_{1,n} + R_{2,n} + R_{3,n}^*)$.

\end{proof}

%%%%%%%%%%%%
%%%%%%%%%%%%
%%%%%%%%%%%%
%%%%%%%%%%%%
%%%%%%%%%%%%

\end{appendices}

\end{document}